\newcommand{\qq}{\mbox{\boldmath $q$}}
\newcommand{\pp}{\mbox{\boldmath $p$}}
\renewcommand{\mathbf}[1]{\bm{#1}}
\newcommand{\rr}[1]{\textcolor{black}{#1}}
\newtheorem{theorem}{Theorem}[section]
\newtheorem{claim}[theorem]{Claim}
\newtheorem{lemma}[theorem]{Lemma}
\newtheorem{corollary}[theorem]{Corollary}
\theoremstyle{definition}
\newtheorem{definition}{Definition}[section]
\renewcommand{\Pr}[2][]{\ensuremath{\mathbb{P}_{#1}\insq{#2}}}
\newcommand{\insq}[1]{\left[#1\right]}
\newcommand*{\defeq}{\mathrel{\rlap{%
                     \raisebox{0.3ex}{$\m@th\cdot$}}%
                     \raisebox{-0.3ex}{$\m@th\cdot$}}%
                    =}
\newcommand*{\eqdef}{=
  \mathrel{\rlap{%
      \raisebox{0.3ex}{$\m@th\cdot$}}%
    \raisebox{-0.3ex}{$\m@th\cdot$}}%
}
\newcommand{\nfrac}[3][]{\nicefrac[#1]{#2}{#3}}
\newcommand{\etal}[0]{\emph{et al. }}
\renewcommand{\vec}[1]{\mathbf{#1}}
\newcommand{\xx}[0]{{\mathbf{x}}}
\newcommand{\yy}[0]{{\mathbf{y}}}
\newcommand{\ddelta}[0]{{\mathbf{\delta}}}
\newcommand{\CE}{\mbox{$\mathcal E$}}
\newcommand{\supp}{\mbox{Supp}}
\DeclareMathOperator*{\argmax}{arg\,max}
\begin{document}
%\vspace{-80pt}
%\title{Selection, Recombination, and Mutation in Dynamic Environments}

\title{Mutation, Sexual Reproduction and Survival in Dynamic Environments}

\author{Ruta Mehta \\University of Illinois at Urbana-Champaign \\\texttt{ruta.mehta@gmail.com}
\and Ioannis Panageas \\ Georgia Institute of Technology \\\texttt{ioannis@gatech.edu}
\and Georgios Piliouras \\ Singapore University of Technology and Design \\\texttt{georgios.piliouras@gmail.com}
\and Prasad Tetali \\ Georgia Institute of Technology \\\texttt{tetali@math.gatech.edu}
\and Vijay V. Vazirani \\ Georgia Institute of Technology \\\texttt{vazirani@cc.gatech.edu}}

\date{}
\maketitle

\begin{abstract}
A new approach to understanding evolution \cite{DBLP:journals/jacm/Valiant09}, namely viewing it through the lens of computation,
has already started yielding new insights, e.g., natural selection under sexual reproduction can be interpreted 
as the {\em Multiplicative Weight Update} (MWU) Algorithm in coordination games played among genes \cite{PNAS2:Chastain16062014}. Using this machinery, we study the role of mutation in changing environments in the presence of sexual reproduction.
Following \cite{wolf2005diversity}, we model changing environments via a Markov chain, with the states representing environments,
each with its own fitness matrix. In this setting, we show that in the absence of
mutation, the population goes extinct, but in the presence of mutation, the population survives with positive probability. 

On the way to proving the above theorem, we need to establish some facts about dynamics in games. We provide the first, 
to our knowledge, polynomial convergence bound for noisy MWU in a coordination game.
Finally, we also show that in static environments, sexual evolution with mutation converges, for any level of mutation. 
\end{abstract}

\begin{figure}[hbtp]
   \centering
   \vspace{-0.95cm}
\includegraphics[width=75mm]{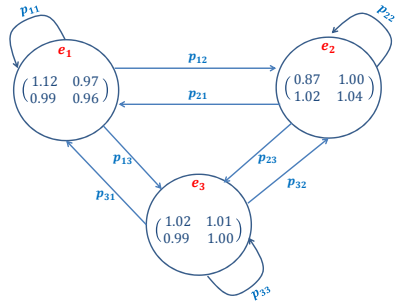}
   \vspace{-1cm}
\caption{An example of a Markov Chain model of fitness landscape evolution.}
\label{fig1}
\end{figure}

\thispagestyle{empty}
\newpage
\setcounter{page}{1}

\pagenumbering{arabic}

\section{Introduction}
Evolution has been the subject of intensive scientific investigation since the early 19th century,
 %, starting with the works of Lamarck and Darwin, 
yet many of its critical elements still remain under debate. 
A new, potent approach to studying evolution was initiated by Valiant \cite{DBLP:journals/jacm/Valiant09}, namely viewing it through the lens of computation.
This viewpoint has already started yielding concrete insights by translating qualitative hypotheses in biological systems to provable computational properties of Markov chains and other dynamical systems \cite{VishnoiOpen,Vishnoi:2013:MER:2422436.2422445, evolfocs14, PNAS2:Chastain16062014, ITCS15MPP, Meir15, PSV15},  which are standard hallmarks of TCS research.
We build on this direction whilst focusing on the challenge of evolving environments. 

%The role of sexual reproduction in evolution is one of the central questions that is yet to be fully understood.
Recent work due to Chastain, Livnat, Papadimitriou and Vazirani \cite{PNAS2:Chastain16062014} linked natural selection under sexual
reproduction ({\em sexual evolution}) to a
tangential field, namely dynamics in games. Building on the work of Nagylaki \cite{Nagylaki1}, they showed that this process can be interpreted
as the {\em Multiplicative Weight Update} (MWU) Algorithm \cite{Arora05themultiplicative}, which we call discrete replicator dynamics, in coordination games played among genes.  This connection opens up doors for applying tools from game theory and dynamical systems to understanding these fundamental processes.  Another important question is the
role of mutation, especially in the presence of changes to the environment. In the case of asexual reproduction, this was studied by
Wolf, Vazirani, and Arkin \cite{wolf2005diversity}. They modeled a changing environment via a Markov chain and described a model in which in the absence of
mutation, the population goes extinct, but in the presence of mutation, the population survives with positive probability. 

In this paper we study the next natural question, namely the role of mutation in the presence of sexual reproduction. The Chastain \etal 
result shows that MWU (sex) tries to maximize fitness as well as entropy via the process of recombination in genes. The question 
arises whether this is enough to safeguard against extinction in a changing environment, or is mutation still needed.

As in Chastain \etal \cite{PNAS2:Chastain16062014}, we will consider a haploid organism with two genes. 
Each gene can be viewed as a player in a game and the alleles of 
each gene represent strategies of that player. %Both players have the same payoff matrix, i.e., the game is a coordination game.
Once an allele is decided for each gene, an individual is defined, and its fitness is the payoff to each of the players, i.e.,
both players have the same payoff matrix, and therefore it is a coordination/partnership game. Each state of the
Markov chain represents an environment and has its own fitness matrix. 
We show the following under this model, where mutations are captured through a standard model appeared in \cite{Hofbauer98}:
%To understand the question of survival under this model w/o mutations, we show the following:

\medskip
\noindent
{\bf Informal Theorem 1:} For a class of Markov chains (satisfying mild conditions), a haploid species
under {\em sexual evolution}\footnote{We refer to `evolution by natural selection under sexual reproduction' by {\em sexual evolution}
for brevity.} to without mutation dies out with probability one. In contrast, under sexual evolution \rm{with mutation} the
probability of long term survival is strictly positive.
\medskip

For each gene, if we think of its allele frequencies in a given population as defining a 
mixed strategy, then after reproduction, the frequencies change as per MWU \cite{PNAS2:Chastain16062014}. Furthermore, in the presence of 
mutation \cite{Hofbauer98}, every allele mutates to another allele of the corresponding gene in a small fraction of offsprings. 
As it turns out, in every generation the population size (of the species) changes by a multiplicative factor of the current expected payoff (mean
fitness). Hence, in order to prove Theorem 1, we need to analyze MWU (and its variant which captures mutations) in a time-evolving
coordination game whose matrix is changing as per a Markov chain.
%The resulting dynamical system is non-linear and its parameters evolve according to a Markov chain.

The idea behind the first part of the theorem is as follows:
It is known that MWU converges, in the limit, to
a pure equilibrium in coordination games \cite{ITCS15MPP}. This implies that in a static environment, in the limit the population will be rendered monomorphic.
Showing such a convergence in a stochastically changing environment is not straightforward.
We first show that such an equilibrium can be reached \rm{fast enough} in a static environment. We then appeal to the
Borel-Cantelli theorem to argue that with probability one, the Markov chain will visit infinitely often and remain sufficiently long in one environment at some
point and hence the population will eventually become monomorphic. An assumption in our theorem is that for each individual, there
are bad environments, i.e., one in which it will go extinct. Eventually the monomorphic population will reach such an unfavorable
environment and will die out.

{\it Polynomial time convergence in static environment:}
For such a reasoning to be applicable we need a fast convergence result, which does not hold in the worst case, since 
%does not hold for pure MWU \cite{}.
by choosing initial conditions sufficiently close to the stable manifold of an unstable equilibrium, we are bound to spend
super-polynomial time near such unstable states. 
%A typical approach in such a situation is to add small noise to help escape saddle
%points \cite{}.
To circumvent this %impossibility 
%Following this, 
we take a typical approach of introducing a small noise into the dynamics \cite{pemantle90,Kleinberg09multiplicativeupdates,ge2015escaping}, %\rr{\footnote{\rr{It is typical to add noise to a dynamical system to help escape %so that the trajectory avoids
%saddle points \cite{pemantle90,GHJY15}.}}} 
and provide the first, to our knowledge, polynomial convergence bound for noisy MWU in
coordination games; this result is of independent interest. We note that pure MWU captures frequency changes of alleles in case of
infinite population, 
%\rr{while in case of finite population the system state (allele frequencies) is subject to small stochastic
%shocks (akin to sampling error) resulting in small noise.} 
and the small noise can also be thought of as sampling error due to finiteness of the population. 
In the following theorem, dependence on all identified system parameters is necessary (see discussion in
Section \ref{sec:full}).  
\medskip

\noindent
{\bf Informal Theorem 2:} In static environments under small random noise ($||.||_{\infty}=\delta$), sexual evolution (without
mutation) converges with probability $1-\epsilon$ to a monomorphic fixed point in time $O\left(\frac{n\log
\frac{n}{\epsilon}}{\gamma^4\delta^6}\right)$, where $n$ is the number of alleles, and $\gamma$ the minimum fitness difference between
two genotypes.
\medskip

Although mutations seem to hurt mean population fitness in the short run in static environments, they are critical for
survival in dynamic environments, as shown in the second part of Theorem 1; it is proved as follows. 
The random exploration done by mutations and aided by the selection process, which
rapidly boosts the frequency of alleles with good mean fitness, helps the population survive. 
Essentially we couple the random variable capturing population size with a biased random walk, with a slight bias towards
increase. The result then follows using a well-known lemma on biased random walks.
%survival of the population.
%\medskip

{\it Robustness to mutations:} Finally we show that the convergence of MWU (without mutation) in static environments \cite{akin,ITCS15MPP}
can be extended to the case where mutations are also present. The former result critically hinges on
the fact that mean fitness strictly increases under MWU in coordination games, and thereby acts as a potential function.  
This is no more the case. However, using an inequality due to Baum and Eagon \cite{BE66} we manage to obtain a new potential function which is
the product of mean fitness and a term capturing diversity of the allele distribution. The latter term is essentially the product of allele
frequencies. 
%The key quantity of interest here is the product of proportions of all alleles in the population, which we call ``mean population
%mixability''\footnote{This quantity has not been identified before and is inspired by the concept of mixability due to Livnat et. al.
%\cite{PNAS1:Livnat16122008}.}.
\medskip
 
\noindent
{\bf Informal Theorem 3:} In static environments, sexual evolution with mutation converges, for any level of mutation. Specifically,
if we are not at equilibrium, at the next time generation at least one of mean population fitness or product of allele frequencies 
will increase. 
%mean population mixability will strictly increase. 
\medskip

Besides adding computational insights to biologically inspired themes, which to some extent may never be fully settled, we believe that our work is of interest even from a purely computational perspective. The nonlinear dynamical systems arising from these models are gradient-like systems of non-convex optimization problems. Their importance and the need to develop a theoretical understanding beyond worst case analysis has been pinpointed as a key challenge for numerous computational disciplines, e.g., from \cite{offconvex}:

\begin{quote}
\textit{``Many procedures in statistics, machine learning and nature at large -- Bayesian inference, deep learning, protein folding -- successfully solve non-convex problems \dots Can we develop a theory to resolve this mismatch between reality and the predictions of worst-case analysis?"}\end{quote}

Our theorems and techniques share this flavor. Theorem $1$ expresses time-average efficiency guarantees for gradient-like heuristics in the case of time-evolving optimization problems, Theorem $2$ argues about speedup effects by adding noise to escape out of saddle points, whereas Theorem $3$ is a step towards arguing about robustness to implementation details. We make this methodological similarities more precise by pointing them out in more detail in the related work section (see discussion in Section \ref{sec:related}). 
\medskip
\medskip

\noindent{\bf Organization of the paper.}
The rest of the paper is organized as follows. In the next section we discuss the relevant literature. 
In Section \ref{sec:prelims} we provide formal description of the model we analyze. 
In Section \ref{sec:overview}, we provide an overview of the proofs of our main theorems.
The proofs of main Theorems 1, 2, 3 and their formal statements appear in Section \ref{sec:survival}, \ref{sec:speed} and
\ref{sec:mutation} respectively.  
%
%The full proofs have been moved to the appendix due to space constraints.
%
The omitted proofs can be found in the Appendix, along with explanation of the biological terms.

\section{Related Work}
\label{sec:related}
%\vspace{-.32cm}

%\textbf{Genetics and Computer Science.}

%\begin{comment}
%\noindent\textbf{Overdominace in biology}
%\end{comment}

%\noindent\textbf{Computer Science and Evolution.}
In the last  few years we have witnessed a rapid cascade of theoretical results on the intersection of computer science
and evolution. %\cite{newlyo}
% A call to arms pointing out several interesting open problems
%Even prior to this call to arms,
Livnat \etal \cite{PNAS1:Livnat16122008} introduced the notion of mixability, the ability of
an allele to combine itself successfully with others. % alleles within a population.
% This proxy, which can be thought
%of as a notion of average utility, was shown to correlate strongly with the rate of change of the proportions of different
%alleles for a specific gene/locus. This interpretation left open the tantalizing possibility of closely coupling standard
%models of haploid evolution with game theoretic dynamics. Indeed, i
In
\cite{ITCS:DBLP:dblp_conf/innovations/ChastainLPV13,PNAS2:Chastain16062014}  connections were established between sexual evolution and
 dynamics in coordination games. Meir and Parkes~\cite{Meir15} has provided a more detailed examination of these connections.
These dynamics are close variants of the standard (discrete) replicator dynamics~\cite{Hofbauer98}.
Replicator dynamics is closely connected to the multiplicative weights update algorithm~\cite{Kleinberg09multiplicativeupdates,2014arXiv1403.3885P}. In~\cite{ITCS15MPP} Mehta \etal established that these systems converge for almost all initial conditions to monomorphic states.
%Analogous game theoretic interpretations are
%known for diploids~\cite{akin}.
It is also possible to introduce connections between satisfiability and evolution~\cite{evolfocs14} as well as understand
the complexity of predicting the survival of diversity in complex species \cite{2014arXiv1411.6322M}.
%Phenotypic features (\textit{e.g.}, large/small wings) can be encoded as Boolean functions and if we start from initial conditions such that a target Boolean function is satisfied with polynomially large
%probability then evolutionary pressures suffice to lead it to  fixation fast.

%In 2012, Vishnoi \cite{VishnoiOpen} communicated a list of open problems and challenges in biology and specifically
%evolution that had direct connections to tools and techniques in computer science.
%Biological mechanisms are prone to errors~\cite{EMS89}.
The {\em error threshold} is the rate of errors in genetic mixing above which genetic information disappears~\cite{Eig93}.
%, has
%been utilized successfully to design drug strategies that attempt to mutate the virus to death~\cite{Eig02,CCA}.
Vishnoi~\cite{Vishnoi:2013:MER:2422436.2422445} showed existence of such sharp thresholds. Moreover, in \cite{PSV15}
Panageas \etal shed light on the speed of asexual evolution (see also \cite{VishnoiSpeed}).
%Specifically, this work analyzes the speed of convergence of systems with
%multiple competing species.  This evolutionary dynamics is stochastic and in the case of two species the mixing time is
%shown to be logarithmic in the size of state space.
Finally, in~\cite{DSV} Dixit \etal present finite
population models for asexual haploid evolution that closely track the standard infinite population model of %of quasispecies models of
Eigen~\cite{Eigen71}. %Due to space limitations we present information related to biology in appendix, see~\ref{termsinbio}.
Wolf, Vazirani, and Arkin \cite{wolf2005diversity} analyzed models of mutation and survival of diversity also for asexual populations
but the dynamical systems in this case are linear and the involved methodologies are rather different.

Introducing noise in non-linear dynamics  have been shown to be able to simplify the analysis of nonlinear dynamical systems by
``destroying'' Turing-completeness of classes of dynamical systems and thus making the system's long-term behavior computationally
predictable \cite{Braverman:2012:NVC:2090236.2090247}. Those techniques focus on establishing invariant measures for the systems of
interest and computing their statistical characteristics. In our case, our unperturbed dynamical systems have exponentially many saddle
points and numerous stable fixed points and species survival is critically dependent on the the amount of time that trajectories spend
in the vicinity of these points thus much stronger topological characterizations  are necessary. Adding noise to game theoretic
dynamics \cite{Kleinberg09multiplicativeupdates,ackermann2009concurrent,chien2007convergence} to speed up convergence to approximate
equilibria in potential games is a commonly used approach in algorithmic game theory, however, the respective proof techniques and
notions of approximation are typically sensitive to the underlying dynamic, the nature of noise added as well as the details of the
class of games.  
 
In the last year there has been a stream of work on understanding how gradient (and more generally gradient-like) systems escape out of
the saddle fixed points fast \cite{ge2015escaping,lee2016gradient}. This is critically important for a number of computer science
applications, including speeding up the training of deep learning networks. The approach pursued by these papers is similar to our
own, including past papers in the line of TCS and biology/game theory literature
\cite{Kleinberg09multiplicativeupdates,2014arXiv1403.3885P,ITCS15MPP}. For example, in \cite{2014arXiv1403.3885P,ITCS15MPP} it has been established that in
non-convex optimization settings gradient-like systems (e.g., variants of Multiplicative Weights Updates Algorithm) converge for all
but a zero measure of initial conditions to local minima of the fitness landscape (instead of saddle points even in the presence of
exponentially many saddle points). Moreover, as shown in  \cite{Kleinberg09multiplicativeupdates} noisy dynamics diverge fast from the
set of saddle points whose Jacobian has eigenvalues with large positive real parts. Similar techniques and arguments can be applied to argue generic convergence to
local minima of numerous other dynamics (including noisy/deterministic versions of gradient dynamics).  Finally, using techniques
developed in \cite{2014arXiv1403.3885P}, one can argue that gradient dynamics converge to local minima with probability one in
non-convex optimization problems even in the presence of continuums of saddle points \cite{PP16}, answering an open question in
\cite{lee2016gradient}. We similarly hope that techniques developed here about fast and robust convergence can also be extended to other classes of gradient(-like) dynamics in non-convex optimization settings.
 
Finite population evolutionary models over time evolving fitness landscapes are typically studied via simulations (e.g., \cite{liekens2005evolution} and references therein). These models have also inspired evolutionary models of computation, e.g., genetic algorithms, whose study under dynamic fitness environments is a well established area with many applications (e.g., \cite{yang2007evolutionary} and references therein) but with little theoretical understanding and even theoretical papers on the subject typically rely on combinations of analytical and experimental results  \cite{Branke2003}.

\section{Preliminaries}
\label{sec:prelims}

\begin{comment}
As described earlier, we study survival of a species under evolutionary pressure of sexual reproduction, w/o
mutation, and environment changes. We show a stark contrast in survival with or w/o mutation in case of {\em two genes}.
We will formalize all these terms in this section based on well studied models of evolutionary dynamics of
natural selection under sexual reproduction due to Nagylaki \cite{Nagylaki1} and its reinterpretation as {\em multiplicative
weight update} in coordination game due to Chastain \etal \cite{PNAS2:Chastain16062014}, mutation (taken from %{\color{red}{\bf who? let's put Hofbauer book}}
\cite{Hofbauer98}), and Markov chain based model of changing environment \cite{wolf2005diversity}.  When put all of these together, it will define a noisy dynamics (due to finite population)
with  stochastically changing parameters.
%In this section we formally set up the dynamical system under study. In particular we define Nagylaki's model of evolution by
%natural selection w/o mutation and its reinterpretation due to Chastain et al., its finite population counter-part (with noise), and a
%markov chain based environment model.
\medskip
\end{comment}

\noindent{\bf Notation:} All vectors are in bold-face letters, and are considered as column vectors. To denote a row vector we use
$\xx^T$. The $i-{th}$ coordinate of $\xx$ is denoted by $x_i$. Let $\Delta_n=\{\xx\in \mathbb R^n\ |\ \xx\ge 0,\ \sum_{i=1}^n x_i=1\}$
be the set of probability distributions on $n$ coordinates. For given matrix $A$ define $A_{\max}, A_{\min}$ the largest, smallest
entry in matrix $A$ respectively and $(A\vec{x})_i \defeq \sum_{j}A_{ij}x_{j}$. 
Define $\supp(\xx)=\{i\ |\ x_i\neq 0\}$.  %\textbf{More to be put here}.

\subsection{Dynamics: Discrete Replicator Dynamics with/without Mutation}\label{subsec:naturalsel}
%Haploid species have single set of chromosomes, unlike diploids (\rm{i.e.,} mamals) who has chromosome pairs. 
For a haploid species\footnote{See Section \ref{asec.bioTerms} in the appendix for a short discussion of all relevant biological terms.} (one with single set of chromosomes, unlike diploids such as humans who have chromosome pairs) with
two genes (coordinates), let $S_1$ and $S_2$ be the set of possible alleles (types) for the first and second gene respectively. Then, an individual of such a
species can be represented by an ordered pair $(i,j)\in S_1 \times S_2$. Let $W_{ij}$ be the fitness
of such an individual capturing its ability to reproduce during a mating. Thus, fitness landscape of such a species can be represented
by matrix $W$ of dimension $n\times n$, where we assume that $n=|S_1|=|S_2|$. 
\medskip

\textbf{Sexual Model without mutation:} 
In every generation, each individual $(i,j)$ mates with another
individual $(i',j')$ picked uniformly at random from the population (can pick itself). The offspring can have any of the
four possible combinations, namely $(i,j), (i,j'),(i',j),(i',j')$, with equal probability. Let $x_i$ be a random variable that denotes
the proportion of the population with allele $i$ in the first coordinate, and similarly $y_j$ be the frequency of the population with allele
$j$ in the second coordinate. After one generation, the expected number of offsprings with allele $i$ in first coordinate is proportional to
$x_i \cdot x_i \cdot (W\vec{y})_i + 2\frac{1}{2}(1-x_i)x_i \cdot (W\vec{y})_i = x_i (W\vec{y})_i$ ($x_i^2$ stands for the
probability both individuals have allele $i$ in the first coordinate - which the offspring will inherit - and $2\frac{1}{2}(1-x_i)x_i$
stands for the probability that exactly one of the individuals has allele $i$ in the first coordinate and the offspring will inherit).
Similarly the expected number of offsprings with allele $j$ for the second coordinate is $y_j (W^T\vec{x})_j$. Hence, if
$\vec{x}',\vec{y}'$ denote the frequencies of the alleles in the population in the next generation (random variables) 
%and $\vec{X},\vec{Y}$ in the current generation, then 
$$E[x'_i|\vec{x},\vec{y}] = \frac{x_i (W\vec{y})_i}{\vec{x}^TW\vec{y}} \textrm{ and
}E[y'_j|\vec{x},\vec{y}] = \frac{y_j (W^T\vec{x})_j}{\vec{x}^TW\vec{y}}.$$

We are interested in analyzing a \textit{deterministic} version of the equations above, which essentially captures an infinite
population model. Thus if frequencies at time $t$ are denoted by $(\vec{x}(t),\vec{y}(t))$, they obey the following 
dynamics governed by the function $g:\Delta \rightarrow \Delta$, where $\Delta = \Delta_n \times \Delta_n$:

\begin{equation}\label{eq.g}
\mbox{Let $(\xx(t+1),\yy(t+1)) = g(\xx(t),\yy(t))$, where}
\begin{array}{c}
\forall i\in S_1,\ x_i(t+1) = x_i(t) \frac{(W\vec{y}(t))_i}{\vec{x}^T(t)W\vec{y}(t)}\\
\forall j\in S_2,\ y_j(t+1) = y_j(t) \frac{(W^T\vec{x}(t))_j}{\vec{x}^T(t)W\vec{y}(t)}.
\end{array}
\end{equation}

It is easy to see that $g$ is well-defined when $W$ is a positive matrix. Chastain \etal \cite{PNAS2:Chastain16062014} gave a game
theoretic interpretation of the deterministic equations (\ref{eq.g}). It can be seen as a repeated two player coordination game (each
gene is a player), the possible alleles for a gene are its pure strategies and both players play according to dynamics (\ref{eq.g}). A
modification of these dynamics has also appeared in models of grammar acquisition \cite{novak2001evolgrammar}, and can be seen as the
discrete analogue of continuous replicator dynamics \cite{akin}. Furthermore, Mehta \etal  \cite{ITCS15MPP} showed that dynamics with
equations (\ref{eq.g}) converges point-wise to a pure fixed point, { i.e.}, where exactly one coordinate is non-zero in both $\xx$
and $\yy$, for all but measure zero of initial conditions in $\Delta$, when $W$ has distinct entries.
\medskip

\noindent
\textbf{Sexual Model with mutation:} Next we extend the dynamics of (\ref{eq.g}) to incorporate mutation.
The mutation model which appears in Hofbauer's book \cite{Hofbauer98}, is a two step process. The first step is governed by
(\ref{eq.g}), and after that in each individual, and for each of its gene, corresponding allele, say $k$, mutates to another allele of the same
gene, say $k'$, with probability $\tau>0$ for all $k'\neq k$. 
%
%among the individuals with allele $k$ in a gene, in $\tau>0$ fraction it mutates to allele $k'$ of the same
%gene. each individual in the population having allele $k$ mutates to type $k'$ with non-zero probability $\tau$ for all $k \neq k'$, in
%both coordinates.
%%{\color{red} \bf We need to show we ended up with the following rule, this is after some calculations, probably appendix}
After a simple calculation (see \ref{app:calc} for calculations) the resulting dynamics turns out to be as follows, where $f$ is a 
$\Delta \rightarrow \Delta$ function:

\begin{equation}\label{eq.fs}
\mbox{Let $(\vec{x}(t+1),\vec{y}(t+1))=f(\vec{x}(t),\vec{y}(t))$, then }
\begin{array}{lcl}
x_i(t+1) & = & (1-n\tau) x_i(t) \frac{(W\vec{y}(t))_i}{\vec{x}(t)^TW\vec{y}(t)}+\tau,\ \forall i\in S_1 \\
y_j(t+1) & = & (1-n\tau) y_j(t) \frac{(\vec{x}(t)^TW)_j}{\vec{x}(t)^TW\vec{y}(t)} + \tau,\ \forall j\in S_2.
\end{array}
\end{equation}

%An inherent assumption in deriving both the above dynamics is that the population density is infinite. In the next section we discuss
%how these dynamics changes in case of finite population model.

\subsection{Our model}\label{subsec:finitepopulation}
In this paper we will analyze a noisy version of (\ref{eq.g}), (\ref{eq.fs}). 
Essentially we add small random noise to non-zero coordinates of $(\vec{x}(t), \vec{y}(t))$ \footnote{This is different from diffusion approximation, noise helps to avoid saddle points essentially.}. 
%\rr{which may also be interepreted as sampling
%error due to finite population}. 
%(we add small random noise to non-zero frequencies of $\vec{x},\vec{y}$) of the
%dynamics with equations (\ref{eq.g}), (\ref{eq.fs}). The noise is essentially uniform on the hypercube.

\begin{definition}\label{def:noise}
%Given $(\vec{x},\vec{y}) \in \Delta$ and a small $0<\delta$ ($\delta$ is $o_n(\tau)$), define $\Delta(\xx,\yy,\delta)$ to be a set of %vectors
%$\{(\ddelta_\xx ,\vec{0}), (\vec{0},\ddelta_\yy) \ | \ \supp(\ddelta_\xx) = \supp(\vec{x});\ \supp(\ddelta_\yy)= \supp(\yy);
%\ \xx+\ddelta_\xx \in \Delta;\ \yy+\ddelta_\yy \in \Delta;
%\ \delta_{\xx_{i}}, \delta_{\yy_j} \in \{-\delta, 0, +\delta\},\ \forall
%i, \forall j \}$.

%\footnote{For case when $|\supp(\xx)|$ is odd, $\vec{\ddelta_\xx}$ has one zero entry at random from the $\supp(\xx)$, same for $\yy$.}
%\end{definition}
%\begin{definition}
Given $\vec{z} \in \Delta$ and a small $0<\delta$ ($\delta$ is $o_n(\tau)$), define $\Delta(\vec{z},\delta)$ to be a set of vectors
$\{\vec{z}+\ddelta \in \Delta\ | \ \supp(\ddelta) = \supp(\vec{z});\ \delta_{i} \in \{-\delta, +\delta\},\ \forall i\}$.\footnote{In
case the size of the support of $\vec{z}$ is odd, there will be a zero entry in $\vec{\delta}$, so $|\supp(\vec{\delta})| =
|\supp(\vec{z})|-1$}
\end{definition}

Note that if $\vec{z}$ is pure (has support size one), then $\ddelta$ is all zero vector\footnote{There are no sampling errors in
monomorphic population}. Define noisy versions of both $g$ from
(\ref{eq.g}) and $f$ from (\ref{eq.fs}) as follows: Given $(\xx(t),\yy(t))$ pick $\ddelta_{\vec{x}}\in \Delta(\xx(t),\delta)$ and
$\ddelta_{\vec{y}} \in \Delta(\yy(t),\delta)$ uniformly at random. Set with probability half $\ddelta_{\vec{x}}$ to zero,
and with the other half set $\ddelta_{\vec{y}}$ to zero. 
%With probability half set $\ddelta_{\xx}$ to zero and $\ddelta_{\yy} = \ddelta_2$ and with the other half set $\ddelta_{\xx}=\ddelta_1$
%and $\ddelta_{\yy}$ to zero. 
Then redefine dynamics $g$ of (\ref{eq.g}) as follows: 

\begin{equation}\label{eq.gn}
(\xx(t+1),\yy(t+1))=g_\delta(\xx(t),\yy(t)) = g(\xx(t),\yy(t))+(\ddelta_{\xx},\ddelta_{\yy}).
\end{equation}

\noindent And redefine dynamics $f$ of (\ref{eq.f}) capturing sexual evolution with mutation as follows. 

\begin{equation}\label{eq.fn}
(\xx(t+1),\yy(t+1))=f_\delta(\xx(t),\yy(t)) = f(\xx(t),\yy(t))+(\ddelta_{\xx},\ddelta_{\yy}).
\end{equation}

Furthermore, we will have that if any $x_i,y_j$ goes below $\delta$, we set it to zero. %\footnote{This may be
%thought of as capturing the fact that number of individuals having a given allele can not be strictly between one and zero.} 
 This is crucial for our
theorems because otherwise the dynamics with equations (\ref{eq.g}) and (\ref{eq.fs}) (or even (\ref{eq.gn}) and (\ref{eq.fn})) can
converge to a fixed point at $t \to \infty $, but never reach a point in a finite amount of time. This is true in the result of
\cite{ITCS15MPP}, the dynamics converge almost surely to pure fixed points as $t \to \infty$ but do not reach fixation in a finite
time. So $x_i,y_j$ reaches fixation (set it to zero) if $x_i,y_j<\delta$. We need to re-normalize after this step. 
%to zero, we renormalize. 
%divide each frequency variable by the sum (to project it to simplex/renormalize):

\begin{equation}\label{eq.zero}
\begin{array}{c}
\mbox{$\forall i\in S_1$, if $x_i(t)<\delta$ then set $x_i(t)=0$. Re-normalize $\xx(t)$.}\\
\mbox{$\forall j\in S_1$, if $y_j(t)<\delta$ then set $y_j(t)=0$. Re-normalize $\yy(t)$.}
\end{array}
\end{equation}
\begin{definition}\label{def:negligible}
We call a vector $\vec{v}$ {\em negligible} if there exists an $i$ s.t $v_i < \delta$. % or $y_i < \delta$.
\end{definition}

\noindent{\bf Tracking population size.} Suppose the size of the initial population is $N^0$, and let population at time $t$ be $N^t$.
In every time period $N^t$ gets multiplied by average fitness of the current population, namely $\vec{x}(t) ^{ T}W(t)\vec{y}(t)$, where
$(\vec{x}(t),\vec{y}(t))$ denote the frequencies of alleles at generation $t$ and $W(t)$ the matrix fitness/environment at time (see discussion below about changing of environments).

\begin{equation}\label{eq.fit}
\mbox{Let average fitness $\Phi^t=\vec{x}(t)^TW(t)\vec{y}(t)$} \\
\mbox{ then $\mathbb{E}[N^{t+1}| \vec{x}(t),\vec{y}(t), N^t] = N^t  \Phi^{t+1}$}
\end{equation}
We will consider $N^{t+1} = N^t  \Phi^{t+1}$
%, where average fitness $\Phi^t=\vec{x}(t)^TW\vec{y}(t)$ 
(see also \cite{populationsize}). Based on the value of $N^t$, we give the definition of survival and extinction.

\begin{definition}\label{def.survive}
We say the population {\em goes extinct} if for initial population size $N^0$, there exists a time $t$ so that $N^t
< 1$. On the other hand, we say that population {\em survives} if for all times $t \in \mathbb{N}$ we have that $N^t \geq 1$.
\end{definition}

\subsubsection{Model of environment change}\label{sec:envchange}
Following the work of Wolf \etal \cite{wolf2005diversity}, we consider a Markov chain based model of changing environment. Let $\CE$ be the set of
different possible environments, and $W^e$ be the fitness matrix in environment $e\in \CE$. $E$ denotes the set of $(e,e')$
pairs if there is a non-zero probability $p_{e,e'}\in (0, 1)$ to go from environment $e$ to $e'$. See Figure \ref{fig1} for
an example.
%(see Figure \ref{fig1}).  Shuppose there is set $\CS$ of environments . Every environment $e \in \CE$ is represented by a fitness
%matrix $W^e$.
For a parameter $p<1$ we assume that $\sum_{e': (e,e')\in E} p_{e,e'}\le p,\ \forall e \in \CE$. That is, after every generation of the dynamics (\ref{eq.gn}) or (\ref{eq.fn}), the environment changes to one of its neighboring environment with probability
at most $p<1$, and remains unchanged with probability at least $(1-p)$. The graph formed by edges in $\CE$ is
assumed to be connected, thus the resulting Markov chain eventually will stabilize to a stationary distribution $\pi_e$ (is ergodic).

%{\color{red} \bf Need to say some more things about the markov chain, the stationary distribution etc. Graph is connected,
%markov chain converges to the stationary ergodic since $p<1$.}

Even though fitness matrices $W^e$ can be arbitrary, it is generally assumed that $W^e$ has distinct positive
entries \cite{ITCS:DBLP:dblp_conf/innovations/ChastainLPV13,ITCS15MPP}.
%{\color{red} \bf We need to define $\gamma^e$. I don't think we should put this assumption here. This can be part of the main thoerem,
%or at least it needs a name this assumption.}
%\begin{definition}
%We define $\gamma_s = \min_{ij} \; ( \min_{j' \neq j}|W^e_{ij}-W^e_{ij'}|,\min_{i'  \neq i}|W^e_{ij}-W^e_{i'j}|)$ and $\gamma = \min_s
%\gamma_s$. Under previous assumption \ref{ass:distinct} we have that $\gamma>0$. It is easy to see that $\gamma$ is
%$\Omega(\frac{1}{n^4})$ with high probability for random matrices.  \end{definition}
Furthermore, no individual can survive all the environments on average. Mathematically, if $\pi_e$ is the stationary distribution
of this Markov chain then,
%\begin{equation}\label{eq.env1}
$\forall i,j,\ \ \ \prod_{e \in \CE} (W_{ij}^e)^{\pi_e} <1$.
%\end{equation}
Furthermore, we assume that every environment has alleles of good type as well as bad type. %There exists a fixed $\beta>0$ such that
An allele $i$ of good type has uniform fitness (i.e., $\frac{\sum_j W_{ij}}{n}$) of at least $(1+\beta)$ for some $\beta>0$,
%{\color{red} \bf Earlier we need to define in an explicit way what is average fitness and average fitness for a specific strategy}
and alleles of bad type are dominated by a good type allele point-wise.
%and good types dominate bad types entry-wise.
\footnote{Think of bad type alleles akin to a terminal genetic illness. Such assumptions are typical in the biological literature (e.g., \cite{liekens2005evolution}).}
Finally, the number of bad alleles are $o(n)$ (sublinear in $n$).
Let the set of bad alleles for genes $i=1,2$ in environment $e$ be denoted by $B^e_i$.
%Formally,
%
%\begin{equation}\label{ass:avfitness}%\label{eq.bad}
%\begin{array}{c}
%\forall i \in S_1\setminus B_e^1,\ \frac{\sum_j W^e_{ij}}{n} \ge 1+\beta,\ \ \ \mbox{and}\ \ \ \forall i \in S_1\setminus B_e^1, \forall
%k \in B_e^1, W^e_{ij}\ge W^e_{kj},\ \forall j \\
%\forall j \in S_2\setminus B_e^2,\ \frac{\sum_i W^e_{ij}}{n} \ge 1+\beta,\ \ \ \mbox{and}\ \ \ \forall j \in S_2\setminus B_e^2, \forall
%k \in B_e^2, W^e_{ij}\ge W^e_{ik},\ \forall i \\
%\end{array}
%\end{equation}

Putting all of the above together, the Markov chain for environment change is defined by set $\CE$ of environments and its adjacency
graph, fitness matrices $W^e$, $\forall e \in \CE$, probability $1-p$ with which dynamics remains in current environment, sets
$B^e_i\subset S_i,\ i=1,2$ of bad alleles in environment $e$, and $\beta>0$ to lower-bound average fitness of good type alleles. See also
Section \ref{ass:environment} for discussion on the assumptions where we claim that most of them are necessary for our theorems. In the next
sections we will analyze the dynamics with equations (\ref{eq.gn}, \ref{eq.fn}) in terms of convergence and population size for fixed and
dynamic environments.

%{\color{red} \bf These last 3 lines should be rewritten.}

\begin{table}[ht!]
     \centering

\begin{tabular}{ |p{3cm}|p{13cm}|  }
 \hline
 Symbol & Interpretation\\
 \hline
  \hline
$W^e$ & fitness matrix at environment $e$\\
\hline
$W(t), W^{e(t)}$ & fitness matrix at time $t$\\
\hline
$\gamma^e$ & minimum difference between entries in fitness matrix $W^{e}$ \\
  \hline
 $\vec{x}, \vec{y}$ & frequencies of (alleles) strategies \\
  \hline
 $\vec{\delta}$ & noise/perturbation \\
 \hline
   $\Phi$ & potential/average fitness $\vec{x}^TW\vec{y}$\\
  \hline
$\beta$ &  If allele $i$ is of good type in environment $e$ then it satisfies $\frac{\sum_j W^e_{ij}}{n}\geq 1+\beta$ \\
 \hline
 $\tau$ & probability that an individual with allele $k$ mutates to $k'$ (of the same gene) \\
%type $i$ to mutate to type $j$ for coordinates 1,2\\
 \hline
\end{tabular}
 \caption{List of parameters}

\end{table}

\section{Overview of proofs}
\label{sec:overview}

\rr{The dynamical systems that we analyze, namely (\ref{eq.gn}) and (\ref{eq.fn}), under the evolving environment model of Section
\ref{sec:envchange} are}
%that results by introducing changing environment \cite{wolf2005diversity} into the Chastain \etal
%\cite{PNAS2:Chastain16062014} model of (noisy) MWU in coordination games, is a 
 (stochastically perturbed) nonlinear replicator-like dynamical systems
whose parameters evolve according to a (possibly slow mixing) Markov chain. We reduce the analysis of this complex setting to a series of smaller, modular arguments that combine as set-pieces to produce our main theorems.
%\rr{Now on we will refer to `evolution by natural selection under sexual reproduction' by {\em sexual selection} for brevity.}

%At a high level the techniques developed in the paper aim at breaching the gap between deterministic nonlinear dynamical systems, which lie squarely in the realm of topology and analysis
%and stochastic systems such as Markov chains and discrete random walks, which are typical within the theoretical computer science literature.

%In this section we briefly discuss techniques used to prove each of the above three theorems. We show these theorems for
%haploids (single chromosome) with two genes. Recall the mapping of this case to a two-player coordination game where genes are %players,
%alleles are their strategies, and population distribution over alleles defines mixed strategies.
%\medskip
\smallskip

%We start with the first theorem that convergence rate of noisy dynamics for evolution without mutation.
\noindent{\em Convergence rate for evolution without mutation in static environment.}
Our starting point is \cite{ITCS15MPP} where it was shown that in the case of noise-free sexual dynamics \rr{governed by (\ref{eq.g})}
the average population fitness increases in each step and the system converges to equilibria, and moreover that for almost all initial
conditions the resulting fixed
point corresponds to a monomorphic population (pure/not mixed equilibrium). Conceptually, the first step in our analysis tries to
capitalize on this stronger characterization by showing that convergence to such states happens fast. This is critical because while
there are only linearly many pure equilibria, there are (generically) exponentially many isolated, mixed ones
\cite{ITCS:DBLP:dblp_conf/innovations/ChastainLPV13}, which are impossible to meaningfully characterize. By establishing the predictive
power of pure states, we radically reduce our uncertainty about system behavior and produce a building block for future arguments.

%\bigskip

Without noise we cannot hope to prove fast convergence to pure states since by choosing initial conditions sufficiently close to the
stable manifold of an unstable equilibrium, we are bound to spend  super-polynomial time near such unstable states. In finite
population models, however,  the system state (proportions of different alleles) is always subject to small stochastic shocks (akin to
sampling errors). These small shocks suffice to argue fast convergence by combining an inductive argument and a potential/Lyapunov
function argument. %\rr{(Q: need to refer to (\ref{eq.gn}) here somehow)}

To bound  the convergence time to a pure fixed point starting at an arbitrary mixed strategy (maybe with full support), it suffices to
bound the time it takes to reduce the size of the support by one, because once a strategy \rr{$x_i$} becomes zero it remains zero \rr{under
(\ref{eq.gn})}, { i.e.,} an
extinct allele can never come back in absence of mutations (and then use induction). For the inductive step, we need two non trivial
arguments. First we need a lower bound on the rate of increase of the mean population fitness when the dynamics is not at approximate
fixed points\footnote{We call these states $\alpha$-close points.}, shown in Lemma \ref{lem:inequality}. This requires
a quantitative strengthening of potential/(nonlinear dynamical system) arguments in \cite{ITCS15MPP}. Secondly, we
show that the noise suffices to escape fast (with high probability) from the influence of fixed points that are not monomorphic (these are like saddle points).  This requires a
combination of stochastic techniques including origin returning random walks,
Azuma type inequalities for submartingales, and arguing about the increase in expected mean fitness \rr{$\xx(t)^TW(t)\yy(t)$} in a few steps (Lemmas
\ref{lem:zeroinexpectation}-{\ref{lem:submartingale}), \rr{where $\xx$ and $\yy$ capture allele frequencies at time step $t$}. \rr{As a
result we show polynomial-time convergence of (\ref{eq.gn}) to pure equilibrium under static environment in Theorem
\ref{thm:convergencetime}. This result may be of independent interest since fast convergence of nonlinear dynamics to equilibrium is not typical \cite{Meiss2007}. }
%This concludes the speed of convergence analysis.

  \smallskip

\noindent{\em Survival, extinction under dynamic environments}. \rr{As described in Section \ref{sec:envchange},} we consider a Markov chain
based model of environmental changes, where after every selection step, the fitness matrix changes with probability at most $p$.
Suppose the starting population size is $N^0>0$ and let $N^t$ denote the size at time $t$ then in every step $N^t$ gets multiplied by the
mean fitness $\xx(t)^TW(t)\yy(t)$ of the current population (see (\ref{eq.fit})). We say that population goes extinct if for some $t$,
$N^t<1$, and it survives if $N^t\ge 1$, for all $t$. 

We assume that there do not exist "all-weather" phenotypes. 
We encode this by having the monomorphic population of any genotype decrease when matched to an environment chosen according to the
stationary distribution of the Markov chain.\footnote{\rr{If, for any genotype, the population increased in expectation over the randomly
chosen environment, then once monomorphic population consisting of only such a genotype is reached, the population would blow up
exponentially (and forever) as soon as the Markov chain reached its mixing time.}} In other words, an allele may be both ``good'' and
``bad'' as environment changes, sometimes leading to growth, and other times to decrease in population. 
%have both good and bad alleles some leading to growth and some to population decrease.

\textit{Case a) sexual evolution without mutation:} If the population becomes monomorphic then this single phenotype can not survive in
all environments, and will eventually wither as its population will be in exponential decline once the Markov chain mixes. The question
is whether monomorphism is achieved under changing environment; the above analysis is not applicable directly as the fitness matrix is not
fixed any more. Our first theorem \rr{(Theorem \ref{thm:convergencetime})} upper bounds the amount of time $T$ needed to ``wait'' in a
single environment so as the probability of convergence to a monomorphic state is at least some constant ( e.g., $\frac{1}{2}$).
Breaking up the time history in consecutive chunks of size $T$ and applying Borel-Cantelli theorem implies that the population will
become monomorphic with probability one (Theorem \ref{thm:nomutationdie}). This is the strongest possible result without explicit
knowledge of the specifics of the Markov chain (e.g., mixing time).

\begin{comment}
Suppose for a fixed environment population becomes monomorphic with
probability $\frac{1}{2}$ in $T$ steps. If $E_i$ is the event that environment is unchanged for steps $iT+1$ to $(i+1)T$ then its
probability is $P(E_i)=(1-p)^T$. Since $\sum_i P(E_i)=\infty$ by Borel-Cantelli theorem $E_i$'s will happen infinitely often.
After $k$ successes of events $E_i$, the population is monomorphic with probability $1-\nfrac{1}{2^k}$, and for $k \to \infty$ the population will eventually become monomorphic with probability one.
\end{comment}

\textit{Case b) sexual evolution with mutation:}
\rr{As described in Section \ref{subsec:naturalsel}}, we consider a well-established model of mutation \cite{Hofbauer98}, where after
every selection step, each allele mutates to another with probability $\tau$. \rr{The resulting dynamics is
governed by (\ref{eq.fs}), and we analyze its noisy counterpart (\ref{eq.fn})}.
%This is a standard way to model mutations \cite{Hofbauer98}. 
This ensures that in each period the proportion of every allele is at least $\tau$. We show that this helps the population survive.

Unlike the {\em no mutation} case \cite{ITCS15MPP}, the average fitness \rr{$\xx(t)^TW\yy(t)$} is no more increasing in every step, even
in absence of noise. Instead we derive another potential function that is a combination of average fitness and entropy.  Due to
mutations forcing exploration, natural selection weeds out the bad alleles fast (Lemma \ref{lem:badallele}). Thus there may be initial
decrease in fitness, however the
decrease is upper bounded. Furthermore, we show that the fitness is bound to increase significantly within a short time horizon due to
increase in population of good alleles (Lemma \ref{lem:phasetransition}). Since population size
gets multiplied by average fitness in each iteration, this defines a biased random walk on logarithm of the population size. Using
upper and lower bounds on decrease and increase respectively, we show that the probability of extinction stochastically dominates a
simpler to analyze random variable pertaining to biased random walks on the real line (Lemma \ref{lem:biased}). Thus, the probability of long term
survival is strictly positive (Theorem \ref{thm:mutationsurvive}). This completes the proof of informal Theorem 1. 

\textit{Deterministic convergence despite mutation in static environments:} Finally, as an independent result for the case of noise free
dynamics (infinite population) with mutation \rr{governed by (\ref{eq.fs})}, we show convergence to fixed points in the limit, by defining a novel
potential function which is the product of mean fitness $\xx^T W\yy$ and a term capturing diversity of the allele distribution (Theorem
\ref{thm:potential}).  The latter term is essentially the product of allele frequencies \rr{($\prod_{i}x_{i}\prod_{i}y_{i}$)}. 
Such convergence results are not typical in dynamical systems literature \cite{Meiss2007}, and therefore this potential function may
be useful to understand limit points of this and similar dynamics (the continuous time analogue can be found here
\cite{Hofbauer98}). One way to interpret this result is a homotopy method for computing equilibria in coordination games, where the
algorithm always converges to fixed points, and as mutation goes to zero the stable fixed points correspond to the pure Nash equilibria
\cite{ITCS:DBLP:dblp_conf/innovations/ChastainLPV13}.

\section{Rate of Convergence: Dynamics without Mutation in Fixed Environments}\label{sec:speed}
In this section we show a polynomial bound on the convergence time of dynamics (\ref{eq.gn}), governing sexual evolution under natural
selection with noise, in a static environment. In addition, we show that the fixed points reached by the dynamics are pure. 
%the natural selection dynamics with noise (governed by equations (\ref{eq.gn})) need to
%reach fixation in a given environment, and that these fixed-points are pure. For discussion on the parameters of this section's main
%theorem please see also Section \ref{ass:time}. %also that the fixed points the dynamics converges to are pure.

Consider a fixed environment $e$ and we use $W$ to denote its fitness matrix $W^e$.
It is known that average fitness $\vec{x}^TW\vec{y}$ increases under the non-noisy counterpart (\ref{eq.g}) \cite{ITCS15MPP}. In the next lemma
we obtain a lower bound on this increase. 

\begin{lemma} \label{lem:inequality} Let $(\hat{\vec{x}},\hat{\vec{y}}) = g(\vec{x},\vec{y})$ where $(\vec{x},\vec{y}) \in \Delta$ and
$g$ is from equation (\ref{eq.g}). Then, $$\hat{\vec{x}}^TW\hat{\vec{y}} - \vec{x}^TW\vec{y} \geq C\left(\sum_{i}x_{i} \left((W\vec{y})_{i} -
\vec{x}^TW\vec{y}\right)^2+\sum_{i}y_{i} \left((W^T\vec{x})_{i} - \vec{x}^TW\vec{y}\right)^2\right )$$ for
$C=\frac{3}{8 \cdot \max_{i,j}W_{ij}}$.
\end{lemma}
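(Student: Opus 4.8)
The plan is to compute the one-step change in mean fitness exactly, recognize a discrete analogue of Fisher's fundamental theorem, and then control the single term that is not manifestly nonnegative. Write $\Phi=\vec{x}^TW\vec{y}$, $a_i=(W\vec{y})_i$, $b_j=(W^T\vec{x})_j$, so that the update reads $\hat{x}_i=x_ia_i/\Phi$ and $\hat{y}_j=y_jb_j/\Phi$, and set $\alpha_i=a_i-\Phi$, $\beta_j=b_j-\Phi$. Since $\sum_i x_i\alpha_i=\sum_j y_j\beta_j=0$, the two sums on the right-hand side of the lemma are exactly the weighted variances $V_x=\sum_i x_i\alpha_i^2$ and $V_y=\sum_j y_j\beta_j^2$. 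First I would expand $\hat{\vec{x}}^TW\hat{\vec{y}}=\Phi^{-2}\sum_{i,j}x_iy_jW_{ij}a_ib_j$ using the factorization $a_ib_j=\Phi^2+\Phi\alpha_i+\Phi\beta_j+\alpha_i\beta_j$. The constant term contributes $\Phi^3$, the two linear terms contribute $\Phi V_x$ and $\Phi V_y$ (because $\sum_{j}W_{ij}y_j=a_i$ collapses the $\beta$-free sum to $\sum_i x_i\alpha_i a_i=V_x$, and symmetrically for $y$), and the bilinear term contributes $R:=\sum_{i,j}x_iy_jW_{ij}\alpha_i\beta_j$. Dividing by $\Phi^2$ gives the exact identity
$$\hat{\vec{x}}^TW\hat{\vec{y}}-\vec{x}^TW\vec{y}=\frac{V_x+V_y}{\Phi}+\frac{R}{\Phi^2}.$$
The first term is nonnegative and already of the right shape; the whole difficulty is that the cross term $R$ may be negative, so the task reduces to bounding $R$ from below by a small multiple of $V_x+V_y$.

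For the cross term I would treat the nonnegative weights $x_iy_jW_{ij}$ (which sum to $\Phi$) as a measure and apply Cauchy--Schwarz, obtaining $|R|\le\sqrt{PQ}$ with $P=\sum_i x_ia_i\alpha_i^2$ and $Q=\sum_j y_jb_j\beta_j^2$; equivalently one may use the elementary $\alpha_i\beta_j\ge-\tfrac12(\alpha_i^2+\beta_j^2)$ to get $R\ge-\tfrac12(P+Q)$ directly. Controlling $P,Q$ via $a_i,b_j\le W_{\max}:=\max_{i,j}W_{ij}$ gives $P\le W_{\max}V_x$, $Q\le W_{\max}V_y$, whence $R\ge-\tfrac{1}{2}W_{\max}(V_x+V_y)$. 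Substituting into the identity and using the trivial bound $\Phi\le W_{\max}$ yields
$$\hat{\vec{x}}^TW\hat{\vec{y}}-\vec{x}^TW\vec{y}\ \ge\ (V_x+V_y)\left(\frac{1}{\Phi}-\frac{W_{\max}}{2\Phi^2}\right)\ =\ \frac{V_x+V_y}{W_{\max}}\,h\!\left(\frac{\Phi}{W_{\max}}\right),$$
where $h(t)=t^{-1}-\tfrac12 t^{-2}$. For $\Phi/W_{\max}$ bounded away from $0$ this already delivers the stated constant: one checks $h(2/3)=\tfrac38$ and $h$ is increasing on $(0,1]$, so $h(\Phi/W_{\max})\ge\tfrac38$ whenever $\Phi\ge\tfrac23 W_{\max}$, which is exactly $C=\tfrac{3}{8W_{\max}}$.

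The step I expect to be genuinely delicate is this cross-term bound, because the naive use of $a_i,b_j\le W_{\max}$ is lossy precisely when $\Phi\ll W_{\max}$: there $h(\Phi/W_{\max})$ degrades and the crude estimate alone does not reach $\tfrac38$. The resolution is to stop discarding the surplus $P-\Phi V_x=\sum_i x_i\alpha_i^3$ and instead exploit the constraint $\sum_i x_ia_i=\Phi$: an allele can push $P$ up toward $W_{\max}V_x$ only by having a large positive $\alpha_i$, and such an allele, being heavily favored by selection, forces a compensating \emph{positive} contribution to $R$ through the coupled $\beta$-statistics (one sees this already in the rank-one and symmetric cases, where $R\ge 0$ outright). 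Making this trade-off quantitative, so that the combined estimate beats $\tfrac{3}{8W_{\max}}(V_x+V_y)$ uniformly in $\Phi$, is the crux of the argument; once it is in place the lemma follows from the algebra above, with the factor $\tfrac38$ emerging as the extremal value of $h$ over the relevant range of $\Phi/W_{\max}\in(0,1]$.

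I would also keep in reserve a cleaner organizing principle in case the variance bookkeeping becomes unwieldy: the update $\hat{x}_i=x_i\,\partial_{x_i}\Phi/\sum_k x_k\partial_{x_k}\Phi$ is precisely the Baum--Eagon growth transformation for the bilinear form $\Phi=\sum_{i,j}W_{ij}x_iy_j$, which guarantees the qualitative increase $\hat{\vec{x}}^TW\hat{\vec{y}}>\vec{x}^TW\vec{y}$ off fixed points. Lemma~\ref{lem:inequality} is the quantitative refinement of this fact, and the proof plan above is exactly the route to turning the Baum--Eagon monotonicity into the explicit variance lower bound required for the later convergence-rate arguments.
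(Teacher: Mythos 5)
Your proposal contains a genuine gap, and you yourself flag it: the entire argument reduces to a claim you do not prove. Your exact identity $\hat{\vec{x}}^TW\hat{\vec{y}}-\vec{x}^TW\vec{y}=\frac{V_x+V_y}{\Phi}+\frac{R}{\Phi^2}$ is correct, and the crude bound $R\ge-\tfrac12(P+Q)\ge-\tfrac{W_{\max}}{2}(V_x+V_y)$ is fine, but it only yields the constant $\tfrac{3}{8W_{\max}}$ when $\Phi\ge\tfrac23 W_{\max}$. In the regime $\Phi<\tfrac23 W_{\max}$ your lower bound degrades and in fact becomes negative for $\Phi\le\tfrac12 W_{\max}$ (take $W$ with $W_{11}=10$ and all other entries $1$, and $\vec{x}=\vec{y}=(\epsilon,1-\epsilon)$: then $\Phi\approx 1$, $W_{\max}=10$, and your factor $\frac{1}{\Phi}-\frac{W_{\max}}{2\Phi^2}\approx-4$ is useless). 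The ``trade-off'' you sketch --- that an allele with large positive $\alpha_i$ forces a compensating positive contribution to $R$ --- is precisely the content of the lemma in this regime, and it is offered only as a heuristic, supported by special cases (rank-one, symmetric) that do not cover the general situation. So what you have is a correct reformulation plus a proof of the easy regime, with the hard regime left open.

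It is instructive to compare with how the paper avoids this difficulty. Rather than isolating the cross term $R$ and fighting it, the paper symmetrizes first: writing $2\hat{\vec{x}}^TW\hat{\vec{y}}\,\Phi^2=2\sum_{i,j}W_{ij}x_iy_j(W\vec{y})_i(W^T\vec{x})_j$, it pairs terms and applies AM--GM, $\tfrac12\left((W^T\vec{x})_j+(W^T\vec{x})_k\right)\ge\sqrt{(W^T\vec{x})_j(W^T\vec{x})_k}$, which packages everything into the perfect squares $\left(\sum_i x_i(W\vec{y})_i^{3/2}\right)^2+\left(\sum_j y_j(W^T\vec{x})_j^{3/2}\right)^2$ before any sign issues arise. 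The variance is then extracted by a strengthened Jensen inequality for $f(z)=z^{3/2}$ (second-order Taylor expansion using $f''(z)\ge\tfrac{3}{4\sqrt{\mu}}$ on $[0,\mu]$, $\mu=W_{\max}$), giving $\sum_i x_i(W\vec{y})_i^{3/2}\ge\Phi^{3/2}+\tfrac{3}{8\sqrt{\mu}}V_x$ and ultimately the constant $\tfrac{3}{8\sqrt{\mu\Phi}}\ge\tfrac{3}{8\mu}$. Note that this bound \emph{improves} as $\Phi$ shrinks, which is exactly the uniformity-in-$\Phi$ that your decomposition fails to deliver; closing your gap would require proving a quantitative positive-correlation statement about $R$ that, to my knowledge, does not follow from the ingredients you have assembled.
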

For the rest of the section, $C$ denotes $\frac{3}{8 \cdot W_{\max}}$ where $W_{\max} = \max_{ij}W_{ij}$ and $W_{\min} =
\min_{ij}W_{ij}$. Note that the lower bound obtained in Lemma \ref{lem:inequality} is strictly positive unless $(\xx,\yy)$ is a
fixed point of (\ref{eq.g}). This gives an alternate proof of the fact that, under dynamics (\ref{eq.g}), average fitness is a
potential function, { i.e.,} increases in every step. On the other hand, the lower bound can be arbitrarily small at some points, and
therefore it does not suffice to bound the convergence time. Next, we define points where this lower-bound is relatively small.

%The r.h.s of inequality \ref{lem:inequality} is a lower bound on the increase of the fitness at every iteration (assuming no noise) and
%is non-negative.

%This is another way to show that the average fitness is increasing (without noise, no mutation) with respect to the
%number of iterations. The main task of the following analysis and lemma \ref{lem:inequality} is to quantify by how much the average
%fitness increases per iteration (under no noise).  Below we give a definition for the points $(\vec{x},\vec{y})$ that make the r.h.s of
%\ref{lem:inequality} relatively small.

\begin{definition} We call a point $(\vec{x},\vec{y})$ $\alpha$-close for an $\alpha>0$, if for
all $\vec{x}', \vec{y}' \in \Delta$ such that $\supp(\vec{x}') \subseteq \supp(\vec{x})$ and $\supp(\vec{y}')\subseteq\supp(\vec{y})$
we have $|\vec{x}^TW\vec{y} - \vec{x}'^TW\vec{y}| \le \alpha$ and $|\vec{x}^TW\vec{y} - \vec{x}^T W\vec{y}'|\le \alpha$.
\end{definition}

$\alpha$-close points, are a specific class of  {\em ``approximate'' stationary points}, where the progress in average fitness is not significant (see
Figure \ref{fig3}, the big circles contain these points).  From now on, think $\alpha$ as a small parameter that will be determined in
the end of this section. If a given point $(\vec{x},\vec{y})$ is not $\alpha$-close and not {\em negligible} (see Definition
\ref{def:negligible}) then using Lemma \ref{lem:inequality} it follows that the increase in potential is at least $C \delta \alpha^2$.
Formally:
%%%%%%%%%%%%%%%%%%%%%%%%%%%%%%%%%%%%%%%%%%%%%%%%%%%
%Corollary, if not a-close, then we have an improvement
\begin{corollary}\label{cor:notaclose}
If $(\vec{x},\vec{y})\in \Delta$ is neither $\alpha$-close nor negligible, and $(\hat{\xx},\hat{\yy})=g(\xx,\yy)$, then
$$\hat{\vec{x}}^TW\hat{\vec{y}} \geq \vec{x}^TW\vec{y}+C\delta \alpha^2.$$
%Let $(\vec{x},\vec{y})$ be a frequency vector which is not $\alpha$-close equilibrium, is not negligible and has average fitness
%$\vec{x}^TW\vec{y}$. The next generation's vector (before the noise) will have average fitness $\hat{\vec{x}}^TW\hat{\vec{y}}$ so that
%$$\hat{\vec{x}}^TW\hat{\vec{y}} \geq \vec{x}^TW\vec{y}+C\delta \alpha^2.$$
\end{corollary}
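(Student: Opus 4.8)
The plan is to read the bound directly off Lemma~\ref{lem:inequality} by showing that violating $\alpha$-closeness forces at least one of the two ``variance'' sums on its right-hand side to be bounded below by $\delta\alpha^2$. First I would observe that, since $\sum_i x_i (W\vec{y})_i = \vec{x}^TW\vec{y}$, the sum $\sum_i x_i\left((W\vec{y})_i - \vec{x}^TW\vec{y}\right)^2$ is exactly the variance of the random variable taking value $(W\vec{y})_i$ with probability $x_i$, and symmetrically for the second sum with $(W^T\vec{x})_j$ and weights $y_j$. So it suffices to lower-bound one of these two sums.

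Next I would unpack the hypothesis that $(\vec{x},\vec{y})$ is not $\alpha$-close. By definition this means either there is a distribution $\vec{x}'$ with $\supp(\vec{x}')\subseteq\supp(\vec{x})$ and $|\vec{x}'^TW\vec{y} - \vec{x}^TW\vec{y}| > \alpha$, or the symmetric statement holds for some $\vec{y}'$; assume the former without loss of generality. The key step is to convert this statement about an arbitrary distribution $\vec{x}'$ into one about a single coordinate. Here I would use that $\vec{x}'\mapsto \vec{x}'^TW\vec{y} = \sum_{i\in\supp(\vec{x})} x'_i (W\vec{y})_i$ is linear over the simplex supported on $\supp(\vec{x})$, so its extreme values are attained at vertices, giving $\max_{\vec{x}'}|\vec{x}'^TW\vec{y} - \vec{x}^TW\vec{y}| = \max_{i\in\supp(\vec{x})}|(W\vec{y})_i - \vec{x}^TW\vec{y}|$. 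Since the left-hand side exceeds $\alpha$, there is a coordinate $i_0\in\supp(\vec{x})$ with $|(W\vec{y})_{i_0}-\vec{x}^TW\vec{y}|>\alpha$.

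Finally I would keep only the $i_0$ term in the first sum and invoke non-negligibility: because $i_0\in\supp(\vec{x})$ and $(\vec{x},\vec{y})$ is not negligible, $x_{i_0}\geq\delta$, so
$$\sum_i x_i\big((W\vec{y})_i - \vec{x}^TW\vec{y}\big)^2 \;\geq\; x_{i_0}\big((W\vec{y})_{i_0}-\vec{x}^TW\vec{y}\big)^2 \;>\; \delta\alpha^2.$$
The other sum is nonnegative, so the bracketed expression in Lemma~\ref{lem:inequality} exceeds $\delta\alpha^2$; multiplying by $C>0$ yields $\hat{\vec{x}}^TW\hat{\vec{y}} - \vec{x}^TW\vec{y}\geq C\delta\alpha^2$. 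If instead the closeness violation occurs in the $\vec{y}$-direction, the identical argument applied to the second sum gives the same conclusion.

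I expect the only genuine subtlety to be the vertex argument: correctly reducing the quantifier over all support-restricted distributions $\vec{x}'$ to a single extremal coordinate $i_0$, and then ensuring the non-negligibility bound $x_{i_0}\geq\delta$ is applied to a coordinate that actually lies in $\supp(\vec{x})$. Everything else is bookkeeping layered on top of Lemma~\ref{lem:inequality}.
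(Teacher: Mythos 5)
Your proposal is correct and follows essentially the same route as the paper's proof: negate $\alpha$-closeness to extract a support coordinate $i_0$ with $|(W\vec{y})_{i_0}-\vec{x}^TW\vec{y}|>\alpha$, use non-negligibility to get $x_{i_0}\geq\delta$, and feed the single term $x_{i_0}\bigl((W\vec{y})_{i_0}-\vec{x}^TW\vec{y}\bigr)^2>\delta\alpha^2$ into Lemma~\ref{lem:inequality}. The only difference is that you spell out the linearity/vertex argument reducing the quantifier over all support-restricted distributions $\vec{x}'$ to a single extremal coordinate, a step the paper asserts without justification, so your write-up is if anything more complete.
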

\begin{proof} Since the vector $(\vec{x},\vec{y})$ is neither $\alpha$-close nor negligible, it follows that there exists an index $i$
such that $|(W\vec{y})_i-\vec{x}^TW\vec{y}|>\alpha$ and $x_i \geq \delta$ and hence $x_i ((W\vec{y})_i-\vec{x}^TW\vec{y})^2 > \delta
\alpha^2$, or $|(W^T\vec{x})_i-\vec{x}^TW\vec{y}|>\alpha$ and $y_i\geq\delta$ and hence $y_i ((W^T\vec{x})_i-\vec{x}^TW\vec{y})^2 >
\delta \alpha^2$. Therefore in Lemma \ref{lem:inequality}, the r.h.s is at least $C\delta \alpha^2$ and thus we get that
$\hat{\vec{x}}^TW\hat{\vec{y}} - \vec{x}^TW\vec{y} \geq C\delta \alpha^2.$
\end{proof}
%%%%%%%%%%%%%%%%%%%%%%%%%%%%%%%%%%%%%%%%%%%%%%%%%%%%%

In the analysis above we considered non-noisy dynamics governed by (\ref{eq.g}). Our goal is to analyze finite population dynamics,
which introduces noise and the resulting dynamics is (\ref{eq.gn}).  This changes how the fitness increases/decreases. The next lemma shows
that in expectation the average fitness remains unchanged after the introduction of  noise.

%%%%%%%%%%%%%%%%%%%%%%%%%%%%%%%%%%%%%%%%%%%%%%%%
%potential remains the same in expectation after random noise.
\begin{lemma}\label{lem:zeroinexpectation} Let $\vec{\delta}= (\vec{\delta}_{\vec{x}},\vec{\delta}_{\vec{y}})$ be the  noise vector. It
holds that $\mathbb{E}_{\vec{\delta}}[(\vec{x+\vec{\delta}_{\vec{x}}})^TW(\vec{y}+\vec{\delta}_{\vec{y}})] = \vec{x}^TW\vec{y}$.
\end{lemma}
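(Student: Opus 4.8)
The plan is to exploit the sign-symmetry of the perturbation distribution, reducing the claim to the statement that the noise has zero mean in each coordinate. First I would argue that $\mathbb{E}_{\ddelta_{\vec{x}}}[\ddelta_{\vec{x}}] = \vec{0}$ and $\mathbb{E}_{\ddelta_{\vec{y}}}[\ddelta_{\vec{y}}] = \vec{0}$. By Definition~\ref{def:noise} the vector $\ddelta_{\vec{x}}$ is drawn uniformly from $\Delta(\vec{x},\delta)$, whose elements have entries in $\{-\delta,+\delta\}$ on $\supp(\vec{x})$ that sum to zero (so that $\vec{x}+\ddelta_{\vec{x}}$ stays in $\Delta$). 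The map $\ddelta \mapsto -\ddelta$ is then an involution of $\Delta(\vec{x},\delta)$ onto itself: negating every entry preserves the support, the $\{-\delta,+\delta\}$ membership and the zero-sum constraint, and $\vec{x}-\ddelta$ still lies in $\Delta$ because every support coordinate obeys $x_i \ge \delta$ after the zeroing rule (\ref{eq.zero}). Pairing each $\ddelta$ with its negation in the uniform average makes the mean vanish, and the same argument applies to $\ddelta_{\vec{y}}$.

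Next I would expand the quadratic form using the coupling. Recall that the noise model zeroes out $\ddelta_{\vec{x}}$ with probability $\tfrac12$ and $\ddelta_{\vec{y}}$ with the remaining probability $\tfrac12$, so in every realization exactly one perturbation is active and no cross term $\ddelta_{\vec{x}}^T W \ddelta_{\vec{y}}$ is ever produced. Conditioning on the two branches and using bilinearity of $(\vec{u},\vec{v}) \mapsto \vec{u}^T W \vec{v}$,
\begin{align*}
\mathbb{E}_{\vec{\delta}}\left[(\vec{x}+\ddelta_{\vec{x}})^TW(\vec{y}+\ddelta_{\vec{y}})\right]
&= \tfrac12\,\mathbb{E}_{\ddelta_{\vec{x}}}\left[(\vec{x}+\ddelta_{\vec{x}})^TW\vec{y}\right] + \tfrac12\,\mathbb{E}_{\ddelta_{\vec{y}}}\left[\vec{x}^TW(\vec{y}+\ddelta_{\vec{y}})\right]\\
&= \vec{x}^TW\vec{y} + \tfrac12\,\mathbb{E}_{\ddelta_{\vec{x}}}[\ddelta_{\vec{x}}]^TW\vec{y} + \tfrac12\,\vec{x}^TW\,\mathbb{E}_{\ddelta_{\vec{y}}}[\ddelta_{\vec{y}}],
\end{align*}
and the two correction terms vanish by the mean-zero identities, leaving $\vec{x}^TW\vec{y}$.

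I expect no serious obstacle here; the only point that genuinely needs care is checking that the negated perturbation $-\ddelta$ is itself admissible, so that the symmetry pairing is legitimate, and this is precisely where the zeroing rule (\ref{eq.zero}) guaranteeing $x_i \ge \delta$ on the support is invoked. I would also note that the half/half coupling is a convenience rather than a necessity: had both perturbations been applied simultaneously and independently, the cross term would contribute $\mathbb{E}[\ddelta_{\vec{x}}]^T W\,\mathbb{E}[\ddelta_{\vec{y}}] = 0$ by independence, so the conclusion is unaffected and the coupling only streamlines the bookkeeping.
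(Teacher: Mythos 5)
Your proof is correct and rests on exactly the same key fact as the paper's: the noise distribution is symmetric under $\ddelta \mapsto -\ddelta$. The paper implements this by averaging the quadratic form over the four equally likely sign combinations $(\pm\ddelta_{\vec{x}},\pm\ddelta_{\vec{y}})$, whose sum is $4\vec{x}^TW\vec{y}$, while you condition on which perturbation is active and use that each has zero mean --- the same symmetry argument with slightly different bookkeeping, and if anything you are more explicit than the paper about why $-\ddelta$ is itself an admissible perturbation.
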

Next, we show how random noise can help the dynamic escape from a polytope of $\alpha$-close points.  We first analyze how adding noise may help
increase fitness with high enough probability. A simple application of Catalan numbers shows that:
%\rr{proof in appendix? If not then cite wherever it is from.}

\begin{lemma} \label{lem:ballot} The probability of a (unbiased) random
walk on the integers that consist of $2m$ steps of unit length, beginning at the origin and ending at the origin, that never becomes
negative is $\frac{1}{m+1}$.
\end{lemma}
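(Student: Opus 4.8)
The plan is to recognize this as the classical Catalan/ballot count and to establish it via the reflection principle. First I would fix the conditional sample space: a walk of $2m$ unit steps that begins and ends at the origin must take exactly $m$ up-steps ($+1$) and $m$ down-steps ($-1$), so there are exactly $\binom{2m}{m}$ such walks, each equally likely. The claimed probability is therefore the count of those walks that stay nonnegative at every partial sum, divided by $\binom{2m}{m}$; it suffices to show the numerator equals the Catalan number $\frac{1}{m+1}\binom{2m}{m}$.

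The core of the argument is to count instead the \emph{bad} walks, namely those that become negative at some step. Since every step has unit length, becoming negative is the same as visiting the level $-1$. For each bad walk let $t$ be the first time it reaches $-1$, and reflect its initial segment (the steps up to time $t$) about the horizontal line at height $-1$, leaving the remaining steps untouched. This operation fixes the endpoint at $0$ while moving the starting level from $0$ to $-2$, and it is reversible: any walk from $-2$ to $0$ starts below $-1$ and ends above it, hence has a well-defined first visit to $-1$ whose preceding segment can be reflected back. This yields a bijection between bad walks and unrestricted walks of length $2m$ from $-2$ to $0$. The latter have net displacement $+2$, i.e.\ $m+1$ up-steps and $m-1$ down-steps, so there are $\binom{2m}{m-1}$ of them.

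Subtracting the bad walks from the total then gives the count of good (nonnegative) walks, and a one-line simplification using $\binom{2m}{m-1}/\binom{2m}{m} = m/(m+1)$ finishes the computation:
\[
\frac{\binom{2m}{m}-\binom{2m}{m-1}}{\binom{2m}{m}} = 1 - \frac{m}{m+1} = \frac{1}{m+1}.
\]
The only delicate point is making the reflection map a genuine bijection — in particular checking that the first-passage time to $-1$ is well defined for every walk on each side and that reflecting the corresponding prefix inverts the map. This is standard (André's reflection), but it is where all the content lies; the surrounding counting is purely mechanical.
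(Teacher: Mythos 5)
Your proof is correct. The paper itself gives no proof of this lemma --- it simply invokes it as the classical ballot-problem/Catalan-number fact (``a simple application of Catalan numbers'') --- and your reflection-principle argument, counting the bad walks via the bijection with paths from $-2$ to $0$ and computing $\binom{2m}{m}-\binom{2m}{m-1}=\frac{1}{m+1}\binom{2m}{m}$, is exactly the standard derivation underlying that citation, so you have merely supplied the details the paper chose to omit.
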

%%%%%%%%%%%%%%%%%%%%%%%%%%%%%%%%%%%%%%%%%%%%%%%%%%

\noindent
We define $\gamma = \min_{(i,j) \neq (i',j')}  |W_{ij} - W_{i'j'}|$.
The following lemma is essentially a corollary of Lemma \ref{lem:ballot}.

\begin{lemma}\label{lem:goodjump} Let $\vec{\delta}_{\vec{y}}$ be a random noise with support size $m$. For all $i$ in the support of
${\vec{x}}$ we have that $(W\vec{\delta}_{\vec{y}})_i \geq \frac{\gamma \delta m}{2}$ with probability at least
$\frac{1}{1+m/2}$ (same is true for $\vec{\delta}_{\vec{x}}$ and $\vec{y}$).  \end{lemma}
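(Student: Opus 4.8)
The plan is to fix a coordinate $i\in\supp(\vec{x})$ and reduce the statement to the ballot-type estimate of Lemma~\ref{lem:ballot}; I read the claim as a per-coordinate bound, i.e. for each fixed $i$ the event $(W\vec{\delta}_{\vec{y}})_i\ge \frac{\gamma\delta m}{2}$ has probability at least $\frac{1}{1+m/2}$, since the relevant ordering of the entries depends on $i$. By Definition~\ref{def:noise}, on $\supp(\vec{y})$ the noise takes the form $(\vec{\delta}_{\vec{y}})_j=\delta\eta_j$ with $\eta_j\in\{-1,+1\}$, and the requirement $\vec{y}+\vec{\delta}_{\vec{y}}\in\Delta$ forces $\sum_j\eta_j=0$, i.e. exactly $m/2$ of the signs are $+1$ and $m/2$ are $-1$ (I will treat $m$ even; for odd $m$ one support entry of $\vec{\delta}_{\vec{y}}$ is zero and the same argument runs on the remaining $m-1$ coordinates). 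Then $(W\vec{\delta}_{\vec{y}})_i=\delta\sum_{j\in\supp(\vec{y})}W_{ij}\eta_j$, so it suffices to show that the signed sum $\sum_j W_{ij}\eta_j$ is at least $\gamma m/2$ with the claimed probability.

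First I would relabel the entries $\{W_{ij}:j\in\supp(\vec{y})\}$ in decreasing order as $a_{(1)}>a_{(2)}>\dots>a_{(m)}$; these are distinct and, by the definition $\gamma=\min_{(i,j)\neq(i',j')}|W_{ij}-W_{i'j'}|$, consecutive gaps satisfy $a_{(k)}-a_{(k+1)}\ge\gamma$. Let $\eta_{(k)}$ be the sign attached to $a_{(k)}$ and set $S_0=0$, $S_k=\sum_{l=1}^{k}\eta_{(l)}$. Since the noise is chosen uniformly at random and relabelling is a fixed bijection, the sequence $(S_k)_{k=0}^m$ is a uniformly random lattice bridge: a $\pm1$ walk of $m=2(m/2)$ steps with $S_0=S_m=0$. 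Lemma~\ref{lem:ballot} (with its parameter equal to $m/2$) then gives that this walk stays nonnegative, $S_k\ge0$ for all $k$, with probability at least $\frac{1}{m/2+1}=\frac{1}{1+m/2}$.

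It remains to show the deterministic implication: whenever $S_k\ge0$ for all $k$, the signed sum is at least $\gamma m/2$. Here I would use summation by parts, $\sum_{k=1}^m \eta_{(k)}a_{(k)}=\sum_{k=1}^m(S_k-S_{k-1})a_{(k)}=\sum_{k=1}^{m-1}S_k\,(a_{(k)}-a_{(k+1)})$, where the boundary terms vanish because $S_0=S_m=0$. Bounding each gap below by $\gamma$ and each $S_k$ below by $0$ gives $\sum_{k=1}^m\eta_{(k)}a_{(k)}\ge\gamma\sum_{k=1}^{m-1}S_k$, so the last ingredient is the ``area'' bound $\sum_{k=1}^{m-1}S_k\ge m/2$ for a nonnegative bridge. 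I would prove this by counting: among $k\in\{1,\dots,m-1\}$ the integer $S_k$ is either $0$ or at least $1$, so $\sum_{k=1}^{m-1}S_k\ge\#\{k:S_k\ge1\}=(m-1)-r$, where $r$ is the number of intermediate returns to $0$; since consecutive zeros are at least two steps apart, the $r+1$ excursions have total length $m$, whence $r\le m/2-1$ and $(m-1)-r\ge m/2$. Combining the three pieces yields $(W\vec{\delta}_{\vec{y}})_i=\delta\sum_j W_{ij}\eta_j\ge\frac{\gamma\delta m}{2}$ on the nonnegativity event, which has probability at least $\frac{1}{1+m/2}$; the statement for $\vec{\delta}_{\vec{x}}$ and $\vec{y}$ is identical by symmetry.

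The main obstacle I anticipate is not the probabilistic step (which is exactly Lemma~\ref{lem:ballot}) but the deterministic area bound $\sum_{k=1}^{m-1}S_k\ge m/2$: it is what converts ``nonnegative walk'' into a quantitative lower bound of the right order, and it is tight, since the sawtooth path $+,-,+,-,\dots$ attains equality, which is reassuring. A secondary bookkeeping nuisance is the odd-support case, where the single zero coordinate of the noise must be discarded and the constants re-examined; I would handle it by running the walk on the $m-1$ signed coordinates and checking that the resulting bound is no weaker than the one stated.
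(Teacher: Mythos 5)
Your proposal is correct and follows essentially the same route as the paper: fix $i$, sort the row $W_{i\cdot}$ in decreasing order, apply the ballot/Catalan bound of Lemma~\ref{lem:ballot} to the partial sums of the revealed signs, and convert nonnegativity of that walk into the bound $\frac{\gamma\delta m}{2}$ via the minimum gap $\gamma$. The only difference is the final deterministic step: the paper pairs consecutive sorted entries and asserts $\sum_j W_{ij}\delta_j \ge \sum_{j=1}^{m/2}\left(W_{i(2j-1)}-W_{i(2j)}\right)\delta \ge \gamma\delta\frac{m}{2}$ as ``clear,'' whereas your summation-by-parts identity together with the excursion-count area bound $\sum_{k=1}^{m-1}S_k\ge m/2$ is a rigorous justification of exactly that step.
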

%%%%%%%%%%%%%%%%%%%%%%%%%%%%%%%%%%%%%%%%%%%%%%%%%%%%%%%%%%
%%%%%%%%%%%%%%%%%%%%%%%%%%%%%%%%%%%%%%%%%%%%%%%%%%%%%%%%%%
\begin{comment}
\begin{proof} Assume w.l.o.g that we have $W_{i1} \geq W_{i2} \geq ...$ (otherwise we permute them so that are in decreasing order). Consider the case where the signs are revealed one at a time, in the order of indices of the sorted row. The probability that $+$ signs dominate $-$ signs through the process is $\frac{1}{m/2+1}$ (ballot theorem/Catalan numbers) (see \ref{lem:ballot}). It is clear that when the $+$ signs dominate the $-$ signs then
\begin{align*}
(W\vec{\delta}_{\vec{y}})_i &= \sum_j^{m}W_{ij}\delta_j \ \ \geq\ \  \sum_{j=1}^{m/2} (W_{i(2j-1)}-W_{i(2j)})\delta \ \ \geq \ \
\gamma\delta\frac{m}{2} \end{align*}
\end{proof}
\end{comment}
%%%%%%%%%%%%%%%%%%%%%%%%%%%%%%%%%%%%%%%%%%%%%%%%%%%%%%%%%
%%%%%%%%%%%%%%%%%%%%%%%%%%%%%%%%%%%%%%%%%%%%%%%%%%%%%%%%%
%Ruta: Add description.

%%%%%%%%%%%%%%%%%%%%%%%%%%%%%%%%%%%%%%%%%%%%%%%%%%
% Submartingale, to be checked!!!

%%%%%%%%%%%%%%%%%%%%%%%%%%%%%%%%%%%%%%%%%%%%%%%%%%
%%%%%%%%%%%%%%%%%%%%%%%%%%%%%%%%%%%%%%%%%%%%%%%%%%
%Azuma%Ruta: explain why we need the following theorem and the lemma after that.
%{\color{red} \bf Probably Azuma, Borel Cantelli should be put in the preliminaries}

\noindent
We will also need the following theorem due to Azuma \cite{dubhashi09} on submartingales.

\begin{theorem}\label{thm:azuma} [Azuma \cite{dubhashi09}]. Suppose $\{ X_k, k=0,1,2,...,N\}$ is a submartingale and also $|X_k -
X_{k-1}|<c$ almost surely then for all positive integers $N$ and all $t>0$ we have that $$\Pr{X_N-X_0 \leq -t} \leq e^{-\frac{t^2}{2Nc^2}}$$
\end{theorem}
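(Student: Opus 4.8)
The plan is to prove this lower-tail bound for submartingales by the exponential-moment (Chernoff) method adapted to the martingale-difference structure. Write $Y_k = X_k - X_{k-1}$ for the increments and let $\mathcal{F}_{k-1}$ denote the $\sigma$-algebra generated by $X_0,\dots,X_{k-1}$; the submartingale hypothesis is exactly $\E{Y_k \mid \mathcal{F}_{k-1}} \geq 0$, while the bounded-difference hypothesis gives $|Y_k| < c$ almost surely. Since $X_N - X_0 = \sum_{k=1}^{N} Y_k$, for any $\lambda > 0$ Markov's inequality applied to $e^{-\lambda(X_N - X_0)}$ yields
$$\Pr{X_N - X_0 \leq -t} \leq e^{-\lambda t}\, \E{e^{-\lambda(X_N - X_0)}},$$
so it suffices to control the moment generating function of $-(X_N - X_0)$.

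First I would peel off one increment at a time using the tower property: conditioning on $\mathcal{F}_{N-1}$ and factoring out the $\mathcal{F}_{N-1}$-measurable term,
$$\E{e^{-\lambda(X_N - X_0)}} = \E{e^{-\lambda(X_{N-1} - X_0)}\, \E{e^{-\lambda Y_N} \mid \mathcal{F}_{N-1}}}.$$
The crux is therefore a bound on the conditional quantity $\E{e^{-\lambda Y_N} \mid \mathcal{F}_{N-1}}$. Setting $Z = -Y_N$, this is a random variable with $Z \in [-c,c]$ almost surely and conditional mean $\E{Z \mid \mathcal{F}_{N-1}} \leq 0$, precisely because $\{X_k\}$ is a submartingale.

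The key step — and the one requiring care — is a version of Hoeffding's lemma for a bounded variable with \emph{nonpositive} conditional mean: for $\lambda > 0$,
$$\E{e^{\lambda Z} \mid \mathcal{F}_{N-1}} \leq e^{\lambda^2 c^2/2}.$$
I would prove this by writing $\bar{Z} = Z - \E{Z \mid \mathcal{F}_{N-1}}$, a conditionally centered variable whose range still has length at most $2c$, so that the classical mean-zero Hoeffding bound (which itself follows from convexity of $z \mapsto e^{\lambda z}$ on the range together with a second-order expansion of the resulting log-moment-generating function) gives $\E{e^{\lambda \bar{Z}} \mid \mathcal{F}_{N-1}} \leq e^{\lambda^2 (2c)^2/8} = e^{\lambda^2 c^2/2}$. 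Since $\E{Z \mid \mathcal{F}_{N-1}} \leq 0$ and $\lambda > 0$ force $e^{\lambda \E{Z \mid \mathcal{F}_{N-1}}} \leq 1$, factoring out the constant conditional mean yields $\E{e^{\lambda Z} \mid \mathcal{F}_{N-1}} = e^{\lambda \E{Z \mid \mathcal{F}_{N-1}}}\,\E{e^{\lambda \bar{Z}} \mid \mathcal{F}_{N-1}} \leq e^{\lambda^2 c^2/2}$.

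Finally I would iterate the peeling bound $N$ times to obtain $\E{e^{-\lambda(X_N - X_0)}} \leq e^{N\lambda^2 c^2/2}$, hence $\Pr{X_N - X_0 \leq -t} \leq e^{-\lambda t + N \lambda^2 c^2/2}$, and optimize the exponent over $\lambda > 0$: the choice $\lambda = t/(Nc^2)$ gives exactly the claimed bound $e^{-t^2/(2Nc^2)}$. The main obstacle is the submartingale (rather than martingale) form of Hoeffding's lemma: one must verify that weakening the mean-zero hypothesis to a nonpositive conditional mean costs nothing for the lower tail, which is exactly where the sign of $\lambda$ and the direction of the submartingale inequality must line up.
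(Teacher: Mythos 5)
Your proof is correct. Note that the paper does not prove this statement at all: it is quoted as a known result (Azuma's inequality for submartingales) from the cited reference, and your argument — Chernoff's exponential-moment method, peeling increments by the tower property, a conditional Hoeffding lemma for a bounded increment with nonpositive conditional mean (where the sign of $\lambda$ makes the factor $e^{\lambda \E{Z \mid \mathcal{F}_{N-1}}} \leq 1$), and optimization at $\lambda = t/(Nc^2)$ — is exactly the standard proof found in such references, so there is nothing to reconcile between the two.
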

%%%%%%%%%%%%%%%%%%%%%%%%%%%%%%%%%%%%%%%%%%%%%%%%%%

Towards our main goal of showing polynomial time convergence of the noisy dynamics (\ref{eq.gn}) (shown in Theorem
\ref{thm:convergencetime}), we need to show that the fitness increases within a few iterations of the
dynamic with high probability. It suffices to show that the average fitness under some transformation is a submartingale,
and then the result will follow using Azuma's inequality. %(see Theorem \ref{thm:azuma}).

%In Theorem \ref{thm.convergencetime} we will show our main result that the dynamics converges in polynomial time. For this we
%Our primary goal is to show that the average fitness increases within a few iterations of the dynamics, shown in Theorem
%\ref{thm.convergencetime}.
%Our goal to prove main theorem 1, is to show that the average fitness has made progress along the iterations of the dynamics. To show
%that, we prove the following main lemma, which says that the average fitness under some transformation is a submartingale. Showing
%that, we are able to use Azuma inequality \ref{thm:azuma} and prove theorem \ref{thm:convergencetime}.

\begin{lemma}\label{lem:submartingale} Let $\Phi^{t}$ be the random variable which corresponds to the average fitness at time $t$.
Assume that for the time interval $t=0,...,2T$ the trajectory $(\xx(t),\yy(t))$ has the same support. Let $m =
\max\{|\supp(\xx(t))|,$ $|\supp(\yy(t))|\}$, and 
%whose size is with sizes $(m_{\vec{x}},m_{\vec{y}})$, $m = \max(m_{\vec{x}},m_{\vec{y}})$ \rr{(Q: What are these?)} and also the 
the non-zero entries of $(\xx(t),\yy(t))$ be at least $\delta$. If $\frac{1}{(m+2)}(\frac{\gamma\delta m}{2}-2\alpha)^2 \geq \delta
\alpha^2$ then we have that $$E[\Phi^{2t+2} | \Phi^{2t},...,\Phi^{0}] \geq \Phi^{2t}+C\delta \alpha^2. $$ In other words, the sequence
$Z^{t}\equiv \Phi^{2t}-t\cdot C\delta \alpha^2$ for $t=1,...,T$ is a submartingale and also $|Z^{t+1}-Z^{t}| \leq W_{\max}-W_{\min}$.  
\end{lemma}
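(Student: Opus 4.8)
The plan is to prove the two-step drift bound $E[\Phi^{2t+2}\mid \mathcal F_{2t}]\ge \Phi^{2t}+C\delta\alpha^2$ directly, where $\mathcal F_{2t}$ records the state up to time $2t$; the submartingale property of $Z^{t}=\Phi^{2t}-t\,C\delta\alpha^2$ is then immediate, and the bounded-difference estimate $|Z^{t+1}-Z^{t}|\le W_{\max}-W_{\min}$ follows because $\Phi=\xx^TW\yy$ always lies in $[W_{\min},W_{\max}]$ while the correction $C\delta\alpha^2$ is negligibly small. Throughout I condition on the state at time $2t$ and use the standing hypotheses that the support is frozen on $[0,2T]$ and every nonzero coordinate is at least $\delta$, so no point on the trajectory is negligible in the sense of Definition \ref{def:negligible}.

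Write $\mathrm{fit}(\vec u,\vec v)=\vec u^TW\vec v$ and $V(\vec u,\vec v)=\sum_i u_i\big((W\vec v)_i-\mathrm{fit}(\vec u,\vec v)\big)^2+\sum_i v_i\big((W^T\vec u)_i-\mathrm{fit}(\vec u,\vec v)\big)^2\ge 0$. Two facts drive the argument. First, by Lemma \ref{lem:inequality} a single application of $g$ increases fitness by at least $C\,V$, so it never decreases fitness. Second, by Lemma \ref{lem:zeroinexpectation} the subsequent $\pm\delta$ noise preserves fitness in expectation. Chaining the two steps $2t\to 2t+2$ through the tower property yields
\[
E[\Phi^{2t+2}\mid\mathcal F_{2t}]\ \ge\ \Phi^{2t}+C\,V\big(\xx(2t),\yy(2t)\big)+C\,E\!\left[V\big(\xx(2t+1),\yy(2t+1)\big)\,\middle|\,\mathcal F_{2t}\right],
\]
so it suffices to extract a contribution of size $\delta\alpha^2$ from one of the two variance terms.

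I then split on whether $(\xx(2t),\yy(2t))$ is $\alpha$-close. If it is not, then (being also non-negligible) Corollary \ref{cor:notaclose} gives $C\,V(\xx(2t),\yy(2t))\ge C\delta\alpha^2$ and the first step alone suffices. The substantive case is when $(\xx(2t),\yy(2t))$ is $\alpha$-close, so the first variance term may be tiny and all progress must come from the noise feeding the second term. Here I invoke Lemma \ref{lem:goodjump}: with probability at least $\tfrac{1}{1+m/2}$ the $\pm\delta$ perturbation of a coordinate block produces a support coordinate $i$ with $(W\ddelta)_i\ge \tfrac{\gamma\delta m}{2}$ (the ballot/Catalan estimate of Lemma \ref{lem:ballot}), and an extra factor $\tfrac12$ accounts for the noise landing on the correct one of $\xx,\yy$, giving probability at least $\tfrac{1}{m+2}$. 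On this good event the boosted coordinate exceeds the mean fitness by at least $\tfrac{\gamma\delta m}{2}-2\alpha$, forcing $V(\xx(2t+1),\yy(2t+1))$ to be of order $\delta\big(\tfrac{\gamma\delta m}{2}-2\alpha\big)^2$; taking the conditional expectation (good event of probability at least $\tfrac{1}{m+2}$) and invoking the hypothesis $\tfrac{1}{m+2}\big(\tfrac{\gamma\delta m}{2}-2\alpha\big)^2\ge\delta\alpha^2$ then delivers the claimed drift.

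I expect the main obstacle to be the bookkeeping behind the effective gap $\tfrac{\gamma\delta m}{2}-2\alpha$ in the $\alpha$-close case. Writing $(\hat\xx,\hat\yy)=g(\xx(2t),\yy(2t))$ and, say, $\yy(2t+1)=\hat\yy+\ddelta$, the gap at coordinate $i$ splits as $\big[(W\hat\yy)_i-\hat\xx^TW\hat\yy\big]+\big[(W\ddelta)_i-\hat\xx^TW\ddelta\big]$. The first bracket is at least $-\alpha$ provided $\alpha$-closeness is (approximately) inherited by $(\hat\xx,\hat\yy)$, which holds because $g$ barely moves an approximate fixed point; the mean-shift term obeys $|\hat\xx^TW\ddelta|=\big|\sum_j\big((W^T\hat\xx)_j-\hat\xx^TW\hat\yy\big)\delta_j\big|\le \alpha\sum_j|\delta_j|=\alpha\,m\delta\le\alpha$, where $m\delta\le 1$ since $m$ coordinates each at least $\delta$ sum to at most one. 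Making these two $\alpha$-controls precise, and correctly threading the good-event bound through the conditional expectation in the displayed inequality, is the delicate part; the purely probabilistic escape estimate is already packaged in Lemmas \ref{lem:ballot}--\ref{lem:goodjump}.
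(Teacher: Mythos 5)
Your architecture is the same as the paper's: the two-step drift inequality obtained by chaining Lemma \ref{lem:inequality} (the fitness gain of one application of $g$) with Lemma \ref{lem:zeroinexpectation} (mean-preservation of the noise), a case split on $\alpha$-closeness of $(\xx(2t),\yy(2t))$, Corollary \ref{cor:notaclose} for the non-$\alpha$-close case, and Lemmas \ref{lem:ballot}--\ref{lem:goodjump} for the escape in the $\alpha$-close case. Your $\alpha$-bookkeeping is also right, and in fact cleaner than the paper's: the decomposition $\big[(W\hat\yy)_i-\hat\xx^TW\hat\yy\big]+\big[(W\ddelta)_i-\hat\xx^TW\ddelta\big]$ together with $|\hat\xx^TW\ddelta|\le\alpha m\delta\le\alpha$ is exactly what justifies the effective gap $\frac{\gamma\delta m}{2}-2\alpha$.

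The genuine gap is in the very last step of the $\alpha$-close case. Because your good event produces only \emph{one} boosted support coordinate $i$, and all you can say about its weight is $x_i\ge\delta$, your bound is
\[
E\big[V(\xx(2t+1),\yy(2t+1))\,\big|\,\mathcal F_{2t}\big]\ \ge\ \frac{1}{m+2}\,\delta\Big(\frac{\gamma\delta m}{2}-2\alpha\Big)^2 ,
\]
and the stated hypothesis $\frac{1}{m+2}\big(\frac{\gamma\delta m}{2}-2\alpha\big)^2\ge\delta\alpha^2$ then yields a drift of only $C\delta^2\alpha^2$ --- a factor $\delta$ short of the claimed $C\delta\alpha^2$. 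As written, your argument proves the lemma only under the stronger assumption $\frac{1}{m+2}\big(\frac{\gamma\delta m}{2}-2\alpha\big)^2\ge\alpha^2$ (which happens to hold for the choice $\alpha\le\gamma\delta/4$ made in Theorem \ref{thm:convergencetime}, but is not what the lemma posits). The repair is not to demand more of a single event but to sum over coordinates: for each $i\in\supp(\xx)$ let $A_i$ be the event that the noise lands on $\yy$ and $(W\ddelta)_i\ge\frac{\gamma\delta m}{2}$; the ballot argument of Lemma \ref{lem:goodjump} applies row by row, so $\mathbb{P}[A_i]\ge\frac{1}{m+2}$ for every $i$, and your two deterministic $\alpha$-bounds make the $i$-th term of $V$ at least $\hat x_i\big(\frac{\gamma\delta m}{2}-2\alpha\big)^2$ on $A_i$ (on $A_i$ the noise hits $\yy$, so $\xx(2t+1)=\hat\xx$). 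Since all terms of $V$ are nonnegative, linearity of expectation gives
\[
E\big[V\,\big|\,\mathcal F_{2t}\big]\ \ge\ \sum_i \hat x_i\Big(\frac{\gamma\delta m}{2}-2\alpha\Big)^2\,\mathbb{P}[A_i]\ \ge\ \frac{1}{m+2}\Big(\frac{\gamma\delta m}{2}-2\alpha\Big)^2 ,
\]
because $\sum_i \hat x_i=1$; this matches the stated hypothesis and conclusion exactly. Incidentally, the paper's own proof instead asserts that on one event of probability $\frac{1}{m+2}$ \emph{all} support coordinates are boosted simultaneously; that cannot happen in the $\alpha$-close case (the values $(W\yy(2t+1))_i$ cannot all strictly exceed their own $\xx(2t+1)$-weighted average $\Phi^{2t+1}$, and it would contradict $|\hat\xx^TW\ddelta|\le\alpha$), so your per-coordinate reading is the sound one --- it just has to enter the bound through the sum above rather than through a single $x_i\ge\delta$ term.
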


Using all the above analysis and Azuma's inequality (Theorem \ref{thm:azuma}), we establish our first main result on convergence
time of the noisy dynamics governed by (\ref{eq.gn}) for sexual evolution under natural selection and without mutation.
%We proceed by the first main result of this paper, the convergence time of the dynamics without mutation.
%%%% Speed of convergence theorem

\begin{theorem} \label{thm:convergencetime}[Main 2] For all initial conditions $(\xx(0),\yy(0))\in \Delta$, the dynamics governed by
(\ref{eq.gn}) in an environment represented by fitness matrix $W$ %noisy population dynamics without mutation 
reaches a pure fixed point with probability $1 - \epsilon$ %(conditioned that the dynamics remains at the same environment)
after $O\left(\frac{(W_{\max})^4 n \ln (\frac{2n}{\epsilon})}{\delta^6\gamma^4}\right)$ iterations.  \end{theorem}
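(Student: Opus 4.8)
The plan is to reduce the convergence analysis to counting how often the support of $(\xx,\yy)$ can shrink, and to bound the time of each shrinkage using the submartingale machinery of Lemma \ref{lem:submartingale}. Under the noisy dynamics (\ref{eq.gn}), a coordinate that reaches zero stays zero (the noise $\ddelta$ has support equal to that of the current vector), and by the thresholding rule (\ref{eq.zero}) any coordinate falling below $\delta$ is zeroed; re-normalization only increases the surviving entries, so the invariant ``every non-zero coordinate is $\ge \delta$'' is preserved. Hence the combined support $|\supp(\xx)|+|\supp(\yy)|$ is non-increasing, starts at most $2n$, and equals $2$ exactly at a monomorphic (pure) state. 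So at most $2n-2$ support reductions can occur, and it suffices to show that while the support is fixed it reduces within a controlled number of steps with high probability, then union-bound over the $\le 2n$ phases.

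First I would fix the parameter $\alpha$. Taking $\alpha=\Theta(\gamma\delta)$, e.g.\ $\alpha=\gamma\delta/4$, I would verify the precondition $\frac{1}{m+2}\left(\frac{\gamma\delta m}{2}-2\alpha\right)^2\ge \delta\alpha^2$ of Lemma \ref{lem:submartingale} for every support size $m\ge 2$ (using $\delta<1$); for $m=2$ both sides equal $\gamma^2\delta^2/16$ up to the harmless factor $\delta$, and the inequality only improves for larger $m$. This is precisely the choice that makes the final exponents come out as stated, since $\alpha$ enters the running time as $\alpha^{-4}$.

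The core of the argument is a stopping-time construction. Within one phase let $\sigma$ be the first double-step at which the support strictly decreases. Up to $\sigma$ the support is constant and all non-zero entries remain $\ge\delta$, so Lemma \ref{lem:submartingale} supplies the per-double-step drift $\E{\Phi^{2t+2}\mid \Phi^{2t},\ldots,\Phi^{0}}\ge \Phi^{2t}+C\delta\alpha^2$. Consequently the frozen process $\tilde Z^{t}=\Phi^{2(t\wedge\sigma)}-(t\wedge\sigma)\,C\delta\alpha^2$ is a genuine (unconditional) submartingale with increments bounded by $c\defeq W_{\max}-W_{\min}$. On the event $\{\sigma>T\}$ the support has not shrunk, so $\tilde Z^{T}=\Phi^{2T}-T C\delta\alpha^2\le W_{\max}-T C\delta\alpha^2$ while $\tilde Z^{0}=\Phi^{0}\ge W_{\min}$, giving $\{\sigma>T\}\subseteq\{\tilde Z^{T}-\tilde Z^{0}\le -(T C\delta\alpha^2-c)\}$. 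Applying Azuma (Theorem \ref{thm:azuma}) with $s=T C\delta\alpha^2-c$ yields $\Pr{\sigma>T}\le e^{-s^2/(2Tc^2)}$, and choosing $T=\Theta\!\left(\frac{c^2\ln(2n/\epsilon)}{C^2\delta^2\alpha^4}\right)$ (so that $T C\delta\alpha^2\ge 2c$ and hence $s\ge T C\delta\alpha^2/2$) drives this below $\epsilon/(2n)$.

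Finally I would assemble the pieces. A union bound over the $\le 2n$ phases shows that with probability $\ge 1-\epsilon$ every phase terminates within $2T$ steps, so the total running time is $O(nT)$; substituting $C=\tfrac{3}{8W_{\max}}$, $c\le W_{\max}$, and $\alpha=\Theta(\gamma\delta)$ turns this into $O\!\left(\frac{(W_{\max})^4\, n\,\ln(2n/\epsilon)}{\delta^6\gamma^4}\right)$. Combined support $2$ means both $\xx$ and $\yy$ are pure; by Definition \ref{def:noise} such a monomorphic state carries zero noise and is fixed by (\ref{eq.g}), so it is a pure fixed point, completing the proof. The step I expect to be the main obstacle is the stopping-time construction: Lemma \ref{lem:submartingale} only delivers the drift \emph{conditional} on the support staying constant, so one must freeze the process at $\sigma$ to obtain an unconditional submartingale before Azuma applies, and one must carefully maintain the ``entries $\ge\delta$'' invariant through re-normalizations so that the drift bound is valid at the start of every new phase.
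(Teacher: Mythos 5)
Your proposal is correct and follows essentially the same route as the paper's proof: fix $\alpha=\gamma\delta/4$, use Lemma \ref{lem:submartingale} together with Azuma's inequality (Theorem \ref{thm:azuma}) to show that each support-reduction phase ends within $O\left(\frac{W_{\max}^2\ln(2n/\epsilon)}{C^2\delta^2\alpha^4}\right)$ double-steps with probability $1-\frac{\epsilon}{2n}$, and then union-bound over the at most $2n$ phases. Your stopping-time/frozen-process construction is just a more careful formalization of the step the paper handles informally (the paper argues that if the support did not shrink, the submartingale drift would push the fitness above $W_{\max}$, a contradiction), so the decomposition, key lemma, and parameter choices coincide with the paper's.
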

\begin{proof} 
It suffices to show that support size of the $\xx$ or $\yy$ reduces by one in a bounded number of iterations with at least $1 -
\frac{\epsilon}{2n}$ probability.  

Using Lemma \ref{lem:submartingale} we have that the random variable $\Phi^{2t} - t\cdot C\delta\alpha^2$ is a submartingale and since
$W_{\min}\leq \Phi^t \leq W_{\max}$ we use Azuma's inequality \ref{thm:azuma} and we get that  $$\Pr{\Phi^{2t} - t\cdot C\delta\alpha^2
\leq \Phi^0 - \lambda} \leq e^{-\frac{\lambda^2}{2t W^2_{max}}},$$ hence for $\lambda = \sqrt{2tW^2_{max}\ln (\frac{2n}{\epsilon})}$ we
get that the average fitness after $2t$ steps will be at least $\Phi^0-\sqrt{2tW^2_{max}\ln (\frac{2n}{\epsilon})}+ t\cdot C\delta
\alpha^2$ with probability at least $1 - \frac{\epsilon}{2n}$. By setting $t \geq \frac{8W^2_{\max}}{C^2\delta^2\alpha^4} \ln
\left(\frac{2n}{\epsilon}\right)$ we have that the average fitness at time $2t$ will be greater than $W_{\max}$ with probability $1 -
\frac{\epsilon}{2n}$, but since the potential is at most $W_{\max}$ for all vectors in the simplex, it follows that at some point the
frequency vector become {\em negligible}, { i.e.,} a coordinate of $\xx$ or $\yy$ became less than $\delta$. Hence, the probability
that the support size decreased during the process is at least $1 - \frac{\epsilon}{2n}$. 

By union bound (the initial support
size is at most $2n$) we conclude that dynamics (\ref{eq.gn}) reaches a pure
fixed point with probability $1 - \epsilon$ after $t$ iterations with $t = 2n\frac{8W^2_{\max}}{C^2\delta^2\alpha^4} \ln
\left(\frac{2n}{\epsilon}\right)$. Finally, for assumption $\frac{1}{(m+2)}(\frac{\gamma\delta m}{2}-2\alpha)^2 \geq
\delta \alpha^2$ used in Lemma \ref{lem:submartingale} to hold for $2 \leq m \leq n$, we set $\alpha$ 
to be such that $\alpha \leq \frac{\gamma\delta}{4}$ where we have $\frac{4(m-1)^2}{(m+2)} \geq1 > \delta$. Using such an $\alpha$ 
it follows that dynamics (\ref{eq.gn}) reaches a pure fixed point with probability $1 - \epsilon$ after $\frac{2^{18}}{9}\times \frac{
n W^4_{\max}}{\delta^6\gamma^4} \ln \left(\frac{2n}{\epsilon}\right)$ iterations.
\end{proof}

\section{Changing Environment: Survival or Extinction?}\label{sec:survival}
In this section we analyze how evolutionary pressures under changing environment may lead to survival/extinction depending on the underlying mutation level. Motivated from Wolf \etal work \cite{wolf2005diversity}, we use Markov chain based model to capture the
 changing environment,
where every state captures a particular environment (see Section \ref{sec:envchange} for details).

\subsection{Extinction without mutation} We show that the population goes extinct with probability one, if the evolution is governed
by (\ref{eq.gn}), { i.e.,} natural selection {\em without} mutations under sexual reproduction. 
%where changing environment and environment changes as per a Markov chain described in Section \ref{sec:envchange}. 
The proof of this result critically relies on polynomial-time convergence to monomorphic population shown in
Theorem \ref{thm:convergencetime} in case of fixed environment. 

As discussed in Section \ref{sec:envchange}, we have assume that the Markov chain is such that 
no individual can be fit to survive in all environments. Formally,
\begin{equation}\label{eq.env1}
\forall i,j,\ \ \ \prod_{e \in \CE} (W_{ij}^e)^{\pi_e} <1.
\end{equation}
Thus, if we can show convergence to monomorphic population under evolving environments as well then the extinction is guaranteed using
(\ref{eq.env1}) and the fact that population size $N^t$ gets multiplied by current average fitness (see (\ref{eq.fit})). 
%as no single genotype is fit for all the environments (due to (\ref{eq.env1})), and hence the population dies out.
However, showing convergence in stochastically changing environment is tricky because environment can change in any step with some
probability and then the argument described in the previous section breaks down. To circumvent this we will make use of Borel-Cantelli theorem where we say that {\em an event happens} if environment remains unchanged for a large but fixed number of steps.

%changing environment posses another challenge, namely why still it is this result is for a fixed environment, while now
%%Using the main theorem of the previous section, one can show that eventually a pure fixed point will be reached.
%This can proved with the help of the well-known (second) Borel Cantelli theorem.
%Finally, as long as a pure fixed point is reached, it is easy to see that under the assumptions \ref{eq.env1}, the population dies out.
%%%%%%%%%%%%%%%%%%%%%%%%%%%%%%%%%%%%%%%%%%%%%%%%%%%%%%
%%%%%%Borel Cantelli second theorem
%\noindent
\begin{theorem}\label{thm:borelcantelli} [Second Borel-Cantelli \cite{feller2008introduction}] Let $E_1,E_2,...$ be a sequence of
events. If the events $E_n$ are independent and the sum of the probabilities of the $E_n$ diverges to infinity, then the probability
that infinitely many of them occur is 1.
\end{theorem}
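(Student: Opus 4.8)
The plan is to prove the equivalent statement that the event ``only finitely many $E_n$ occur'' has probability zero. I write the target event as the limit superior $\limsup_n E_n = \bigcap_{m}\bigcup_{n \ge m} E_n$, whose complement is $\bigcup_{m}\bigcap_{n \ge m}\comp{E_n}$. By countable subadditivity it then suffices to show $\Pr{\bigcap_{n \ge m}\comp{E_n}} = 0$ for each fixed $m$, since a countable union of null events is null.

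First I would fix $m$ and control the finite intersections. Because the $E_n$ are independent, so are their complements, and hence for any $N \ge m$ we have $\Pr{\bigcap_{n=m}^{N}\comp{E_n}} = \prod_{n=m}^{N}\inp{1 - \Pr{E_n}}$. Applying the elementary bound $1 - x \le e^{-x}$ to each factor yields $\prod_{n=m}^{N}\inp{1 - \Pr{E_n}} \le \exp\inp{-\sum_{n=m}^{N}\Pr{E_n}}$. Since by hypothesis $\sum_n \Pr{E_n}$ diverges, the tail sums $\sum_{n=m}^{N}\Pr{E_n} \to \infty$ as $N \to \infty$, so this upper bound tends to $0$.

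Next I would pass from finite to infinite intersections by continuity of probability from above: the sets $\bigcap_{n=m}^{N}\comp{E_n}$ are nested and decreasing in $N$ with intersection $\bigcap_{n \ge m}\comp{E_n}$, so $\Pr{\bigcap_{n \ge m}\comp{E_n}} = \lim_{N \to \infty}\Pr{\bigcap_{n=m}^{N}\comp{E_n}} = 0$. Summing these null events over $m$ gives $\Pr{\bigcup_{m}\bigcap_{n \ge m}\comp{E_n}} = 0$, and taking complements delivers $\Pr{\limsup_n E_n} = 1$, which is precisely the claim that infinitely many $E_n$ occur almost surely.

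The argument is largely routine, and I expect the only genuinely load-bearing step to be the conversion of the probability of the intersection into the product $\prod\inp{1-\Pr{E_n}}$, since this is the unique place where independence is used — by contrast the first Borel--Cantelli lemma needs no such hypothesis. The inequality $1 - x \le e^{-x}$ is the mechanism that turns the divergent series into a vanishing product, so conceptually the crux is simply recognizing that divergence of $\sum_n \Pr{E_n}$ forces $\prod_n\inp{1 - \Pr{E_n}} = 0$.
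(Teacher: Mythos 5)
Your proof is correct and complete. Note, however, that the paper itself gives no proof of this statement: it is quoted as a classical result (the second Borel--Cantelli lemma) with a citation to Feller, and then used as a black box in the proof of Theorem \ref{thm:nomutationdie}. Your argument is precisely the standard textbook proof found in the cited literature --- independence turns the probability of $\bigcap_{n=m}^{N}\comp{E_n}$ into the product $\prod_{n=m}^{N}\inp{1-\Pr{E_n}}$, the bound $1-x\le e^{-x}$ converts divergence of $\sum_n \Pr{E_n}$ into vanishing of that product, and continuity from above plus countable subadditivity finish the job --- so there is nothing to reconcile between your route and the paper's.
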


%%%%%%%%%%%%%%%%%%%%%%%%%%%%%%%%%%%%%%%%%%%%%%%%%%%%%%%%
%%%%%% Population dies theorem

Using the above theorem with appropriate {\em event} definition, we prove the first part of the main result stated in {Theorem 1}. 

\begin{theorem}\label{thm:nomutationdie}[Main 1a] Regardless of the 
initial distributions $(\vec{x}(0),\vec{y}(0)) \in \Delta$, the population goes extinct with probability one 
under dynamics governed by (\ref{eq.gn}), capturing sexual evolution without mutation under natural selection. 
\end{theorem}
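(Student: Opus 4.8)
The plan is to prove the theorem in two independent pieces: (i) with probability one the population eventually becomes \emph{monomorphic}, and (ii) once monomorphic, the sole surviving genotype is driven extinct by the fluctuating environment. The engine for (i) is the static-environment bound of Theorem~\ref{thm:convergencetime}. Since $\CE$ is finite, I would first take the largest $W_{\max}$ and smallest $\gamma$ over all environments $e \in \CE$ and substitute these worst-case constants into Theorem~\ref{thm:convergencetime} with $\epsilon = \tfrac12$. This produces a single \emph{uniform} window length $T$ with the property that, starting from \emph{any} distribution in $\Delta$, if the environment is held fixed for $T$ consecutive generations, then the noisy dynamics (\ref{eq.gn}) reaches a pure fixed point with probability at least $\tfrac12$ (the randomness here being the dynamical noise $\ddelta$).

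Next I would partition the time axis into consecutive blocks of length $T$ and let $E_k$ be the event that the environment does not change during block $k$. Because the environment remains unchanged at each step with probability at least $1-p$, we have $\Pr[E_k \mid \mathcal{F}_{kT}] \ge (1-p)^T > 0$ for every $k$, where $\mathcal{F}_{kT}$ is the history up to the start of the block; hence (via Theorem~\ref{thm:borelcantelli}, after addressing independence as discussed below) the $E_k$ occur infinitely often almost surely. Conditioned on $E_k$, the block is a frozen static environment, so the population has an independent probability at least $\tfrac12$ of reaching a monomorphic state. Crucially, monomorphism is absorbing under (\ref{eq.gn}): once a coordinate is zero it stays zero (no mutation), and a pure vector carries zero noise by Definition~\ref{def:noise}. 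Since the dynamical noise is independent across blocks, the chance of still not being monomorphic after $k$ frozen blocks is at most $2^{-k} \to 0$, so the population becomes monomorphic with probability one.

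For (ii), suppose the population has become monomorphic of genotype $(i,j)$ at some almost-surely finite time $t_0$. Thereafter the population size multiplies by $W_{ij}^{e(t)}$ each step (see (\ref{eq.fit})), so $\log N^t = \log N^{t_0} + \sum_{s=t_0+1}^{t} \log W_{ij}^{e(s)}$. The environment chain is ergodic, so the strong law of large numbers for Markov chains yields $\tfrac1t \sum_{s} \log W_{ij}^{e(s)} \to \sum_{e} \pi_e \log W_{ij}^e = \log \prod_{e}(W_{ij}^e)^{\pi_e}$ almost surely. By assumption (\ref{eq.env1}) this limiting drift is strictly negative, whence $\log N^t \to -\infty$, and in particular $N^t < 1$ for some finite $t$; the population goes extinct.

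I expect the main obstacle to be the independence requirement in the Borel--Cantelli step: since the environment evolves as a Markov chain whose per-step ``stay'' probability may vary with the current environment, the block events $E_k$ are not literally independent, so Theorem~\ref{thm:borelcantelli} does not apply verbatim. I would circumvent this by coupling the chain to an auxiliary i.i.d.\ sequence of Bernoulli$(1-p)$ ``stay'' indicators: whenever the indicator for step $s$ fires, the environment is forced to remain unchanged (legitimate because each stay probability is at least $1-p$). The events $\tilde E_k$ that all $T$ indicators in block $k$ fire are then genuinely independent with $\Pr[\tilde E_k] = (1-p)^T$, satisfy $\tilde E_k \subseteq E_k$, and have divergent probability sum, so the classical Borel--Cantelli theorem gives $\tilde E_k$ infinitely often, which suffices. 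The only remaining care is to chain the per-block fixation probabilities across infinitely many frozen blocks, which is immediate since the dynamical noise is independent in time and independent of the environment process.
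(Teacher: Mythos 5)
Your proposal is correct and follows essentially the same route as the paper's proof: a uniform window length $T$ obtained from Theorem~\ref{thm:convergencetime} (the paper takes $T=\max_e T^e$, you take worst-case constants, which is equivalent), Borel--Cantelli over consecutive blocks of length $T$ to conclude almost-sure convergence to a monomorphic state, and then assumption~(\ref{eq.env1}) together with ergodicity of the environment chain to drive the sole surviving genotype extinct. If anything your write-up is tighter than the paper's in two places: the paper simply asserts that the block events $E_i$ are independent (which is not literally true when the per-step stay probability varies with the environment, exactly the issue your i.i.d.\ Bernoulli$(1-p)$ coupling repairs), and the paper's extinction step uses an informal ``roughly at most'' estimate where you invoke the Markov-chain strong law of large numbers explicitly.
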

\begin{proof} Let $T^e$ be the number of iterations the dynamics (\ref{eq.gn}) need to reach a pure fixed point with probability
$\frac{1}{2}$. Theorem \ref{thm:convergencetime} implies $T^e = O\left ( \frac{ n W^{e^4}_{\max}}{\delta^6\gamma^{e^4}}\ln 4n\right )$. 
Let $T=\max_e T^e$. We consider the time intervals $1,...,T$, $T+1,...,2T$,... which are multiples of
$T$. The probability that Markov chain will remain at a specific environment $e$ in the time interval $kT+1,...,(k+1)T$ is $\rho_k =
(1-p)^T$. We define the sequence of events $E_1,E_2,...,$ where $E_i$ corresponds to the fact that the chain remains in the same
environment from time $(i-1) T+1,...,i T$. It is clear that $E_i$'s are independent and also $\sum_{i=1}^{\infty} \Pr{E_i} =
\sum_{i=1}^{\infty} \rho_i = \infty$. From Borel-Cantelli Theorem \ref{thm:borelcantelli} it follows that $E_i$'s happen
infinitely often with probability 1. When $E_i$ happens there is a time interval of length $T$ that the chain remains in the same
environment and therefore with probability $\frac{1}{2}$ the dynamics will reach a pure fixed point. After
$E_i$ happen for $k$ times, the probability to reach a pure fixed point is at least $1 - \frac{1}{2^k}$. Hence with probability one
(letting $k \to \infty$), the dynamics (\ref{eq.gn}) will eventually reach a pure fixed point. % (while the chain moves along the environments).

To finish the proof, let $T_{pure}$ be a random variable that captures the time when a pure fixed point, say $(i,j)$, is reached.
The population will have size at most $N^0 V^{T_{pure}}$ where $V = \max_e W^e_{\max}$. Under the assumption on the entries (see
inequality (\ref{eq.env1})) it follows that at any time $T'$ sufficiently large 
%(much greater than the mixing time of the chain \rr{(Q: why this sentence?)}) 
we get that the population at time $T'+T_{pure}$ will be roughly at most $$N^0 V^{T_{pure}} \prod_{e}(W^e_{ij})^{T'\pi_e} = N^0 V^{T_{pure}}
\left(\prod_{e}(W^e_{ij})^{\pi_e}\right )^{T'}.$$ By choosing $T' \geq \frac{\ln (N^0V^{T_{pure}})}{- \ln \left((W^e_{ij})^{\pi_e}\right )}$
(and also satisfying the constraint that is much greater than the mixing time) it follows that $N^{T'+T_{pure}} <1$ and hence the population dies.
So, the population goes extinct with probability one in the dynamics without mutation.  \end{proof}

\subsection{Survival with mutation}
In this section we consider evolutionary dynamics governed by (\ref{eq.fn}) capturing sexual evolution {\em with} mutation under
natural selection. Contrary to the case where there are no mutations we show that population survives
with positive  probability. Furthermore, this result turns out to be
robust in the sense that it holds even when every environment has some (few) very bad type alleles. Also, the result is independent of
the starting distribution of the population. 

The main intuition behind  proving this result is that, as for the mutation model in \cite{Hofbauer98}, every
allele is carried by at least $\tau$ fraction of the population in every generation. Therefore even if  a``good'' allele becomes ``bad''
as the environment changes, as far as the new environment has a few fit alleles, there will be some individuals carrying those who will then
procreate fast, spreading their alleles further and leading to overall survival. However, unlike in the no mutation case \cite{ITCS15MPP}, average
fitness is no more a potential function even for non-noisy dynamics, { i.e.,} it may decrease, and therefore showing such an
improvement is tricky.

%We come up with a different potential function, namely $(\vec{x}^TW\vec{y})^{1-n\tau} \prod x_i^{\tau}\prod y_i^{\tau}$ (see Lemma
%\ref{mutationconverge}), and show convergence. This does not help to understand survival/extiction of the population,
%because population density still changes by multiplicative factor of the average fitness in each generation, and therefore we still
%need to track how the fitness is changing with time.

First we show that if some small amount of time is spent in an environment then the frequencies of the bad alleles become small
and their effect is negligible, independent of the population distribution at the time when this environment was entered. 
Recall the assumption on good/bad type alleles (Section \ref{sec:envchange}). Formally, let $B^e_i$ be the set of bad type alleles
for $i=1,2$ in environment $e$,

\begin{equation}\label{ass:avfitness}%\label{eq.bad}
\begin{array}{c}
\forall i \in S_1\setminus B^e_1,\ \frac{\sum_j W^e_{ij}}{n} \ge 1+\beta,\ \ \ \mbox{and}\ \ \ \forall i \in S_1\setminus B^e_1, \forall
k \in B^e_1, W^e_{ij}\ge W^e_{kj},\ \forall j \\
\forall j \in S_2\setminus B^e_2,\ \frac{\sum_i W^e_{ij}}{n} \ge 1+\beta,\ \ \ \mbox{and}\ \ \ \forall j \in S_2\setminus B^e_2, \forall
k \in B^e_2, W^e_{ij}\ge W^e_{ik},\ \forall i \\
\end{array}
\end{equation}

\begin{lemma} \label{lem:badallele}Suppose that the environment $e$ is static for time at least $t \geq \frac{\ln (2n)}{n \tau}$.
For any $(\vec{x}(0),\vec{y}(0)) \in \Delta$, we have that $\sum_{i \in B^e_1} x_{i}(t) + \sum_{j \in B^e_2} y_{i}(t) \leq  \frac{2(|B^e_1|
+ |B^e_2|)}{n} = \frac{2|B^e|}{n}$ with $B^e = B^e_1\cup B^e_2$.
\end{lemma}

Using the fact that number of individuals with bad type alleles decreases very fast, established in Lemma \ref{lem:badallele}, we can prove that
within an environment  while there may be decrease in average fitness initially, this decrease is lower bounded. Moreover, it will later increase fast
enough so that the initial decrease is compensated.

\begin{lemma}\label{lem:phasetransition} Suppose that the environment $e$  is static for time $t$ and also $\tau \leq \frac{\beta}{16n}$, $
|B^e| \ll n \beta$ then  there exists a threshold time $T_{thr}$ such that for any given initial distributions of the alleles
$(\vec{x}(0),\vec{y}(0)) \in \Delta$, if $t < T_{thr}$ then the population size will experience a loss factor of at most $\frac{1}{d}$,
otherwise it will experience a gain factor of at least $d$ for some $d>1$, where $T_{thr}=\frac{6  \ln (2n)}{n\tau\beta W_{\min}}$ and
$W_{\min} = \min_e W_{\min}^e$. %({\bf Ruta: Define d}) 
\end{lemma}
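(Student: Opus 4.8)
The plan is to follow the population multiplicatively and split the time spent in environment $e$ into a short ``transient'' phase and a long ``growth'' phase. By (\ref{eq.fit}) we have $N^t = N^0\prod_{s=1}^{t}\Phi^s$ with $\Phi^s = \vec{x}(s)^TW\vec{y}(s)$, so it suffices to control the partial products $\prod_{s=1}^{t}\Phi^s$: a loss factor at most $\frac1d$ means this product never drops below $\frac1d$ while $t<T_{thr}$, and a gain factor at least $d$ means it is $\geq d$ for $t\geq T_{thr}$. First I would invoke Lemma \ref{lem:badallele}: after $t_1 \defeq \frac{\ln(2n)}{n\tau}$ steps the total frequency of bad alleles in each gene is at most $\frac{2|B^e|}{n}$, which is negligible compared to $\beta$ since $|B^e|\ll n\beta$; I would also check this smallness persists, since selection cannot inflate the (pointwise dominated) bad alleles and mutation reinjects only $\tau|B^e_i|$ mass per step.

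Next I would extract the two fitness facts that power the argument, using only assumption (\ref{ass:avfitness}) together with the small bad mass. Averaging the column condition gives, for any $\vec{y}$ with small bad mass, $\frac1n\sum_i (W\vec{y})_i = \sum_j y_j\frac{\sum_i W_{ij}}{n}\geq 1+\frac{\beta}{2}$, so both the uniform fitness $\vec{u}^TW\vec{y}$ (with $\vec{u}$ the uniform distribution) and the best response $\max_i (W\vec{y})_i$ exceed $1+\frac{\beta}{2}$; symmetrically the row condition yields $\max_j(W^T\vec{x})_j\geq 1+\frac{\beta}{2}$ and $\vec{u}^TW\vec{u}\geq 1+\frac{\beta}{2}$. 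In particular the fully mutated (uniform) state already has average fitness above $1$, which is the structural reason survival becomes possible.

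The heart of the proof, and the step I expect to be the main obstacle, is establishing a strictly positive drift of the average fitness whenever $\Phi^t<1$ (for $t\geq t_1$), by combining three contributions. (i) \emph{Selection}: Lemma \ref{lem:inequality} lower bounds the one-step fitness increase by $C\sum_i x_i\bigl((W\vec{y})_i-\Phi^t\bigr)^2$; since mutation guarantees every coordinate is at least $\tau$ and the best allele $i^\ast$ has gap $(W\vec{y})_{i^\ast}-\Phi^t\geq (1+\frac\beta2)-1=\frac\beta2$, this is at least $C\tau\frac{\beta^2}{4}$. (ii) \emph{Mutation}: the map of (\ref{eq.fs}) sends $\vec{x}\mapsto (1-n\tau)\hat{\vec{x}}+n\tau\vec{u}$, i.e.\ it mixes in the uniform distribution, whose fitness exceeds $1>\Phi^t$, so (accounting for the bilinear cross terms via the facts of the previous paragraph) it pushes $\Phi$ upward by $\Omega\bigl(n\tau((1+\frac\beta2)-\Phi^t)\bigr)=\Omega(n\tau\beta)$. (iii) \emph{Noise}: by a computation as in Lemma \ref{lem:zeroinexpectation} the perturbation in (\ref{eq.fn}) is zero-mean for the fitness and, being of order $\delta=o_n(\tau)$, is lower order than (i)--(ii). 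Hence $\Phi^{t+1}\geq\Phi^t+\Omega(n\tau\beta)$ while $\Phi^t<1$. The delicate part will be assembling these three effects into one clean bound, and then arguing that once $\Phi$ crosses $1$ it cannot fall back below it: the drift stays upward as long as $\Phi<1+\frac\beta2$, while any single-step decrease is at most $2n\tau W_{\max}$, so the fitness is trapped at or above $1$.

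Finally I would cash in the drift. Starting from $\Phi^{t_1}\geq W_{\min}$, the per-step increase $\Omega(n\tau\beta)$ forces $\Phi$ above $1$ within $t_1+O(\frac{1}{n\tau\beta})$ steps, a window that is at most $T_{thr}=\frac{6\ln(2n)}{n\tau\beta W_{\min}}$ by a direct comparison of the two quantities. Throughout this transient the trivial bound $\Phi^s\geq W_{\min}$ caps the cumulative loss at $W_{\min}^{\,t_1+O(1/(n\tau\beta))}$, which I would take to define $\frac1d$ and which proves the loss statement for $t<T_{thr}$. Once $\Phi$ exceeds $1$ it stays $\geq 1$ by the trapping argument, so the population grows at every subsequent step; comparing the positive log-fitness accumulated over the $\Theta(T_{thr})$ growth steps (at average rate $\Omega(\beta)$) against the bounded transient loss shows $\prod_{s=1}^{t}\Phi^s\geq d$ for all $t\geq T_{thr}$ with the same $d$, using that $T_{thr}\beta$ dominates $t_1|\ln W_{\min}|$. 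This yields the claimed phase transition at $T_{thr}$.
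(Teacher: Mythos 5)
Your outline retraces the paper's own proof almost step for step: first Lemma \ref{lem:badallele} to shrink the bad-allele mass; then the decomposition of the post-mutation fitness via mixing with the uniform distribution $\vec{u}$, i.e.\ $\tilde{\vec{x}}^TW\tilde{\vec{y}} = (1-n\tau)^2\hat{\vec{x}}^TW\hat{\vec{y}} + (1-n\tau)n\tau\left[\hat{\vec{x}}^TW\vec{u} + \vec{u}^TW\hat{\vec{y}}\right] + (n\tau)^2\vec{u}^TW\vec{u}$, whose cross terms supply the $\Omega(n\tau\beta)$ push when $\Phi$ is below $1+\Theta(\beta)$ (this is exactly the paper's main computation); then a trapping argument; then the time accounting that defines $d$ from the transient loss and compares it to the growth phase. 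Two of your ingredients differ only cosmetically: your item (i), the quantitative selection drift from Lemma \ref{lem:inequality}, contributes only $O(\tau\beta^2)$ and is dominated by the mutation cross term, so it is superfluous --- the paper uses nothing more than the monotonicity $\hat{\vec{x}}^TW\hat{\vec{y}} \geq \vec{x}^TW\vec{y}$ of the selection step; and your additive drift statement is equivalent to the paper's multiplicative factor $1+n\tau\frac{\beta}{2+\beta}$.

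There is, however, one genuine gap in your final assembly. Your trapping argument concludes only that $\Phi$ ``stays $\geq 1$,'' yet your last step charges the growth phase ``at average rate $\Omega(\beta)$'' per step --- which requires $\Phi \geq 1+\Omega(\beta)$, not merely $\Phi \geq 1$. If all you retain is $\Phi\geq 1$, the product over the growth steps is only $\geq 1$ and can never compensate the transient loss $\frac1d$, so the gain claim for $t\geq T_{thr}$ does not follow as written. The paper closes exactly this hole by trapping at the level $1+\frac{\beta}{4}$: while $\Phi \leq 1+\frac{\beta}{2}$ the fitness increases by the factor $1+n\tau\frac{\beta}{2+\beta}$, and from $\Phi \geq 1+\frac{\beta}{2}$ a single step loses at most a \emph{multiplicative} factor $(1-2n\tau)$, so the next value is at least $(1-2n\tau)\left(1+\frac{\beta}{2}\right) \geq 1+\frac{\beta}{4}$ when $\tau \leq \frac{\beta}{16n}$; hence every growth step contributes $\ln\left(1+\frac{\beta}{4}\right) = \Omega(\beta)$ to the log-population. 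Note also that your \emph{additive} one-step decrease bound $2n\tau W_{\max}$ is the wrong form for this purpose: it makes the trapping threshold depend on $W_{\max}$ (from just above $1+\frac{\beta}{2}$ one could fall below $1$ whenever $W_{\max}>4$), whereas the multiplicative bound $(1-2n\tau)\Phi$, which your own decomposition yields for free since the cross terms are nonnegative, works for any fitness matrix. With the trapping strengthened to $1+\frac{\beta}{4}$ in this way, the rest of your accounting goes through and recovers the paper's choice of $d$ and $T_{thr}$.
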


To show the second part of {Theorem 1} (main result),  
we will couple the random variable corresponding to the number of individuals
at every iteration with a biased random walk on the real line. This can be done since in Lemma \ref{lem:phasetransition} we established that the
decrease and increase in average fitness is upper and lower bounded, respectively. We will apply the following well-known lemma about the
biased random walks.

\begin{lemma}\label{lem:biased} (Biased random walk) Assume we perform a random walk on the real line, starting from point $k \in
\mathbb{N}$ and going right (+1) with probability $q>\frac{1}{2}$ and left (-1) with probability $1-q$. The probability that we will
eventually reach 0 is $\left(\frac{1-q}{q}\right)^k$.  \end{lemma}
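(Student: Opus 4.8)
The plan is to reduce the computation to a single hitting probability and then solve a quadratic obtained by first-step analysis. Let $p_k$ denote the probability that the walk, started at $k$, ever reaches $0$. First I would exploit the spatial homogeneity of the walk: in order to descend from $k$ to $0$, the walk must first descend from $k$ to $k-1$, then from $k-1$ to $k-2$, and so on, and each such unit descent is an independent copy of the event ``started one unit above a target, ever reach that target.'' Since the increments are i.i.d.\ and the walk is translation invariant, the strong Markov property gives the multiplicative identity $p_k = (p_1)^k$, so it suffices to compute $r \defeq p_1$.

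To find $r$, I would condition on the first step. From position $1$, with probability $1-q$ the walk steps to $0$ immediately (success), and with probability $q$ it steps to $2$, from which it must subsequently descend two units to reach $0$, an event of probability $p_2 = r^2$ by the previous paragraph. This yields
$$r = (1-q) + q\, r^2,$$
i.e.\ $q r^2 - r + (1-q) = 0$, whose discriminant is $(1-2q)^2$ and whose roots are therefore $r = 1$ and $r = \frac{1-q}{q}$.

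The main obstacle --- indeed the only subtle point --- is selecting the correct root. Since $q > \tfrac12$, each step has strictly positive mean $2q - 1 > 0$, so by the strong law of large numbers the walk tends to $+\infty$ almost surely; in particular it reaches $0$ with probability strictly less than $1$, forcing $r < 1$ and hence $r = \frac{1-q}{q}$ (note that $\frac{1-q}{q} < 1$ precisely when $q > \tfrac12$). Substituting into $p_k = r^k$ then gives $p_k = \left(\frac{1-q}{q}\right)^k$, as claimed.

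As a fully elementary alternative that avoids invoking the law of large numbers, I would instead place an absorbing barrier at a large level $N$ and solve the resulting two-point boundary value problem. Writing $h_k$ for the probability of hitting $0$ before $N$ from $k$, the recurrence $h_k = q\, h_{k+1} + (1-q) h_{k-1}$ with $h_0 = 1$, $h_N = 0$ has characteristic roots $1$ and $r = \frac{1-q}{q}$, giving $h_k = \frac{r^k - r^N}{1 - r^N}$. Letting $N \to \infty$ and using $r < 1$ so that $r^N \to 0$, while noting that the events ``hit $0$ before $N$'' increase monotonically to ``ever hit $0$,'' recovers the same limit $r^k$ and completes the proof.
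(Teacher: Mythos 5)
Your proof is correct. There is nothing in the paper to compare it against: the paper invokes this lemma as a well-known fact about biased random walks (it is the classical gambler's ruin computation) and supplies no proof, either in the body or in the appendix of omitted proofs. Both of your routes are sound, and the second (absorbing barrier at $N$, solving the two-point boundary value problem, then letting $N \to \infty$ with monotone convergence of the events) is fully rigorous and self-contained. One small refinement for the first route: the step from ``$S_n \to +\infty$ almost surely'' to ``the walk reaches $0$ with probability strictly less than $1$'' deserves one more line, since a single visit to $0$ is compatible with eventual divergence to $+\infty$. The clean way to rule out the root $r=1$ is to iterate: if $r = 1$, then by translation invariance and the strong Markov property the walk would almost surely descend one unit, hence almost surely descend below every level, giving $\liminf_n S_n = -\infty$, which contradicts almost sure divergence to $+\infty$ (a path tending to $+\infty$ is bounded below). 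With that one-line patch, or simply by relying on your barrier argument, the proof is complete.
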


Using Lemma \ref{lem:phasetransition} together with the biased random walk Lemma \ref{lem:biased}, we show our next main result on
survival of population under mutation in the following theorem.

\begin{theorem}\label{thm:mutationsurvive}[Main 1b] If $p < \frac{1}{2T_{thr}}$ where $T_{thr}= \frac{6\ln (2n)}{n\tau \beta
W_{\min}}$ then the probability of survival is at least $1-\left(\frac{pT_{thr}}{1-pT_{thr}}\right)^{c\ln N^0 }$ for some $c$
independent of $N^0$, $c = \left(\frac{n\tau W_{\min}}{\ln (2n)}\right)$.  \end{theorem}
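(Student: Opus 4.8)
The plan is to follow the population size only at the boundaries between successive \emph{epochs} (maximal time intervals during which the environment matrix stays fixed), and to show that, because of the drift supplied by Lemma~\ref{lem:phasetransition}, the logarithm of the population size behaves like an upward-biased random walk, so that Lemma~\ref{lem:biased} applies essentially verbatim.

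First I would chunk the time axis into epochs. Within one epoch the environment is a single matrix $W^e$, so Lemma~\ref{lem:phasetransition} applies, and crucially it applies for \emph{any} allele distribution present when the epoch begins; hence one epoch's effect on $\log N$ depends on the past only through the current population size. Call an epoch \emph{long} if its length is at least $T_{thr}$ and \emph{short} otherwise. By Lemma~\ref{lem:phasetransition} a long epoch multiplies $N^t$ by a factor at least $d>1$ and a short epoch by a factor at least $1/d$; in units of $\log_d$, a long epoch raises $\log_d N$ by at least $+1$ and a short epoch lowers it by at most $1$. The probability that a given epoch is short is the probability the environment changes within its first $T_{thr}$ steps, which by a union bound over $T_{thr}$ steps is at most $p\,T_{thr}$. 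The hypothesis $p<\frac{1}{2T_{thr}}$ gives $pT_{thr}<\tfrac12$, so each epoch is long with probability at least $q\defeq 1-pT_{thr}>\tfrac12$.

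Next I would couple the true process with a clean biased random walk. Let $S_m$ be the walk started at $k\defeq \log_d N^0=\frac{\ln N^0}{\ln d}$ that moves $+1$ with probability $q$ and $-1$ with probability $1-q$. Using the Markov property of the environment chain, the long/short indicators can be coupled to independent $\mathrm{Bernoulli}(q)$ trials that they dominate from below, so that $\log_d N \ge k+\sum_{j\le m}S_j$ at every epoch boundary $m$. Because a short epoch can depress $\log_d N$ by at most $1$ \emph{within} the epoch as well, the population can drop below $1$ only if $\log_d N$ reaches (essentially) $0$ at some boundary, which forces the dominating walk to hit $0$. By Lemma~\ref{lem:biased} with ratio $\frac{1-q}{q}=\frac{pT_{thr}}{1-pT_{thr}}$ and starting height $k$, the probability the walk ever reaches $0$ is $\left(\frac{pT_{thr}}{1-pT_{thr}}\right)^{k}$. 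Substituting $k=\frac{\ln N^0}{\ln d}$ and recording that the constant $c=1/\ln d$ equals $\frac{n\tau W_{\min}}{\ln(2n)}$ once the gain factor $d$ of Lemma~\ref{lem:phasetransition} is plugged in, the extinction probability is at most $\left(\frac{pT_{thr}}{1-pT_{thr}}\right)^{c\ln N^0}$, and the survival bound follows by complementation.

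The main obstacle will be the coupling step, and in particular the bookkeeping that upgrades an epoch-boundary statement to an everywhere-in-time statement: I must verify that the worst dip \emph{inside} a loss epoch is still controlled by the same factor $1/d$ (so that no extinction is hidden strictly between boundaries), and that the non-integer starting height $k$ together with the off-by-one safety margin from this within-epoch dip are harmlessly absorbed into the exponent. A secondary point needing care is justifying that the long/short indicators stochastically dominate \emph{independent} $\mathrm{Bernoulli}(q)$ trials with success probability at least $q$ \emph{uniformly} over whichever environment is currently active; this is exactly where the uniform choices $W_{\min}=\min_e W^e_{\min}$ and a single threshold $T_{thr}$ in Lemma~\ref{lem:phasetransition} are used.
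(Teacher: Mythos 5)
Your proposal is correct and follows essentially the same route as the paper's proof: partition time into environment epochs, bound the probability of a short (length $<T_{thr}$) epoch by $pT_{thr}$, use Lemma~\ref{lem:phasetransition} to translate long/short epochs into multiplicative gains/losses of at least $d$ and at most $1/d$, couple $\log_d N$ at epoch boundaries with the biased walk of Lemma~\ref{lem:biased} started at $\log_d N^0$ with $q=1-pT_{thr}$, and recover $c = \frac{n\tau W_{\min}}{\ln(2n)}$ from $\ln d \approx \frac{\ln(2n)}{W_{\min}n\tau}$. Your added attention to within-epoch dips and the independence of the long/short indicators addresses details the paper handles only implicitly via its ``trivial coupling,'' but the underlying argument is the same.
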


\begin{proof} The probability that the chain remains at a specific environment for least $T_{thr}$ iterations is $(1-p)^{T_{thr}} > 1 -
pT_{thr}$ (from the moment it enters the environment until it departs) and hence the probability that the chain stays at an environment
for time less that $T_{thr}$ is at most $pT_{thr}$. Let $N^t = N^0\prod_{j=1}^t \vec{x}(j)^TW^{e(j)}\vec{y}(j)$ (see (\ref{eq.fit}) where
here $e(j)$ corresponds to the environment at time $j$) the number of individuals at time $t$ and $Z^i$ be the position of the biased random walk at time $i$ as defined in Lemma \ref{lem:biased} with $q =1-pT_{thr}$ and assume that $Z^0 = \lfloor \log_d N^0 \rfloor$ ($d$ is from lemma \ref{lem:phasetransition}). Let $t_1,t_2,...$ be the sequence of times where there is a change of environment (with $t_0=0$) and consider the trivial coupling where when the chain changes environment then a move is made on the real line. If the chain remained in the environment for time less than $T_{thr}$ then the walk goes left, otherwise it goes right. It is clear by Lemma \ref{lem:phasetransition} that random variable $\log_d N^{t_i}$ dominates $Z^i$. Hence, the probability that the population survives is at least the probability that $Z^i$ never reaches zero ($Z^i>0$ for all $i \in \mathbb{N}$). By Lemma \ref{lem:biased} this is at most $(\frac{pT_{thr}}{1-pT_{thr}})^{\lfloor \log_d N^0 \rfloor}$ and thus the probability of survival is at least $1-\left(\frac{pT_{thr}}{1-pT_{thr}}\right)^{c\ln N^0 }$ where $c = \left(\frac{n\tau W_{\min}}{\ln (2n)}\right)$ depends on
$n,\tau$ and fitness matrices $W^e$ (the minimum $W_{\min} = \min_e W^e_{\min}$, and also from Lemma \ref{lem:phasetransition} we  have that $\ln d \approx \frac{\ln 2n}{W_{\min}n\tau}$).  \end{proof}

\section{Convergence of Discrete Replicator Dynamics with Mutation in Fixed Environments}
\label{sec:mutation}
%This is where the mutation section goes.

%We conclude with a short description of some findings about mutations in static environments.
In this section we extend the  convergence result of Mehta \etal \cite{ITCS15MPP} for dynamics (\ref{eq.g}) in static environment 
to dynamics governed by (\ref{eq.f}) where mutations are also present. The former result critically hinges on the fact that mean fitness strictly increases unless
the system is at a fixed-point, and thereby acts as a potential function. 
Despite the fact that this is no longer the case when mutations are introduced, we manage to show that 
the system still converges and follows an intuitively clear behavior. Namely, in every step of the dynamic, either the average fitness
$\vec{x}^TW\vec{y}$ or the product of the proportions of all different alleles $\prod_{i}x_{i}\prod_{i}y_{i}$ (or both) will increase.
This latter quantity is, in some sense, a measure of how mixed/diverse the population is.
To argue this we apply the following inequality due to Baum and Eagon:
%that provide some natural in

\begin{theorem}[\textbf{Baum and Eagon Inequality}~\cite{BE66}]
  \label{thm:inequality} Let $P(\vec{x})=P(\{x_{ij}\})$ be a polynomial with nonnegative
coefficients homogeneous of degree $d$ in its variables $\{x_{ij}\}$. Let $\vec{x}=\{x_{ij}\}$
be any point of the domain $D: x_{ij}\geq 0, \sum_{j=1}^{q_i} x_{ij}= 1, i = l,\ldots, p,
j=l,\ldots,q_i$. For $\vec{x}= \{{x_{ij}}\} \in D$, let $\Xi(\vec{x}) = \Xi\{x_{ij}\}$ denote the point of $D$
whose $i, j$ coordinate is
  $$
    \Xi(\vec{x})_{ij} = \left (x_{ij} \frac{\partial P}{\partial x_{ij}} \bigg|_{(\vec{x})}\right) \cdot  \left( \sum_{j=1}^{q_i}x_{ij} \frac{\partial P}{\partial x_{ij}} \bigg|_{(\vec{x})}\right)^{-1}.
$$
  Then $P(\Xi(\vec{x}))>P(\vec{x})$ unless $\Xi(\vec{x})=\vec{x}$.
\end{theorem}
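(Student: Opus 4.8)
The plan is to derive the inequality from a single use of Jensen's inequality for the concave logarithm, after which the resulting lower bound reorganizes into a sum of Kullback--Leibler divergences, one per simplex block $i$; these are nonnegative, which gives the growth property, and they vanish exactly when $\Xi(\mathbf{x})=\mathbf{x}$, which gives strictness. Throughout set $\mathbf{y}=\Xi(\mathbf{x})$, abbreviate $\partial_{ij}P \defeq \frac{\partial P}{\partial x_{ij}}\big|_{\mathbf{x}}$, and assume $P(\mathbf{x})>0$ (the only regime in which $\Xi$ is defined). First I would expand $P$ into monomials: since its coefficients are nonnegative, $P(\mathbf{x})=\sum_k M_k(\mathbf{x})$ with $M_k(\mathbf{x})=c_k\prod_{i,j}x_{ij}^{\lambda^{(k)}_{ij}}$, $c_k\ge 0$. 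The quantities $p_k\defeq M_k(\mathbf{x})/P(\mathbf{x})$ are nonnegative and sum to one, hence form a probability distribution over the monomials; note also that $\Xi$ maps $D$ into $D$, since $\sum_j y_{ij}=(\sum_j x_{ij}\partial_{ij}P)/S_i=1$ where $S_i\defeq\sum_j x_{ij}\partial_{ij}P$.

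The heart of the argument is the chain
\[
\log \frac{P(\mathbf{y})}{P(\mathbf{x})} \;\ge\; \log \sum_{k} p_k\,\frac{M_k(\mathbf{y})}{M_k(\mathbf{x})} \;\ge\; \sum_{k} p_k \log\frac{M_k(\mathbf{y})}{M_k(\mathbf{x})} \;=\; \sum_{k} p_k \sum_{i,j}\lambda^{(k)}_{ij}\log\frac{y_{ij}}{x_{ij}},
\]
where the first inequality only discards monomials with $M_k(\mathbf{x})=0$ (which can only raise $P(\mathbf{y})$), the second is Jensen applied to $\log$, and the last equality is the definition of $M_k$. I would then interchange the order of summation and invoke the elementary identity $\sum_k \lambda^{(k)}_{ij} M_k(\mathbf{x}) = x_{ij}\,\partial_{ij}P$ (obtained by multiplying the termwise derivative of each monomial by $x_{ij}$), so that $\sum_k p_k \lambda^{(k)}_{ij} = x_{ij}\partial_{ij}P/P(\mathbf{x})$. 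The lower bound then becomes $\tfrac{1}{P(\mathbf{x})}\sum_{i,j}(x_{ij}\partial_{ij}P)\log(y_{ij}/x_{ij})$. Finally, using $x_{ij}\partial_{ij}P=S_i\,y_{ij}$, the block-$i$ contribution is $S_i\sum_j y_{ij}\log(y_{ij}/x_{ij})$, i.e. $S_i$ times the relative entropy of the $i$-th row of $\mathbf{y}$ with respect to that of $\mathbf{x}$, two distributions on the same simplex.

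Since relative entropy is nonnegative and each $S_i\ge 0$, I obtain
\[
\log\frac{P(\mathbf{y})}{P(\mathbf{x})} \;\ge\; \frac{1}{P(\mathbf{x})}\sum_{i=1}^{p} S_i \sum_{j=1}^{q_i} y_{ij}\log\frac{y_{ij}}{x_{ij}} \;\ge\; 0,
\]
which already yields $P(\Xi(\mathbf{x}))\ge P(\mathbf{x})$. For the strict statement, the innermost sum is a relative entropy and therefore vanishes only when $y_{ij}=x_{ij}$ for all $j$; consequently the whole bound is strictly positive unless $\mathbf{y}$ and $\mathbf{x}$ agree on every block, which is exactly $\Xi(\mathbf{x})=\mathbf{x}$ (the transform being defined precisely when every $S_i>0$). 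Hence $P(\Xi(\mathbf{x}))>P(\mathbf{x})$ whenever $\Xi(\mathbf{x})\ne\mathbf{x}$. I note that homogeneity of $P$ plays no role in this argument; it is the nonnegativity of the coefficients that is essential.

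The step I expect to require the most care is the boundary bookkeeping that keeps the Jensen estimate finite and the equality analysis valid. A coordinate $y_{ij}$ can be zero while $x_{ij}>0$, namely when $\partial_{ij}P|_{\mathbf{x}}=0$, which naively produces $\log 0=-\infty$ on the right-hand side. The resolution --- and the point I would write out carefully --- is that $\partial_{ij}P$ is a sum of nonnegative terms, so it can vanish at a point with $x_{ij}>0$ only if every monomial containing $x_{ij}$ already contains some other zeroed variable; such a monomial has $M_k(\mathbf{x})=0$, hence weight $p_k=0$, and therefore never appears in the Jensen sum. With this observation and the convention $0\log 0=0$, all the displayed quantities are finite and the reduction to relative entropies is exact, completing the proof.
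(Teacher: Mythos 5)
Your proof is correct, but there is nothing in the paper to compare it against: Theorem \ref{thm:inequality} is quoted as a known result of Baum and Eagon \cite{BE66} and used as a black box in the proof of Theorem \ref{thm:potential}; the paper gives no proof of it. Your argument is the auxiliary-function (EM-style) route to the inequality: expand $P(\vec{x})=\sum_k M_k(\vec{x})$ into monomials, treat $p_k = M_k(\vec{x})/P(\vec{x})$ as a probability distribution over monomials, apply Jensen's inequality to $\log$, and use the identity $\sum_k p_k \lambda^{(k)}_{ij}=x_{ij}\,\partial_{ij}P/P(\vec{x})$ to collapse the bound into $\tfrac{1}{P(\vec{x})}\sum_i S_i \sum_j y_{ij}\log\left(y_{ij}/x_{ij}\right)$, where $S_i=\sum_j x_{ij}\partial_{ij}P$, i.e.\ a nonnegative combination of per-block relative entropies. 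All the delicate points are handled: discarding monomials with $M_k(\vec{x})=0$ weakens the bound in the right direction; $x_{ij}=0$ forces $y_{ij}=0$, so each row of $\vec{y}$ is absolutely continuous with respect to the corresponding row of $\vec{x}$; your observation that a monomial with $p_k>0$ cannot vanish at $\vec{y}$ (because $\partial_{ij}P>0$ for each of its variables) correctly rules out $-\infty$ terms inside the Jensen sum; and strictness follows from strict positivity of relative entropy on any block where $\vec{y}$ and $\vec{x}$ differ, combined with $S_i>0$, which holds wherever $\Xi$ is defined. Beyond being self-contained and elementary (Jensen plus Gibbs' inequality), your route has a payoff the citation does not: your closing remark that homogeneity is never used is correct, and it is exactly what this paper needs elsewhere. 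In the proof of Theorem \ref{thm:potential} the authors apply the inequality to $(\vec{x}^TW\vec{y})^{\lambda-n\kappa}\prod_i x_i^{\kappa}\prod_i y_i^{\kappa}$ and, to cover irrational $\tau$, assert that the proof of \cite{BE66} ``holds for all homogeneous polynomials with degree $d$, even irrational''; your argument makes that assertion transparent, since every step (the monomial expansion, the derivative identity, and the KL reorganization) goes through verbatim for generalized monomials with arbitrary nonnegative real exponents.
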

\noindent

We will establish a potential function $P$ that for the dynamics governed by (\ref{eq.f}), capturing sexual evolution with mutation.
% we can
%come up with a potential function $P$ such that the conditions of the above theorem hold.
 This will imply convergence for
the dynamics. Note that feasible values of $\tau$ are in $[0, \frac{1}{n}]$, since $\tau$ represents fraction of allele $i$ mutating to
allele $i'$ of the same gene implying $n * \tau\le 1$. 
%$\tau\ge \frac{1}{n}$ then the system has a trivial fixed point, namely uniform distribution, and it can be reached in one step. 

\begin{theorem}\label{thm:potential}[Main 3]
Given a static environment $W$, dynamics governed by (\ref{eq.f}) with mutation parameter $\tau \leq \frac{1}{n}$ has a potential
function $P(\xx,\yy) = (\xx^TW\yy)^{1 - n\tau}\prod_{i}x_{i}^{\tau}\prod_{i}y_{i}^{\tau}$ that strictly increases unless an equilibrium
(fixed-point) is reached. 
%along every nontrivial trajectory\footnote{Trivial trajectories are the fixed points of the dynamic.} of the dynamics. 
Thus, the system converges
to equilibria, in the limit. Equilibria are exactly the set of points $(\vec{p}^*,\vec{q}^*)$ that satisfy for all $i,i'\in S_1$, $j,j' \in S_2$:
$$ \frac{(W\vec{q}^*)_{i}}{1 - \frac{\tau}{p^*_{i}}} = \frac{(W\vec{q}^*)_{i'}}{1 - \frac{\tau}{p^*_{i'}}} = \frac{\vec{p}^{*T}W\vec{q}^{*}}{1-n\tau} = \frac{(W^T\vec{p}^*)_{j}}{1 - \frac{\tau}{q^*_{j}}} = \frac{(W^T\vec{p}^*)_{j'}}{1 - \frac{\tau}{q^*_{j'}}}$$.
\end{theorem}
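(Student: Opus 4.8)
The plan is to exhibit the stated $P$ as the objective of a Baum--Eagon growth transformation whose induced map is \emph{exactly} the mutation dynamics $f$ of (\ref{eq.fs}); Theorem \ref{thm:inequality} then immediately gives that $P$ strictly increases along every orbit unless the orbit sits at a fixed point. Concretely, I would treat the two blocks of variables $\xx$ and $\yy$ as the two groups $i=1,2$ of the Baum--Eagon simplex, and compute the transformation $\Xi$ applied to $P(\xx,\yy)=(\xx^TW\yy)^{1-n\tau}\prod_i x_i^{\tau}\prod_i y_i^{\tau}$, checking that it reproduces (\ref{eq.fs}) coordinate by coordinate.

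The heart of the argument is a short differentiation. Writing $\Phi=\xx^TW\yy$ and differentiating in $x_i$ gives $x_i\,\partial P/\partial x_i = \Phi^{-n\tau}\big(\prod_k x_k^{\tau}\big)\big(\prod_k y_k^{\tau}\big)\big[(1-n\tau)x_i (W\yy)_i+\tau\Phi\big]$. Since $P$ is homogeneous of degree $1$ in the $\xx$-block (the exponents contribute $1-n\tau$ from $\Phi^{1-n\tau}$ and $n\tau$ from $\prod_k x_k^{\tau}$), Euler's identity yields $\sum_i x_i\,\partial P/\partial x_i = P$, so the Baum--Eagon normaliser for this block is $P$ itself. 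Dividing, the $i$-th coordinate of $\Xi$ collapses to $(1-n\tau)x_i (W\yy)_i/\Phi + \tau$, which is precisely the update for $x_i$ in (\ref{eq.fs}); the $\yy$-block is identical by symmetry. The only subtlety is that $P$ is not literally a polynomial for general $\tau$, so Theorem \ref{thm:inequality} does not apply verbatim. I would clear this by noting that the growth transformation is unchanged under $P\mapsto P^{c}$ (the factor $cP^{c-1}$ cancels in the normalisation) and that $P^{c}$ increases iff $P$ does; for rational $\tau=a/b\le 1/n$ the choice $c=b$ makes $P^{b}=\Phi^{\,b-na}\prod_i x_i^{a}\prod_i y_i^{a}$ a genuine polynomial with nonnegative coefficients (using $b-na\ge 0$ and $W>0$), so Theorem \ref{thm:inequality} applies, and the general real $\tau$ is covered by the product-of-powers extension of the Baum--Eagon inequality, of which Theorem \ref{thm:inequality} is the polynomial special case.

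Given the strict monotonicity, convergence is routine. After one step every coordinate satisfies $x_i,y_j\ge\tau>0$, so the orbit lives in the compact set $\{x_i,y_j\ge\tau\}$ on which $P$ is continuous, bounded, and smooth; the increasing bounded sequence $P(\xx(t),\yy(t))$ therefore converges. A discrete LaSalle-type argument then forces every $\omega$-limit point to satisfy $\Xi(\xx,\yy)=(\xx,\yy)$, i.e.\ to be a fixed point of $f$, giving convergence to the equilibrium set.

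Finally, the equilibrium description is pure algebra: setting $x_i=(1-n\tau)x_i(W\yy)_i/\Phi+\tau$ and solving gives $1-\tau/x_i=(1-n\tau)(W\yy)_i/\Phi$, i.e.\ $(W\vec{q}^*)_i/(1-\tau/p^*_i)=\Phi/(1-n\tau)$ for every $i$, and symmetrically for every $j$. At an interior fixed point $x_i>\tau$ (otherwise the equation would force $(W\yy)_i=0$, impossible for positive $W$), so all denominators are positive and the stated chain of equalities holds. I expect the main obstacle to be the exact identification $\Xi_P=f$ — in particular the homogeneity bookkeeping showing that each block has degree $1$, which is what makes the Baum--Eagon normaliser equal $P$ — together with the non-polynomiality wrinkle that must be resolved before Theorem \ref{thm:inequality} can be invoked; once these are in place the convergence and equilibrium statements follow immediately.
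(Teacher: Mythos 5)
Your proposal is correct and follows essentially the same route as the paper's own proof: identifying the dynamics (\ref{eq.fs}) as the Baum--Eagon growth transformation of $P$ (the paper works with $L=P^{\lambda}$ for rational $\tau=\kappa/\lambda$, exactly your $P^{b}$ trick), invoking Theorem \ref{thm:inequality}, appealing to the non-polynomial extension of Baum--Eagon for irrational $\tau$, and finishing with the same LaSalle-type $\omega$-limit argument. Your explicit Euler-identity bookkeeping and the algebraic derivation of the equilibrium characterization are nice touches that the paper leaves implicit, but they do not constitute a different approach.
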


\noindent
As a consequence of the above theorem we get the following:
%\rr{(Q: Why do we need this corollary? As such I dont mind it being there, but if there is any reason then we should state it.)}

\begin{corollary}
Along every nontrivial trajectory of dynamics governed by (\ref{eq.f}) %sexual selection dynamics with mutation 
at least one of average population fitness $\xx^TW\yy$ or
product of allele frequencies $\prod_{i}x_{i}\prod_{i}y_{i}$ strictly increases at each step.  \end{corollary}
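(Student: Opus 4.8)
The plan is to obtain the corollary directly from Theorem~\ref{thm:potential} by a short monotonicity argument, with no further dynamical analysis required. Write $F(\xx,\yy) = \xx^TW\yy$ for the average fitness and $G(\xx,\yy) = \prod_i x_i \prod_i y_i$ for the product of allele frequencies, so that the potential of Theorem~\ref{thm:potential} factors as $P = F^{1-n\tau} G^{\tau}$. Fix a single step of the dynamics taking $(\xx,\yy)$ to $(\xx',\yy')$ along a nontrivial trajectory, i.e.\ one on which $(\xx,\yy)$ is not a fixed point. Theorem~\ref{thm:potential} then supplies the strict inequality $F(\xx',\yy')^{1-n\tau} G(\xx',\yy')^{\tau} > F(\xx,\yy)^{1-n\tau} G(\xx,\yy)^{\tau}$, and the entire task is to convert this into a statement about $F$ and $G$ separately.

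First I would record the two facts that make the factorization usable. The exponents satisfy $1-n\tau \ge 0$, because the feasible range is $\tau \le \tfrac1n$, and $\tau \ge 0$, so both are nonnegative. Moreover both bases are strictly positive at every image point of the dynamics: under mutation every coordinate of $\xx'$ and $\yy'$ is at least $\tau > 0$ by (\ref{eq.fs}), whence $G > 0$, and since $W$ has positive entries $F = \xx^TW\yy > 0$. Consequently the maps $s \mapsto s^{1-n\tau}$ and $s \mapsto s^{\tau}$ are monotone nondecreasing on the positive reals over the range of values in question.

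The argument is then by contraposition. Suppose, for contradiction, that neither quantity strictly increases, i.e.\ $F(\xx',\yy') \le F(\xx,\yy)$ and $G(\xx',\yy') \le G(\xx,\yy)$. Raising the first inequality to the nonnegative power $1-n\tau$ and the second to the nonnegative power $\tau$ preserves both inequalities (all bases being positive), and multiplying the two resulting inequalities yields $P(\xx',\yy') \le P(\xx,\yy)$, contradicting the strict increase of $P$ guaranteed by Theorem~\ref{thm:potential}. Hence at least one of $F$ and $G$ must strictly increase at this step, which is exactly the assertion.

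I do not expect any genuine obstacle here; the substance is entirely carried by Theorem~\ref{thm:potential}, and the corollary is its immediate repackaging. The only points demanding a word of care are: the boundary case $\tau = \tfrac1n$, where the exponent $1-n\tau$ vanishes and $P$ reduces to $G^{\tau}$, so that strict increase of $P$ forces strict increase of $G$ alone, still consistent with the ``at least one'' conclusion; the reading of ``nontrivial trajectory'' as ``not at a fixed point,'' which is precisely the hypothesis under which Theorem~\ref{thm:potential} yields strict rather than weak increase; and the case of an initial condition on the boundary of $\Delta$ with $G=0$, where the first mutation step already pushes every coordinate to at least $\tau$, so $G$ jumps from $0$ to a positive value and strictly increases outright, again in accordance with the claim.
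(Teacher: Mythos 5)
Your proposal is correct and is exactly the argument the paper intends: the corollary is stated as an immediate consequence of Theorem~\ref{thm:potential}, obtained by factoring the potential as $P = (\xx^TW\yy)^{1-n\tau}\bigl(\prod_i x_i \prod_i y_i\bigr)^{\tau}$ and arguing by contraposition that if neither factor's base strictly increased, then $P$ (having nonnegative exponents and positive bases) could not strictly increase. Your additional care about the boundary cases $\tau = \tfrac1n$ and $G=0$ at the initial point is sound and only strengthens the write-up.
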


\section{Discussion on the Assumptions and Examples}\label{sec:full}

In this section, we will discuss why our assumptions are necessary and their significance.

\subsection{On the parameters $\gamma,\delta,\beta, \tau$}\label{ass:time}

The effective range of 
 $\delta$ is $o\left (\frac{1}{n}\right )$, where $||\vec{\delta}||_{\infty} = \delta$, whereas for  $\gamma$ is $O\left (\frac{1}{n^2}\right )$. For example, if we consider the entries of fitness matrices $W^e$ to be uniform from interval $(1-\sigma,1+\sigma)$ for some positive $\sigma>0$ then $\gamma$ is of $\Theta(\frac{1}{n^2})$ order. If the entries of the matrix are constants (in weak selection scenario they lie in the interval $(1-\sigma,1+\sigma)$) then the convergence time of dynamics \ref{eq.gn} is polynomial w.r.t $n$ (size of fitness matrices $W^e$ is $n \times n$). We note that the main result of \cite{ITCS15MPP} for dynamics (\ref{eq.g}) has been derived under the assumption that the entries of the fitness matrix are all distinct. It is proven that this assumption is necessary by giving  examples where the dynamic doesn't converge to pure fixed points if the fitness matrix has some entries that are equal (the trivial example is when $W$ has all entries equal, then every frequency vector in $\Delta$ is a fixed point). This is an indication that $\gamma$ is needed to analyse the running time and is not artificial.
 % It would be a great result to bound the running time without the dependence on $\gamma$.
 The noise vector $\vec{\delta}$ has coordinates $\pm \delta$, so it is uniformly chosen from hypercube, but there is no dependence on the current frequency vector ($\delta$ is independent of current $(\vec{x},\vec{y})$). Finally, $\beta$ should be thought of  as a small constant number (like in weak selection) independent of $n$, and $\tau$ to be $O(\frac{1}{n})$ ($1-n\tau \geq 0$ must hold so that the dynamics with mutation are meaningful and from Lemma \ref{lem:phasetransition}, it must hold that $\tau \leq \frac{\beta}{16n}$).

\subsection{On the environments}\label{ass:environment}
We analyze a finite population model where $N^t$ is the population size at time $t$. It is natural to define survival if $N^t \geq 1$
for all $t \in \mathbb{N}$ (number of people is at least 1 at all times) and extinction if $N^t<1$ for some $t$ (if the number of people is
less than one at some point then the population goes extinct). As described in preliminaries, $N^t = N^{t-1} \cdot \Phi^t$ where
$\Phi^t = x(t)^TW^{e(t)}y(t)$ is the average fitness at time $t$ and $W^{e(t)}$ is the fitness matrix of environment $e(t)$ (at time $t$). 

Fix a fitness matrix $W$ (i.e., fix an environment). If $W_{ij}> 1+\epsilon$ for all $(i,j)$ then $\vec{x}^TW\vec{y} \geq 1$ for all $(\vec{x},\vec{y})\in \Delta$  and thus
the number of individuals is increasing along the generations by a factor of $1+\epsilon$ (the population survives). On the other
hand, if $W_{ij}<1-\epsilon$ for all $(i,j)$ then $\vec{x}^TW\vec{y} < 1-\epsilon$ for all $(\vec{x},\vec{y})\in \Delta$, so it is clear
that the number of individuals is decreasing with a factor of $1-\epsilon$ (thus population goes extinct). So either extreme makes the
problem irrelevant. 

Finally, it is natural to assume that complete diversity should favor survival, {i.e.,}
%In words, we want to assume that diversity favors survival, in the sense that
if the population is uniform along the alleles/types then the population size must not decrease in the next generation.
%So our assumption is that the population should survive under diversity, if
%the natural selection could keep the population to have frequencies close to $\frac{1}{n}$.
Therefore, we assume that the average fitness under
uniform frequencies is $\geq 1+\beta$ (for all but few number of bad alleles that can be seen as deleterious). The alleles that are good should dominate entry-wise the bad alleles. Example Figure \ref{fig2} in the appendix $C$ shows that this assumption is necessary. In Figure \ref{fig2}, $\tau =
0.03$ and $W^e = \left(\begin{array}{cc} 0.99 & 0.37\\ 0.56 & 2.09\end{array}\right).$ If we start from any vector $(\vec{x},\vec{y})$
in the shaded area, the dynamics converges to the stable fixed point $B$. The average fitness $\vec{x}^TW\vec{y}$ at $B$ is less than the maximum at the corner which is $W^{e}_{1,1} = 0.99<1$. So if the size of population is $Q$ when entering $e$, after $t$ generations on the environment $e$, the
population size will be at most $Q \cdot 0.99^t$ (which decreases exponentially). In that case Theorem \ref{thm:mutationsurvive} doesn't hold,
even though $\frac{0.99+0.37+0.56+2.09}{4}=1.0025>1$ and $\beta = 0.0025$ (qualitatively we would have the same picture for any $\tau
\in [0,0.03]$ and $W^e$).

The assumption defined in (\ref{eq.env1}) is necessary as well for the following reason: Assume there is a combination of alleles $(i,j)$
so that $\prod_{e} (W^{e}_{ij})^{\pi_e}\geq 1$ (*). In that case we can have one of the environments so that $x_i=1,y_j=1$ is a stable
fixed point and hence there are initial frequencies so that the dynamics (\ref{eq.gn}) converge to it. After that, it is easy to argue
that this monomorphic population survives on average because of (*), so the probability of survival in that case is non zero.
\subsection{Explanation of Figure \ref{fig1}}
Figure \ref{fig1} on the title page shows the adjacency graph of a Markov chain. There are 3 environments with fitness
matrices, say $W^{e_1}, W^{e_2}, W^{e_3}$, and the entries of every matrix are distinct. Take $p_{ii} = 1- p$ and $p_{ij} =
\frac{p}{2}$ so that the stationary distribution is $(1/3,1/3,1/3)$. Observe that $W^{e_1}_{1,1} \cdot W^{e_2}_{1,1} \cdot W^{e_3}_{1,1} = 1.12\cdot
1.02\cdot 0.87<0.994<1$. The same is true for entries (1,2),(2,1),(2,2). So the assumption defined in (\ref{eq.env1}) is satisfied.

Moreover, observe that if we choose $\beta = 0.005$ and hence $\tau = \frac{0.005}{32}$ it follows that the assumptions
defined in (\ref{ass:avfitness}) are satisfied (also the bad alleles are dominated entry-wise by the good alleles). Hence, in case of no mutation,
from theorem \ref{thm:nomutationdie} the population dies out with probability 1 for all initial population sizes $N^0$ and all initial
frequency vectors in $\Delta$. In case of mutation, and for sufficiently large initial population size $N^0$, for all initial frequency
vectors in $\Delta$ the probability of survival is positive (Theorem \ref{thm:mutationsurvive}).

\section{Conclusion and Open problems}\label{sec:conclusion}
In this paper we study various aspects of discrete replicator-like/MWUA dynamics and show three results: Two for dynamics with fixed
parameters, and one where the parameters evolve over time as per a Markov chain. Theorem
\ref{thm:convergencetime} establishes that a noisy version of discrete replicator dynamics converges {\em polynomially fast} to pure fixed
points in coordination games. Due to the connections established by Chastain \etal \cite{PNAS2:Chastain16062014}, this implies  that evolution under sexual
reproduction in haploids converges fast to a monomorphic population if the environment is static (fitness/payoff matrix is fixed).
Introducing mutations to this model, as in \cite{Hofbauer98}, augments the replicator dynamics, and our second result shows convergence for this
augmented replicator in coordination games. The proof is via a novel potential function, which is a combination of mean payoff and
entropy, which may be of independent interest.

Finally, for the replicator dynamics with noise, capturing finite populations, we show that assuming some mild
conditions, the population size will eventually become zero with probability one (extinction) under (standard) replicator, while under augmented
replicator (with mutations) it will never whither out (survival) with a non-trivial probability. 

%Since in terms of connections to evolution, there are various other parameters to be taken care of, and also there are various other
%models for different aspects like mutations, environment changes, and selection, 
\medskip
\noindent
A host of novel questions arise from this model and there is much space for future work.

%The second one is for augmented MWU, where mutations are added to this said model
%The main results of the paper are the following three: Two of them (main theorem 1 and 3) have to do with static environments and one
%with changing environments. In case of no mutation, we show that the dynamics converge in polynomial time (assuming that $\gamma,
%\delta = \frac{1}{\textrm{poly}(n)}$) w.r.t the number of alleles $n$ to pure fixed points, so this is a speed of convergence result.
%In case of mutation, we show that the dynamics converge as well to fixed points, using a different function as potential (Lyapunov). In
%the latter, the fixed points are not anymore monomorphic and have full support. Finally, when we allow the environments to change, we
%show that in case of no mutation the population eventually dies out and in case of mutation, we show that the population survives with
%positive probability. For every one of the main theorems, there is much space for future work and improvement.  

\begin{itemize}
\item For the fast convergence result (first result above), we assumed that the random noise $\delta$ lies in a subset of hypercube of
length $\delta$, i.e., every entry $\delta_i$ is $\pm 1$ times magnitude $\delta$ and $\sum_i \delta_i=0$. Can the result be
generalized for a different class of random noise, where the noise also depends on the distribution of the alleles at every step and or
population size?  

\item The second result talks about convergence to fixed points, which happens at the limit (time $t \to \infty$). Therefore, an interesting
question would be to settle the speed of convergence. Additionally, for the {\em no mutations} case the result of \cite{ITCS15MPP}
shows that all the stable fixed points are pure. It would be interesting to perform stability analysis for the {\em replicator with mutations} as well.

\item Mutation can be modeled in an alternative way, where an individual can mutate to a completely new allele that is not part of some in advance fixed set of alleles. This is equivalent to adding a strategy to the coordination game. It will be interesting to define and analyze dynamics where mutation is modeled in such a way. Finally, what happens if environment changes are not completely independent but are instead affected by population size?
\begin{comment}
\medskip

\rr{(Q: I am not sure if we should have this last point)}
\item  For the third result, we make the following assumption about ``good'' and ``bad'' alleles in every environment: ``good'' alleles
have average fitness of at least $1+\beta$ for a fixed $\beta>0$. 
\medskip

some assumptions on the average fitness of the alleles. An allele of the first gene is said to be {\em good} if its average fitness
over all alleles of the second gene is at least $1+\beta$, for a fixed constant $\beta$. Similarly, for an allele of the second gene.
We made the assumption that all but a sublinear number of alleles are good. Can we weaken this assumption? In particular, if we require
only that the average fitness over all allele pairs be $1+ \beta$, then we know the result does not hold (see subsection
\ref{ass:environment}). However, can the number of bad alleles be made linear? One other important future direction is to consider
different model for dynamic environments (other than Markov chains \cite{wolf2005diversity}).
\end{comment}
\end{itemize}

\bibliographystyle{alpha}
\bibliography{sigproc5,sigproc6,sigproc7}

\newcommand{\etalchar}[1]{$^{#1}$}
\begin{thebibliography}{MPPY14}

\bibitem[ABFH09]{ackermann2009concurrent}
Heiner Ackermann, Petra Berenbrink, Simon Fischer, and Martin Hoefer.
\newblock Concurrent imitation dynamics in congestion games.
\newblock In {\em Proceedings of the 28th ACM symposium on Principles of
  distributed computing}, pages 63--72. ACM, 2009.

\bibitem[AHK05]{Arora05themultiplicative}
Sanjeev Arora, Elad Hazan, and Satyen Kale.
\newblock The multiplicative weights update method: a meta algorithm and
  applications.
\newblock Technical report, 2005.

\bibitem[AHV15]{offconvex}
Sanjeev Arora, Moritz Hardt, and Nisheeth Vishnoi.
\newblock Off the convex path, 2015.

\bibitem[BE67]{BE66}
L.~Baum and J.~Eagon.
\newblock An inequality with applications to statistical estimation for
  probabilistic functions of markov processes and to a model for ecology.
\newblock {\em Bull. Amer. Math. Soc.}, 73:360--363, 1967.

\bibitem[BGR12]{Braverman:2012:NVC:2090236.2090247}
Mark Braverman, Alexander Grigo, and Cristobal Rojas.
\newblock Noise vs computational intractability in dynamics.
\newblock In {\em Proceedings of the 3rd Innovations in Theoretical Computer
  Science Conference}, ITCS '12, pages 128--141, New York, NY, USA, 2012. ACM.

\bibitem[BW03]{Branke2003}
J{\"u}rgen Branke and Wei Wang.
\newblock {\em Genetic and Evolutionary Computation --- GECCO 2003: Genetic and
  Evolutionary Computation Conference Chicago, IL, USA, July 12--16, 2003
  Proceedings, Part I}, chapter Theoretical Analysis of Simple Evolution
  Strategies in Quickly Changing Environments, pages 537--548.
\newblock Springer Berlin Heidelberg, Berlin, Heidelberg, 2003.

\bibitem[CLPV13]{ITCS:DBLP:dblp_conf/innovations/ChastainLPV13}
Erick Chastain, Adi Livnat, Christos~H. Papadimitriou, and Umesh~V. Vazirani.
\newblock Multiplicative updates in coordination games and the theory of
  evolution.
\newblock In {\em ITCS}, volume abs/1208.3160, pages 57--58, 2013.

\bibitem[CLPV14]{PNAS2:Chastain16062014}
Erick Chastain, Adi Livnat, Christos Papadimitriou, and Umesh Vazirani.
\newblock Algorithms, games, and evolution.
\newblock {\em Proceedings of the National Academy of Sciences}, 2014.

\bibitem[CS07]{chien2007convergence}
Steve Chien and Alistair Sinclair.
\newblock Convergence to approximate nash equilibria in congestion games.
\newblock In {\em Proceedings of the eighteenth annual ACM-SIAM symposium on
  Discrete algorithms}, pages 169--178. Society for Industrial and Applied
  Mathematics, 2007.

\bibitem[DP09]{dubhashi09}
Devdatt~P. Dubhashi and Alessandro Panconesi.
\newblock {\em Concentration of Measure for the Analysis of Randomized
  Algorithms}.
\newblock Cambridge University Press, 2009.

\bibitem[DSV12]{DSV}
Narendra~M. Dixit, Piyush Srivastava, and Nisheeth~K. Vishnoi.
\newblock A finite population model of molecular evolution: Theory and
  computation.
\newblock {\em Journal of Computational Biology}, 19(10):1176--1202, 2012.

\bibitem[Eig71]{Eigen71}
M.~Eigen.
\newblock Self-organization of matter and the evolution of biological
  macromolecules.
\newblock {\em Die Naturwissenschaften}, 58:456--523, 1971.

\bibitem[Eig93]{Eig93}
M.~Eigen.
\newblock The origin of genetic information: Viruses as models.
\newblock {\em Gene}, 135:37--47, 1993.

\bibitem[Fel08]{feller2008introduction}
Willliam Feller.
\newblock {\em An introduction to probability theory and its applications},
  volume~2.
\newblock John Wiley \& Sons, 2008.

\bibitem[GHJY15]{ge2015escaping}
Rong Ge, Furong Huang, Chi Jin, and Yang Yuan.
\newblock Escaping from saddle points---online stochastic gradient for tensor
  decomposition.
\newblock {\em arXiv preprint arXiv:1503.02101}, 2015.

\bibitem[HS98]{Hofbauer98}
J.~Hofbauer and K.~Sigmund.
\newblock {\em Evolutionary Games and Population Dynamics}.
\newblock Cambridge University Press, Cambridge, 1998.

\bibitem[KPT09]{Kleinberg09multiplicativeupdates}
Robert Kleinberg, Georgios Piliouras, and {\'E}va Tardos.
\newblock Multiplicative updates outperform generic no-regret learning in
  congestion games.
\newblock In {\em ACM Symposium on Theory of Computing (STOC)}, 2009.

\bibitem[LA83]{akin}
V.~Losert and E.~Akin.
\newblock Dynamics of games and genes: Discrete versus continuous time.
\newblock {\em Journal of Mathematical Biology}, 1983.

\bibitem[Lie05]{liekens2005evolution}
Anthony~ML Liekens.
\newblock {\em Evolution of finite populations in dynamic environments}.
\newblock Technische Universiteit Eindhoven, 2005.

\bibitem[LPDF08]{PNAS1:Livnat16122008}
Adi Livnat, Christos Papadimitriou, Jonathan Dushoff, and Marcus~W. Feldman.
\newblock A mixability theory for the role of sex in evolution.
\newblock {\em Proceedings of the National Academy of Sciences},
  105(50):19803--19808, 2008.

\bibitem[LPR{\etalchar{+}}14]{evolfocs14}
Adi Livnat, Christos Papadimitriou, Aviad Rubinstein, Andrew Wan, and Gregory
  Valiant.
\newblock Satisfiability and evolution.
\newblock In {\em FOCS}, 2014.

\bibitem[LSJR16]{lee2016gradient}
Jason~D Lee, Max Simchowitz, Michael~I Jordan, and Benjamin Recht.
\newblock Gradient descent converges to minimizers.
\newblock {\em University of California, Berkeley}, 1050:16, 2016.

\bibitem[Lyu92]{newlyo}
Yuri Lyubich.
\newblock {\em Mathematical Structures in Population Genetics}.
\newblock Springer-Verlag, 1992.

\bibitem[Mei07]{Meiss2007}
James Meiss.
\newblock {\em Differential Dynamical Systems}.
\newblock SIAM, 2007.

\bibitem[MP15]{Meir15}
Reshef Meir and David Parkes.
\newblock A note on sex, evolution, and the multiplicative updates algorithm.
\newblock In {\em Proceedings of the 12th International Joint Conference on
  Autonomous Agents and Multiagent Systems (AAMAS 15)}, 2015.

\bibitem[MPP15]{ITCS15MPP}
Ruta Mehta, Ioannis Panageas, and Georgios Piliouras.
\newblock Natural selection as an inhibitor of genetic diversity:
  Multiplicative weights updates algorithm and a conjecture of haploid
  genetics.
\newblock In {\em ITCS}, 2015.

\bibitem[MPPY14]{2014arXiv1411.6322M}
Ruta {Mehta}, Ioannis {Panageas}, Georgios {Piliouras}, and Sadra {Yazdanbod}.
\newblock {The Complexity of Genetic Diversity}.
\newblock {\em ArXiv e-prints}, November 2014.

\bibitem[Nag93]{Nagylaki1}
T~Nagylaki.
\newblock The evolution of multilocus systems under weak selection.
\newblock {\em Genetics}, 134(2):627--47, 1993.

\bibitem[NKN01]{novak2001evolgrammar}
Martin~A. Nowak, Natalia~L. Komarova, and Partha Niyogi.
\newblock Evolution of universal grammar.
\newblock {\em Science}, 2001.

\bibitem[Pem90]{pemantle90}
Robin Pemantle.
\newblock Nonconvergence to unstable points in urn models and stochastic
  approximations.
\newblock {\em The Annals of Probability}, 18(2), 1990.

\bibitem[PP14]{2014arXiv1403.3885P}
Ioannis {Panageas} and Georgios {Piliouras}.
\newblock {From pointwise convergence of evolutionary dynamics to average case
  analysis of decentralized algorithms}.
\newblock {\em ArXiv e-prints}, March 2014.

\bibitem[PP16]{PP16}
Georgios Piliouras and Ioannis Panageas.
\newblock Gradient descent converges to minimizers even in the presence of
  continuum of equilibria.
\newblock {\em Working paper.}, 2016.

\bibitem[PSV16]{PSV15}
Ioannis Panageas, Piyush Srivastava, and Nisheeth~K. Vishnoi.
\newblock Evolutionary dynamics in finite populations mix rapidly.
\newblock In {\em SODA}, 2016.

\bibitem[RL10]{populationsize}
Olivier Rivoire and Stanislas Leibler2.
\newblock The value of information for populations in varying environments.
\newblock {\em ArXiv e-prints}, 2010.

\bibitem[Val09]{DBLP:journals/jacm/Valiant09}
Leslie~G. Valiant.
\newblock Evolvability.
\newblock {\em J. {ACM}}, 56(1), 2009.

\bibitem[Vis12]{VishnoiOpen}
Nisheeth~K. Vishnoi.
\newblock Evolution without sex, drugs and boolean functions, 2012.
\newblock
  http://research.microsoft.com/en-us/um/people/nvishno/site/Publications\_files/evolution-without-sex.pdf.

\bibitem[Vis13]{Vishnoi:2013:MER:2422436.2422445}
Nisheeth~K. Vishnoi.
\newblock Making evolution rigorous: The error threshold.
\newblock In {\em Proceedings of the 4th Conference on Innovations in
  Theoretical Computer Science}, ITCS '13, pages 59--60, New York, NY, USA,
  2013. ACM.

\bibitem[Vis15]{VishnoiSpeed}
Nisheeth~K. Vishnoi.
\newblock The speed of evolution.
\newblock In {\em SODA}, 2015.

\bibitem[WVA05]{wolf2005diversity}
Denise~M Wolf, Vijay~V Vazirani, and Adam~P Arkin.
\newblock Diversity in times of adversity: probabilistic strategies in
  microbial survival games.
\newblock {\em Journal of theoretical biology}, 234(2):227--253, 2005.

\bibitem[YOJ07]{yang2007evolutionary}
Shengxiang Yang, Yew-Soon Ong, and Yaochu Jin.
\newblock {\em Evolutionary computation in dynamic and uncertain environments},
  volume~51.
\newblock Springer Science \& Business Media, 2007.

\end{thebibliography}

\appendix

%\newpage

\section*{Appendix}
\section{Terms Used in Biology}\label{asec.bioTerms}
We provide brief non-technical definitions of a few biological terms useful for this paper.

\noindent{\bf Gene.}
A unit that determines some characteristic of the organism, and passes traits to offsprings.
All organisms have genes corresponding to various biological traits, some of which are instantly visible, such as eye color or number
of limbs, and some of which are not, such as blood type.
\medskip

\noindent{\bf Allele.}
Allele is one of a number of alternative forms of the same gene, found at the same place on a chromosome,
Different alleles can result in different observable traits, such as different pigmentation.
\medskip

\noindent{\bf Genotype.}
The genetic constitution of an individual organism.
\medskip

\noindent{\bf Phenotype.}
The set of observable characteristics of an individual resulting from the interaction of its genotype with the environment.
\medskip

\noindent{\bf Diploid.}
Diploid means having two copies of each chromosome. Almost all of the cells in the human body are diploid.
\medskip

\noindent{\bf Haploid.}
A cell or nucleus having a single set of unpaired chromosomes.
Our sex cells (sperm and eggs) are haploid cells that are produced by meiosis. When sex cells unite during fertilization, the haploid
cells become a diploid cell.

\section{Missing proofs}
\subsection{Proof of Lemma \ref{lem:inequality}}
\begin{proof} From the definition of $g$ (equation \ref{eq.g}) we get,
\begin{align*}
2\left(\hat{\vec{x}}^TW\hat{\vec{y}} \right) \left ( \vec{x}^TW\vec{y}\right)^2 &= 2 \sum _{ij} W_{ij}\hat{x}_i\hat{y}_j\left ( \vec{x}^TW\vec{y}\right)^2
\\&= 2\sum _{ij} W_{ij}x_i y_j\frac{(W\vec{y})_i}{\vec{x}^TW\vec{y}}\frac{(W^T\vec{x})_j}{\vec{x}^TW\vec{y}} \left ( \vec{x}^TW\vec{y}\right)^2
\\&= 2\sum_{i,j}W_{ij}x_{i}y_{j}(W\vec{y})_i (W^T\vec{x})_j \\&=
\sum_{i,j,k}W_{ij}W_{ik}x_{i}y_{j}y_{k} (W^T\vec{x})_j+\sum_{i,j,k}W_{ij}W^T_{jk}x_{i}x_{k}y_{j}(W\vec{y})_i\\&=
\sum_{i,j,k}W_{ij}W_{ik}x_{i}y_{j}y_{k} \frac{1}{2}((W^T\vec{x})_j+(W^T\vec{x})_k)+\sum_{i,j,k}W_{ij}W_{kj}x_{i}x_{k}y_{j}\frac{1}{2}((W\vec{y})_i+(W\vec{y})_k)\\&\geq
\sum_{i,j,k}W_{ij}W_{ik}x_{i}y_{j}y_{k} \sqrt{(W^T\vec{x})_j(W^T\vec{x})_k}+\sum_{i,j,k}W_{ij}W_{kj}x_{i}x_{k}y_{j}\sqrt{(W\vec{y})_i(W\vec{y})_k}\\&=
\sum_{i}x_i\left(\sum_{j}y_jW_{ij}\sqrt{(W^T\vec{x})_j}\right)^2 + \sum_{j}y_j\left(\sum_{i}x_iW_{ij}\sqrt{(W\vec{y})_i}\right)^2\\&\geq
\left(\sum_{i,j}x_iy_jW_{ij}\sqrt{(W^T\vec{x})_j}\right)^2 + \left(\sum_{j,i}y_jx_iW_{ij}\sqrt{(W\vec{y})_i}\right)^2 \textrm{ using convexity of }f(z)=z^2\\&=
\left(\sum_{j}y_j (W^T\vec{x})_j^{3/2}\right)^2 + \left(\sum_{i}x_i (W\vec{y})_i^{3/2}\right)^2. \textrm{ \;\;\;\;\;\;\;\;\;\;\;\;\;\;\;\;\;\;\;\;\;\;\;\;\;\;\;\;\;\;\;\;\;\;\;\;\;\;\;\;\;\;\;\;\;\;\;\;\;\;\;\;\;\;\;\;(0)}
\end{align*}

Let $\xi$ be a random variable that takes value $(W\vec{y})_i$ with probability $x_i$. Then $\mathbb{E}[\xi] = \vec{x}^TW\vec{y}$,
$\mathbb{V}[\xi] = \sum_{i}x_{i}((W\vec{y})_i-\vec{x}^TW\vec{y})^2$ and $\xi$ takes values in the interval $[0,\mu]$ with $\mu =
\max_{ij}W_{ij}$. Consider the function $f(z) = z^{3/2}$ on the interval $[0,\mu]$ and observe that $f''(z) \geq
\frac{3}{4}\frac{1}{\sqrt{\mu}}$ on $[0,\mu]$ since $\mu \geq \vec{p}^TW\vec{q} \geq 0$ for all $(\vec{p},\vec{q}) \in \Delta$. Observe
also that $f(\mathbb{E}[\xi]) = (\vec{x}^TW\vec{y})^{3/2}$ and $\mathbb{E}[f(\xi)] = \sum_{i}x_i (W\vec{y})_i^{3/2}$.  \bigskip

\begin{claim}
%Let $f(x)$ be a twice differentiable convex function defined on an interval $[0,\mu]$ and $f''(x) \geq A>0$ for $x \in [0,\mu]$. Assume
%$\xi$ is a random variable taking values in $[0,\mu]$. Then
$\mathbb{E}[f(\xi)] \geq f(\mathbb{E}[\xi]) + \frac{A}{2}\mathbb{V}[\xi]$, where $A = \frac{3}{4\sqrt{\mu}}$.
\end{claim}
\begin{proof}
By Taylor expansion we get that (we expand w.r.t the expectation of $\xi$, namely $\mathbb{E}[\xi]$)$$f(z) \geq f(\mathbb{E}[ \xi ])+ f'(\mathbb{E} [\xi])(z - \mathbb{E}[ \xi])+\frac{A}{2}(z - \mathbb{E}[ \xi])^2$$ and hence we have that:
\begin{align*}
f(z) &\geq f(\mathbb{E}[ \xi ])+ f'(\mathbb{E}[ \xi])(z - \mathbb{E}[ \xi])+\frac{A}{2}(z - \mathbb{E}[ \xi])^2\overbrace{\Rightarrow}^{\textrm{taking expectation}}\\
\mathbb{E} [f(\xi)] &\geq \mathbb{E}[f(\mathbb{E}[ \xi] )]+ f'(\mathbb{E} [\xi])(\mathbb{E}[\xi] - \mathbb{E}[ \xi])+\frac{A}{2}\mathbb{V}[\xi]\\&= f(\mathbb{E} [\xi] )+\frac{A}{2}\mathbb{V}[\xi].
\end{align*}
\end{proof}

Using the above claim %for $f(z)$, with $A = \frac{3}{4\sqrt{\mu}}$ and random variable $\xi$
it follows that:
$$\sum_{i}x_{i}(W\vec{y})_i^{3/2} \geq (\vec{x}^TW\vec{y})^{3/2} + \frac{3}{8\sqrt{\mu}}\sum_{i}x_{i}((W\vec{y})_i-\vec{x}^TW\vec{y})^2.$$ Squaring both sides and omitting one square from the r.h.s we get
\begin{equation}
\left(\sum_{i}x_{i}(W\vec{y})_i^{3/2}\right)^2 \geq (\vec{x}^TW\vec{y})^{3} + \frac{3}{4\sqrt{\mu}}(\vec{x}^TW\vec{y})^{3/2}\sum_{i}x_{i}((W\vec{y})_i-\vec{x}^TW\vec{y})^2.
\end{equation}
We do the same by setting $\xi$ to be $(W^T\vec{x})_i$ with probability $y_i$ and using similar argument we get
\begin{equation}
\left(\sum_{i}y_{i}(W^T\vec{x})_i^{3/2}\right)^2 \geq (\vec{x}^TW\vec{y})^{3} + \frac{3}{4\sqrt{\mu}}(\vec{x}^TW\vec{y})^{3/2}\sum_{i}y_{i}((W^T\vec{x})_i-\vec{x}^TW\vec{y})^2.
\end{equation}
Therefore it follows that
\begin{align*}
2(\hat{\vec{x}}^TW\hat{\vec{y}}) (\vec{x}^TW\vec{y})^2 &\geq \left(\sum_{j}y_j (W^T\vec{x})_j^{3/2}\right)^2 + \left(\sum_{i}x_i (W\vec{y})_i^{3/2}\right)^2\textrm{ by inequality } (0)\\&
\overbrace{\geq}^{\textrm{(9)+(10)}} 2(\vec{x}^TW\vec{y})^3 + \frac{3}{4\sqrt{\mu}}(\vec{x}^TW\vec{y})^{3/2}\left(\sum_{i}x_{i} \left((W\vec{y})_{i} - \vec{x}^TW\vec{y}\right)^2+\sum_{i}y_{i} \left((W^T\vec{x})_{i} - \vec{x}^TW\vec{y}\right)^2\right )
\end{align*}
Finally we devide both sides by $2(\vec{x}^TW\vec{y})^2$ and we get that
\begin{align*}
(\hat{\vec{x}}^TW\hat{\vec{y}}) &\geq (\vec{x}^TW\vec{y}) + \frac{3}{8\sqrt{\mu(\vec{x}^TW\vec{y})}}\left(\sum_{i}x_{i} \left((W\vec{y})_{i} - \vec{x}^TW\vec{y}\right)^2+\sum_{i}y_{i} \left((W^T\vec{x})_{i} - \vec{x}^TW\vec{y}\right)^2\right )
\\&\geq (\vec{x}^TW\vec{y}) + \frac{3}{8\mu}\left(\sum_{i}x_{i} \left((W\vec{y})_{i} - \vec{x}^TW\vec{y}\right)^2+\sum_{i}y_{i} \left((W^T\vec{x})_{i} - \vec{x}^TW\vec{y}\right)^2\right )
\end{align*}

with $\frac{3}{8\sqrt{\mu\Phi(\vec{x},\vec{y})}} \geq \frac{3}{8\mu}$ since $\mu \geq \vec{x}^TW\vec{y}$. This inequality and the proof techniques can be seen as a generalization of an inequality and proof techniques in \cite{newlyo}.
\end{proof}
%%%%%%%%%%%%%%%%%%%%%%%%%%%%%%%%%%%%%%%%%%%%%%%%%%%%%%%%%%
%%%%%%%%%%%%%%%%%%%%%%%%%%%%%%%%%%%%%%%%%%%%%%%%%%%%%%%%%%%
\subsection{Proof of Lemma \ref{lem:zeroinexpectation}}
\begin{proof} Vectors $(\vec{\delta}_{\vec{x}},\vec{\delta}_{\vec{y}}),
(-\vec{\delta}_{\vec{x}},\vec{\delta}_{\vec{y}}),(\vec{\delta}_{\vec{x}},-\vec{\delta}_{\vec{y}}),
(-\vec{\delta}_{\vec{x}},-\vec{\delta}_{\vec{y}})$ appear with the same probability, and observe that
$$(\vec{x+\vec{\delta}_{\vec{x}}})^TW(\vec{y}+\vec{\delta}_{\vec{y}})+(\vec{x-\vec{\delta}_{\vec{x}}})^TW(\vec{y}+\vec{\delta}_{\vec{y}})+(\vec{x+\vec{\delta}_{\vec{x}}})^TW(\vec{y}-\vec{\delta}_{\vec{y}})+(\vec{x-\vec{\delta}_{\vec{x}}})^TW(\vec{y}-\vec{\delta}_{\vec{y}})
= 4\vec{x}^TW\vec{y},$$ and the claim follows.
\end{proof}
%%%%%%%%%%%%%%%%%%%%%%%%%%%%%%%%%%%%%%%%%%%%%%%%%%%%%%%%%%
%%%%%%%%%%%%%%%%%%%%%%%%%%%%%%%%%%%%%%%%%%%%%%%%%%%%%%%%%%%
\subsection{Proof of Lemma \ref{lem:goodjump}}
\begin{proof} Assume w.l.o.g that we have $W_{i1} \geq W_{i2} \geq ...$ (otherwise we permute them so that are in decreasing order). Consider the case where the signs are revealed one at a time, in the order of indices of the sorted row. The probability that $+$ signs dominate $-$ signs through the process is $\frac{1}{m/2+1}$ (ballot theorem/Catalan numbers) (see \ref{lem:ballot}). It is clear that when the $+$ signs dominate the $-$ signs then
\begin{align*}
(W\vec{\delta}_{\vec{y}})_i &= \sum_j^{m}W_{ij}\delta_j \ \ \geq\ \  \sum_{j=1}^{m/2} (W_{i(2j-1)}-W_{i(2j)})\delta \ \ \geq \ \
\gamma\delta\frac{m}{2} \end{align*}
\end{proof}
%%%%%%%%%%%%%%%%%%%%%%%%%%%%%%%%%%%%%%%%%%%%%%%%%%%%%%%%%%
%%%%%%%%%%%%%%%%%%%%%%%%%%%%%%%%%%%%%%%%%%%%%%%%%%%%%%%%%%

\subsection{Proof of Lemma \ref{lem:submartingale}}
\begin{proof}
First of all, since the average fitness is increasing in every generation (before adding noise) and by Lemma \ref{lem:zeroinexpectation} we get that for all $t \in \{0,...,2T\}$ $$\mathbb{E}[\Phi^{t+1} | \Phi^{t}] \geq \Phi^{t}$$ namely the average fitness is a submartingale (0).\\\\

Let $(\vec{x}^{t},\vec{y}^{t}) \defeq (\vec{x}(t),\vec{y}(t))$ be the frequency vector at time $t$ which has average fitness $\Phi^{t} \equiv \Phi(\vec{x}^{t},\vec{y}^{t}) = {\vec{x}^{t}}^TW\vec{y}^{t}$ (abusing notation we use $\Phi(\vec{x},\vec{y})$ for function $\vec{x}^TW\vec{y}$ and $\Phi^{t}$ for the value of average fitness at time $t$), also we denote $(\hat{\vec{x}}^{t},\hat{\vec{y}}^t) = g(\vec{x}^t,\vec{y}^t)$ and recall that $(\vec{x}^{t+1},\vec{y}^{t+1}) = (\hat{\vec{x}}^{t}+\vec{\delta}^{t}_{\vec{x}},\hat{\vec{y}}^{t}+\vec{\delta}^{t}_{\vec{y}})$. Assume that in the next generation $(\hat{\vec{x}}^{2t},\hat{\vec{y}}^{2t}) = g(\vec{x}^{2t},\vec{y}^{2t})$ the average fitness before the noise, namely $\hat{\vec{x}}^{2t\;T}W\hat{\vec{y}}^{2t}$ will be at least $\Phi^{2t}+C \delta \alpha^2$. Hence by Lemma \ref{lem:zeroinexpectation} we get that $\mathbb{E}[\Phi^{2t+1} | \Phi^{2t}] = \hat{\vec{x}}^{2t\;T}W\hat{\vec{y}}^{2t} \geq \Phi^{2t}+ C\delta\alpha^2$ (1). Therefore we have that
\begin{align*}
\mathbb{E}[\Phi^{2t+2}| \Phi^{2t}] & = \mathbb{E}_{\vec{\delta}^{2t+1},\vec{\delta}^{2t}}[ (\hat{\vec{x}}^{2t+1}+\vec{\delta}^{2t+1}_{\vec{x}})^TW(\hat{\vec{y}}^{2t+1}+\vec{\delta}^{2t+1}_{\vec{y}}) |\Phi^{2t}]
\\&=\mathbb{E}_{\vec{\delta}^{2t}}[ \left(\hat{\vec{x}}^{2t+1}\right)^TW\hat{\vec{y}}^{2t+1} |\Phi^{2t}]
\\&\geq \mathbb{E}_{\vec{\delta}^{2t}}[ \left(\vec{x}^{2t+1}\right)^TW\vec{y}^{2t+1} |\Phi^{2t}]
\\&= \mathbb{E}[\Phi^{2t+1} | \Phi^{2t}]
\\& \geq \Phi^{2t}+ C\delta\alpha^2
\end{align*}
where the second inequality is claim (1) and the first inequality comes from inequality \ref{lem:inequality} (since the r.h.s of inequality \ref{lem:inequality} is non-negative). The first, third equality comes from model definition and second equality comes from Lemma \ref{lem:zeroinexpectation}.\\\\
Assume now that in the next generation $(\hat{\vec{x}}^{2t},\hat{\vec{y}}^{2t}) = g(\vec{x}^{2t},\vec{y}^{2t})$ the average fitness before the noise, namely $\hat{\vec{x}}^{2t\;T}W\hat{\vec{y}}^{2t}$ will be less than $\Phi^{2t}+C \delta \alpha^2$. This means that the vector $(\vec{x}^{2t},\vec{y}^{2t})$ is $\alpha$-close by corollary \ref{cor:notaclose}, so after adding the noise by the definition of $\alpha$-close we get that $\hat{\vec{x}}^{2t\;T}W\hat{\vec{y}}^{2t}+\alpha \geq \Phi^{2t+1}\geq \hat{\vec{x}}^{2t\;T}W\hat{\vec{y}}^{2t}-\alpha$ (2). From Lemma \ref{lem:goodjump} we will have with probability at least $\frac{1}{2}\frac{1}{m/2+1}$ that $(W\vec{y}^{2t+1})_i \geq (W\hat{\vec{y}}^{2t})_i +\frac{\gamma\delta m}{2}$ for all $i$ in the support of vector $\vec{x}^t$ (we multiplied the probability by $\frac{1}{2}$ since you perturb $\vec{y}$ with probability half) (3). The same argument works if we purturb $\vec{x}$, so w.l.o.g we work with purturbed vector $\vec{y}$ which has support of size at least 2. Essentially by inequality \ref{lem:inequality} we get the following inequalities:
\begin{align*}
\mathbb{E}[\Phi^{2t+2}| \Phi^{2t}] & = \mathbb{E}_{\vec{\delta}^{2t+1},\vec{\delta}^{2t}}[ (\hat{\vec{x}}^{2t+1}+\vec{\delta}^{2t+1}_{\vec{x}})^TW(\hat{\vec{y}}^{2t+1}+\vec{\delta}^{2t+1}_{\vec{y}}) |\Phi^{2t}]
\\&=\mathbb{E}_{\vec{\delta}^{2t}}[ \left(\hat{\vec{x}}^{2t+1}\right)^TW\hat{\vec{y}}^{2t+1} |\Phi^{2t}]
\\&\overbrace{\geq}^{\ref{lem:inequality}}  \mathbb{E}_{\vec{\delta}^{2t}}[\left(\vec{x}^{2t+1}\right)^TW\vec{y}^{2t+1}|\Phi^{2t}]
+C\cdot \mathbb{E}_{\vec{\delta}^{2t}}\left[ \sum_i x_i^{2t+1}\cdot \left((W\vec{y}^{2t+1})_i -   \left(\vec{x}^{2t+1}\right)^TW\vec{y}^{2t+1}  \right)^2\bigg\lvert\Phi^{2t}\right]
\\&\geq   \Phi^{2t} +\frac{C}{m+2} \left(\frac{\gamma\delta m}{2} -  2\alpha\right)^2
\\&\geq \Phi^{2t} + C\delta \alpha^2
\end{align*}
where last inequality comes from the assumption and the second comes from claim (0), (2), (3). Hence by induction we get that $$\mathbb{E}[\Phi^{2t+2} - (t+1) \cdot C\delta\alpha^2 | \Phi^{2t}] \geq \Phi^{2t}-t \cdot C\delta\alpha^2.$$ It is easy to see that $W_{\max} \geq \Phi^t \geq W_{\min}$ for all $t$.
\end{proof}
%%%%%%%%%%%%%%%%%%%%%%%%%%%%%%%%%%%%%%%%%%%%%%%%%%%%%%%%%%
%%%%%%%%%%%%%%%%%%%%%%%%%%%%%%%%%%%%%%%%%%%%%%%%%%%%%%%%%%%
\subsection{Proof of Lemma \ref{lem:badallele}}
\begin{proof} Consider one step of the dynamics that starts at $(\xx,\yy)$ and has frequency vector $(\tilde{\vec{x}},\tilde{\vec{y}})$ in the next step before adding the noise. Let $i*$ be the bad allele that has the greatest fitness
at it, namely $(W^e \vec{y})_{i*} \geq (W^e \vec{y})_i$ for all $i \in B^e_1$. It holds that
\begin{align*}
\sum_{i\in B^e_1} \tilde{x}_{i} &= (1-n\tau) \sum_{i\in B^e_1} x_{i}\frac{(W^e\vec{y})_{i}}{\vec{x}^TW^e\vec{y}} + \tau| B^e_1|
\\&=(1-n\tau)\frac{\sum_{i\in B^e_1} x_i (W^e\vec{y})_i}{\sum_{i\in G_1\backslash B^e_1} x_i (W^e\vec{y})_i+\sum_{i\in B^e_1} x_i (W^e\vec{y})_i} + \tau| B^e_1|
\\&\leq (1-n\tau)\frac{\sum_{i\in B^e_1} x_i (W^e\vec{y})_{i*}}{\sum_{i\in G_1\backslash B^e_1} x_i (W^e\vec{y})_i+\sum_{i\in B^e_1}
x_i (W^e\vec{y})_{i*}} + \tau| B^e_1| \;\;\;\;\;\;\;\;\;\;\;\;\;(*)
\\&\leq (1-n\tau)\frac{\sum_{i\in B^e_1} x_i (W^e\vec{y})_{i*}}{\sum_{i\in G_1\backslash B^e_1} x_i (W^e\vec{y})_{i*}+\sum_{i\in B^e_1}
x_i (W^e\vec{y})_{i*}} + \tau| B^e_1| \\& = (1-n\tau) \frac{(W^e\vec{y})_{i*} \sum_{i \in B^e_1}x_i}{(W^e\vec{y})_{i*}\sum_i x_i}+
\tau| B^e_1|
\\&= (1-n\tau)\sum_{i\in B^e_1}x_{i}+ \tau| B^e_1|
\end{align*}
where inequality (*) is true because if $\frac{a}{b}<1$ then $\frac{a}{b}<\frac{a+c}{b+c}$ for all $a,b,c$ positive. Hence after we add noise $\vec{\delta}$ with $||\vec{\delta}||_{\infty} = \delta$, the resulting vector $(\vec{x}',\vec{y}')$ (which is the next generation frequency vector) will satisfy $\sum_{i\in B^e_1} x'_{i} \leq (1-n\tau)\sum_{i\in B^e_1}x_{i}+ \tau| B^e_1|+\delta| B^e_1|$.
By setting $S_t = \sum_{i\in B^e_1}x_{i}(t)$ it follows that $S_{t+1} \leq (1-n\tau)S_t + (\tau+\delta)| B^e_1|$ and also $S_0 \leq 1$. Therefore
$S_{t} \leq (\tau+\delta) |B^e_1| \frac{1 - (1-n\tau)^{t}}{n\tau}+(1-n\tau)^t$. By choosing $t = -\frac{\ln (2n)}{\ln (1 - n\tau)} \approx
\frac{\ln (2n)}{n \tau}$ it follows that $\sum_{i\in B^e_1}x_{i}(t)\leq \frac{(1+o(1))|B^e_1|+1/2}{n} \leq \frac{2|B^e_1|}{n}$ where we used the assumption that $\delta = o_n(\tau)$. The same argument holds for $B^e_2$.
\end{proof}
%%%%%%%%%%%%%%%%%%%%%%%%%%%%%%%%%%%%%%%%%%%%%%%%%%%%%%%%%%%
%%%%%%%%%%%%%%%%%%%%%%%%%%%%%%%%%%%%%%%%%%%%%%%%%%%%%%%%%%%%
\subsection{Proof of Lemma \ref{lem:phasetransition}}
\begin{proof}
By Lemma \ref{lem:badallele} after $\frac{\ln (2n)}{n\tau}$ generations it follows that
\begin{equation}\label{eqbad} \sum_{i \in B^e_1} x_{i}(t) + \sum_{j \in B^e_2} y_{i}(t) \leq  \frac{2|B^e|}{n}
\end{equation}
We consider the average fitness function $\vec{x}^TW^e\vec{y}$ which is not increasing (as has already been mentioned). Let $\vec{\tau} = \tau \cdot (1,...,1)^T$, $(\tilde{\vec{x}},\tilde{\vec{y}}) = f(\vec{x},\vec{y})$ and $(\hat{\vec{x}},\hat{\vec{y}}) = g(\vec{x},\vec{y})$ with fitness matrix $W^e$ and also denote by $(\vec{x}',\vec{y}')$ the resulting vector after noise $\vec{\delta}$ is added.
It is easy to observe that $$\tilde{\vec{x}}^TW^e\tilde{\vec{y}} = (1-n\tau)^2\hat{\vec{x}}^TW^e\hat{\vec{y}}+(1-n\tau)\hat{\vec{x}}^TW^e \vec{\tau} + (1-n\tau)\vec{\tau}^TW^e\hat{\vec{y}}+ \vec{\tau}^TW^e\vec{\tau}$$ and also that $$\vec{x}'^T W^e\vec{y}' \geq \tilde{x}^TW^e\tilde{y}-2n\delta W^e_{\max} \geq \tilde{x}^TW^e\tilde{y}\left(1-O\left(2n\delta\frac{W_{max}}{W_{min}}\right)\right) = (1-o_{n\tau}(1))\tilde{x}^TW\tilde{y}$$ where $W_{\max} = \max_e W^e_{\max}$. Under the assumption \ref{ass:avfitness} we have the following upper bounds:
\begin{itemize}
\item $\hat{\vec{x}}^TW^e \vec{\tau} \geq (1+\beta)n\tau \left(1 - \frac{2|B^e_1|}{n}\right)$ and $\hat{\vec{\tau}}^TW^e \hat{\vec{y}} \geq (1+\beta)n\tau \left(1 - \frac{2|B^e_2|}{n}\right)$.
\item $\vec{\tau}^TW^e \vec{\tau} \geq (n\tau)^2 (1+\beta)\left(1 - \frac{|B^e|}{n}\right) \geq (1+\beta)\left(1-\frac{2|B^e|}{n}\right)^2n^2\tau^2$.
\end{itemize}
First assume that $\vec{x}^TW^e\vec{y} \leq 1+\frac{\beta}{2}$. We get the following system of inequalities:
\begin{align*}
\frac{\vec{x}'^TW^e\vec{y}'}{\vec{x}^TW^e\vec{y}}&\geq (1-o_{n\tau}(1))\frac{\tilde{\vec{x}}^TW^e\tilde{\vec{y}}}{\vec{x}^TW^e\vec{y}}
\\&\geq (1-o_{n\tau}(1)) \left((1-n\tau)^2\frac{\hat{\vec{x}}^TW^e\hat{\vec{y}}}{\vec{x}^TW^e\vec{y}} + 2(1-n\tau)n\tau\left(1 - \frac{2|B^e|}{n}\right)\frac{(1+\beta)}{\vec{x}^TW^e\vec{y}}+\frac{(1+\beta)}{\vec{x}^TW^e\vec{y}}\left(1-\frac{2|B^e|}{n}\right)^2n^2\tau^2\right)
\\&\geq (1-o_{n\tau}(1))\left ((1-n\tau)^2 + 2(1-n\tau)n\tau\left(1 - \frac{2|B^e|}{n}\right)\left(1+\frac{\beta}{2+\beta}\right)+\left(1+\frac{\beta}{2+\beta}\right)\left(1-\frac{2|B^e|}{n}\right)^2n^2\tau^2\right)
\\&\geq (1-o_{n\tau}(1))\left ( 1 + n\tau \left(\frac{2\beta}{2+\beta} - \frac{6|B^e|}{n} - \frac{2\beta}{2+\beta}n\tau \right)\right)
\\&\geq 1+ n\tau \left(\frac{\beta}{2+\beta}\right)
\end{align*}
The second inequality comes from the fact that $\hat{\vec{x}}^TW^e\hat{\vec{y}} \geq \vec{x}^TW^e\vec{y}$ (the average fitness is increasing for the no mutation setting) and also since $\vec{x}^TW^e\vec{y} \leq 1+\frac{\beta}{2}$. The third and the fourth inequality use the fact that $|B^e| \ll n \beta$ and $\tau \leq \frac{\beta}{16n}$. Therefore, the fitness increases in the next generation for the mutation setting as long as the current fitness $\vec{x}^TW^e\vec{y} \leq 1 + \frac{\beta}{2}$ with a factor of $1+n\tau \frac{\beta}{2+\beta}$ (i). Hence the time we need to reach the value of $1$ for the average fitness is  $\frac{ 2\ln \frac{1}{h}}{ n\tau\frac{\beta}{2+\beta}}$ which is dominated by $t_1 = \frac{\ln (2n)}{n \tau}$. Therefore the total loss factor is at most $\frac{1}{d} = h^{t_1}$, namely $d = \left(\frac{1}{h}\right)^{t_1}$. Let $t_2$ be the time for the average fitness to reach $1+\frac{\beta}{4}$ (as long as it has already reached 1), thus $t_2 = \frac{2}{n\tau}$ which is dominated by $t_1$.
By similar argument, let's now assume that $\vec{x}^TW^e\vec{y} \geq 1 + \frac{\beta}{2}$ then
\begin{align*}
\frac{\vec{x}'^TW^e\vec{y}'}{\vec{x}^TW^e\vec{y}} &\geq (1-o_{n\tau}(1))\left(\frac{\tilde{\vec{x}}^TW^e\tilde{\vec{y}}}{\vec{x}^TW^e\vec{y}}\right)
\\&\geq (1-o_{n\tau}(1))\left((1-n\tau)^2\frac{\hat{\vec{x}}^TW^e\hat{\vec{y}}}{\vec{x}^TW^e\vec{y}} + 2(1-n\tau)n\tau\left(1 - \frac{2|B^e|}{n}\right)\frac{(1+\beta)}{\vec{x}^TW^e\vec{y}}+\frac{(1+\beta)}{\vec{x}^TW^e\vec{y}}\left(1-\frac{2|B^e|}{n}\right)^2n^2\tau^2\right)
%\\&\geq (1-n\tau)^2 + 2(1-n\tau)n\tau\left(1 - \frac{2|B^e|}{n}\right)\left(\frac{1+\beta}{\max_{ij} W^e_{ij}}\right)+\left(\frac{1+\beta}%{\max_{ij} W^e_{ij}}\right)\left(1-\frac{2|B^e|}{n}\right)^2n^2\tau^2
\\&\geq 1 - 2n\tau
\end{align*}
Hence $\vec{x}'^TW^e\vec{y}' \geq (1 - 2n\tau)(1+ \frac{\beta}{2})$, namely $\vec{x}'^TW^e\vec{y}' \geq 1+ \frac{\beta}{4}$ (ii) for $\tau < \frac{\beta}{16n}$. Therefore as long as the fitness surpasses $1+\frac{\beta}{4}$, it never goes below $1+\frac{\beta}{4}$ (conditioned on the fact you remain at the same environment). This is true from claims (i), (ii). When the fitness is at most $1+\frac{\beta}{2}$, it increases in the next generation and when it is greater than $1+\frac{\beta}{2}$, it remains at least $1+\frac{\beta}{4}$ in the next generation.
To finish the proof we compute the times. The time $t_3$ to have a total gain factor of at least $d$, will be such that $(1+\frac{\beta}{4})^{t_3} = \frac{1}{h^{t_1}}$. Hence $t_3 = t_1 \frac{2\ln \frac{1}{h}}{\beta}$. By setting $T_{thr} = \frac{6 \ln (2n)}{ n\tau\beta W_{\min} } >  \frac{6 \ln \frac{1}{h} \ln (2n)}{n\tau\beta}> 3t_3 > t_1+t_2+t_3$ the proof finishes.
\end{proof}
%%%%%%%%%%%%%%%%%%%%%%%%%%%%%%%%%%%%%%%%%%%%%%%%%%%%%%%%%%%
%%%%%%%%%%%%%%%%%%%%%%%%%%%%%%%%%%%%%%%%%%%%%%%%%%%%%%%%%%%
\subsection{Proof of Theorem \ref{thm:potential}}
\begin{proof} We first prove the results for rational $\tau$; let $\tau = \nfrac{\kappa}{\lambda}$. We use the theorem of Baum and Eagon~\cite{BE66}. Let
\[L(\vec{x},\vec{y}) = (\vec{x}^T W\vec{y})^{\lambda-m\kappa} \prod_i x_i^{\kappa}\prod_i y_i^{\kappa}.\]
Then
\[x_{i}\frac{\partial L}{\partial x_{i}} = 2\kappa L + \frac{2x_i(W\vec{y})_i(\lambda -m\kappa)L}{\vec{x}^T W\vec{y}}.\]
It follows that
\begin{align*}
\frac{x_{i}\frac{\partial L}{\partial x_{i}}}{\sum_{i} x_{i}\frac{\partial L}{\partial x_{i}}}
&= \frac{2\kappa L + \frac{2x_i (W\vec{y})_i(\lambda-m\kappa)L}{\vec{x}^T W\vec{y}}}{2m\kappa L+2(\lambda - m\kappa)L}
\\
& = \frac{2\kappa L}{2 \lambda L}+ \frac{2L(\lambda-m\kappa)x_i (W\vec{y})_i}{2 \lambda L \vec{x}^T W\vec{y}}
\\&= (1-n\tau)x_{i} \frac{(W\vec{y})_{i}}{\vec{x}^T W\vec{y}}+\tau
\end{align*} where the first equality comes from the fact that $\sum _{i=1}^n x_i (W\vec{y})_i = \vec{x}^T W\vec{y}$. The same is true for $y_i \frac{\partial L}{\partial y_i}$. Since $L$ is a homogeneous polynomial of degree $2\lambda$, from Theorem~\ref{thm:inequality} we get that $L$ is strictly increasing along the trajectories, namely $$L(f(\vec{x},\vec{y})) > L(\vec{x},\vec{y})$$ unless $(\vec{x},\vec{y})$ is a fixed point ($f$ is the update rule of the dynamics, see also \ref{eq.fs}). So $P(\vec{x},\vec{y}) = L^{\nfrac{1}{\kappa}}(\vec{x},\vec{y})$ is a potential function for the dynamics.

To prove the result for irrational $\tau$, we just have to see that the proof of~\cite{BE66} holds for all homogeneous polynomials with degree $d$, even irrational.\\\\
To finish the proof let $\Omega \subset \Delta$ be the set of limit points of an orbit $\vec{z}(t) = (\vec{x}(t),\vec{y}(t))$ (frequencies at time $t$ for $t \in \mathbb{N}$). $P(\vec{z}(t))$ is increasing with respect to time $t$ by above and so, because $P$ is bounded on $\Delta$, $P(\vec{z}(t))$ converges as $t\to \infty$  to $P^{*}= \sup_t\{P(\vec{z}(t))\}$. By continuity of $P$ we get that $P(\vec{v})= \lim_{t\to\infty} P(\vec{z}(t)) = P^*$ for all $\vec{v} \in \Omega$. So $P$ is constant on $\Omega$. Also $\vec{v}(t) = \lim_{k \to \infty} \vec{z}(t_k + t)$ as $k \to \infty $ for some sequence of times $\{t_i\}$ and so $\vec{v}(t)$ lies in $\Omega$, i.e. $\Omega$ is invariant. Thus, if $\vec{v} \equiv \vec{v}(0) \in \Omega$ the orbit $\vec{v}(t)$ lies in $\Omega$ and so $P(\vec{v}(t)) = P^*$ on the orbit. But $P$ is strictly increasing except on equilibrium orbits and so $\Omega$ consists entirely of fixed points.
\end{proof}
%%%%%%%%%%%%%%%%%%%%%%%%%%%%%%%%%%%%%%%%%%%%%%%%%%%%%%%%%%%%%
%%%%%%%%%%%%%%%%%%%%%%%%%%%%%%%%%%%%%%%%%%%%%%%%%%%%%%%%%%%%%

\subsection{Calculations for mutation}\label{app:calc}
Let $(\hat{\vec{x}},\hat{\vec{y}}) = g(\vec{x},\vec{y})$. If in every generation allele $i \in S_1$ mutates to allele $k\in S_1$ with probability $\mu_{ik}$, where $\sum_k \mu_{ik}=1,\ \forall i$, then
the final proportion (after reproduction, mutation) of allele $i \in S_1$ in the population will be

\[
x'_i = \sum_{k\in S_1} \mu_{ki}\hat{x}_k
\]
Similarly, if $j\in S_2$ mutates to $k \in S_2$ with probability $\delta_{jk}$, then proportion of allele $j \in S_2$ will be

\[
y'_j = \sum_{k\in S_2} \delta_{ki}\hat{y}_k
\]

If mutation happens after every selection (mating), then we get the following dynamics with update rule $f':\Delta\rightarrow \Delta$ governing the evolution (update rule contains selection+mutation).

\begin{equation}\label{eq.f}
\mbox{Let $(\vec{x}',\vec{y}')=f'(\vec{x},\vec{y})$, then }\begin{array}{lcl}
x'_i & = & \sum_{k \in S_1} \mu_{ki} x_k \frac{(W\vec{y})_k}{\vec{x}^TW\vec{y}},\ \forall i\in S_1 \\
y'_j & = & \sum_{k \in S_2} \delta_{kj} y_k \frac{(\vec{x}^TW)_k}{\vec{x}^TW\vec{y}},\ \forall j\le S_2 \\
\end{array}
\end{equation}
Suppose $\forall k,\ \forall i\neq k$ and $\forall j \neq k$, we have $\mu_{ik} = \delta_{jk} = \tau$, where $\tau \leq \frac{1}{n}$.
Since $\sum_{k} \mu_{ik}=\sum_{k}\delta_{jk}=1$, we have $\mu_{ii}=\delta_{jj}=1-(n-1)\tau=1+\tau-n\tau$.
Hence
\begin{align*}
x'_i  &=  \sum_{k \in S_1} \mu_{ki} x_k \frac{(W\vec{y})_k}{\vec{x}^TW\vec{y}}
\\&= (1+\tau-n\tau)x_i\frac{(W\vec{y})_i}{\vec{x}^TW\vec{y}} + \tau \sum_{k\neq i}x_k\frac{(W\vec{y})_k}{\vec{x}^TW\vec{y}}
\\&=(1-n\tau) x_i\frac{(W\vec{y})_i}{\vec{x}^TW\vec{y}} +\tau \sum_{k}x_k\frac{(W\vec{y})_k}{\vec{x}^TW\vec{y}}
\\& = (1-n\tau) x_i\frac{(W\vec{y})_i}{\vec{x}^TW\vec{y}} +\tau.
\end{align*}
The same is true for vector $\vec{y}'$. The dynamics of (\ref{eq.f})  where $\mu_{ik} = \delta_{ik} = \tau$ for all $k\neq i$ simplifies to the equations \ref{eq.fs} as appear in the preliminaries.

\newpage
\section{Figures}
To draw the phase portrait of a discrete time system $f:\Delta \to \Delta$, we draw vector $f(\vec{x})-\vec{x}$ at point $\vec{x}$. 

\begin{figure}[H]
   \centering
\includegraphics[width=75mm]{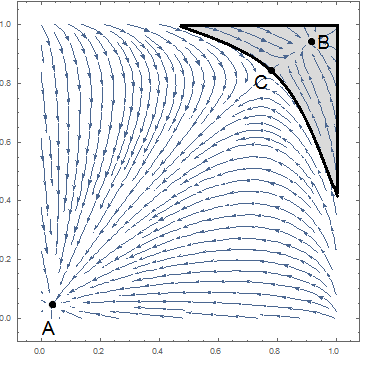}
\vspace{-0.6cm}
\caption{Example where population goes extinct in environment $e$ for some initial frequency vectors $(\vec{x},\vec{y})$ that are close to stable point $B$ (inside the shaded area). Mutation probability is $\tau = 0.03$ and the fitness matrix of environment $e$ is $W^e_{1,1} = 0.99, W^e_{2,2} = 2.09, W^e_{1,2}=0.37, W^e_{2,1}=0.56$. }
\label{fig2}
\end{figure}

\begin{figure}[H]
   \centering
\includegraphics[width=75mm]{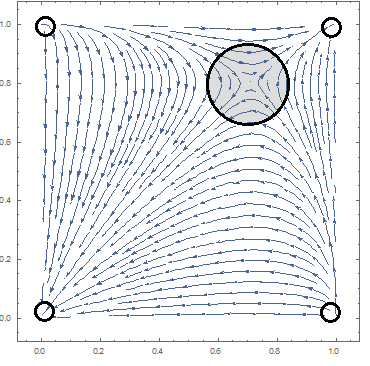}
\vspace{-0.5cm}
\caption{Example of dynamics without mutation in specific environment $W^e_{1,1} = 0.99, W^e_{2,2} = 2.09, W^e_{1,2}=0.37, W^e_{2,1}=0.56$. The circles qualitatively show all the points that slow down the increase in the average fitness $\vec{x}^TW^e\vec{y}$, i.e $\alpha$-close points or negligible.}
\label{fig3}
\end{figure}

\end{document}